\newif\ifanonymous
\definecolor{linkdarkblue}{rgb}{0, 0.08, 0.45}    % used for hyperlinks
\def\eqref#1{equation~\ref{#1}}
\def\1{\bm{1}}
\DeclareMathAlphabet{\mathsfit}{\encodingdefault}{\sfdefault}{m}{sl}
\SetMathAlphabet{\mathsfit}{bold}{\encodingdefault}{\sfdefault}{bx}{n}
\DeclareMathOperator*{\argmin}{arg\,min}
\newcommand{\noindentparnoline}[1]{\noindent\textbf{#1} }
\newcommand{\noindentpar}[1]{\noindent\textbf{\uline{#1}} }
\numberwithin{equation}{section}    % number equations within sections
\numberwithin{algorithm}{section}   % number algorithms within sections
\theoremstyle{plain}
\newtheorem{theorem}{Theorem}[section]
\newtheorem*{theorem*}{Theorem}
\newtheorem{lemma}[theorem]{Lemma}
\newtheorem*{lemma*}{Lemma}
\newtheorem*{axiom*}{Axiom}
\newtheorem*{conjecture*}{Conjecture}
\newtheorem*{assumption*}{Assumption}
\newtheorem*{proposition*}{Proposition}
\newtheorem*{corollary*}{Corollary}
\theoremstyle{definition}
\newtheorem*{definition*}{Definition}
\newtheorem*{example*}{Example}
\newtheorem*{problem*}{Problem}
\newtheorem*{question*}{Question}
\newtheorem*{exercise*}{Exercise}
\newtheorem*{remark*}{Remark}
\definecolor{mydarkgreen}{rgb}{0.0, 0.5, 0.0}
\definecolor{skyblue}{rgb}{0.53, 0.81, 0.98}
\newcolumntype{Y}{>{\centering\arraybackslash}X}
\newcommand{\ie}{\text{i.e., }}         % that is
\newcommand{\eg}{\text{e.g., }}         % for example
\newcommand{\wrt}{\text{w.r.t. }}       % with respect to
\newcommand{\cf}{\text{cf. }}           % compare with or consult
\newcommand{\aka}{\text{a.k.a. }}       % also known as
\newcommand{\iid}{\text{i.i.d. }}       % independent and identically distributed
\newcommand{\cdf}{\text{c.d.f. }}       % cumulative distribution function
\newcommand{\versus}{\text{vs. }}       % versus
\newcommand{\diff}{\mathrm{d}}          % differential
\begin{document}

%%%%%%%%%%%%%%%%%%%%%%%%%%%%%%%%%%%%%%%%%%%%%%%%%%%%%%%%%%%%%%%%%%%%%
% TITLE -------------------------------------------------------------
% The "title" command has an optional parameter,
% allowing the author to define a "short title" to be used in page headers.
\title{
    \texorpdfstring{Breaking the Top-$K$ Barrier: Advancing Top-$K$ Ranking Metrics Optimization in Recommender Systems}{Breaking the Top-K Barrier: Advancing Top-K Ranking Metrics Optimization in Recommender Systems}
}

% The "author" command and its associated commands are used to define
% the authors and their affiliations.
% Of note is the shared affiliation of the first two authors, and the
% "authornote" and "authornotemark" commands
% used to denote shared contribution to the research.

\author{Weiqin Yang}
    \orcid{0000-0002-5750-5515}
    \affiliation{
        \institution{Zhejiang University}
        \city{Hangzhou}
        \country{China}
    }
    \email{tinysnow@zju.edu.cn}
\authornotemark[2]
\authornotemark[3]

\author{Jiawei Chen}
    \orcid{0000-0002-4752-2629}
    \affiliation{
        \institution{Zhejiang University}
        \city{Hangzhou}
        \country{China}
    }
    \email{sleepyhunt@zju.edu.cn}
\authornote{Corresponding author.}
\authornote{State Key Laboratory of Blockchain and Data Security, Zhejiang University.}
\authornote{College of Computer Science and Technology, Zhejiang University.}
\authornote{Hangzhou High-Tech Zone (Binjiang) Institute of Blockchain and Data Security.}

\author{Shengjia Zhang}
    \orcid{0009-0004-0209-2276}
    \affiliation{
        \institution{Zhejiang University}
        \city{Hangzhou}
        \country{China}
    }
    \email{shengjia.zhang@zju.edu.cn}
\authornotemark[2]
\authornotemark[3]

\author{Peng Wu}
    \orcid{0000-0001-7154-8880}
    \affiliation{
        \institution{Beijing Technology and Business University}
        \city{Beijing}
        \country{China}
    }
    \email{pengwu@btbu.edu.cn}
\authornote{School of Mathematics and Statistics, Beijing Technology and Business University.}

\author{Yuegang Sun}
    \orcid{0009-0009-2701-4641}
    \affiliation{
        \institution{Intelligence Indeed}
        \city{Hangzhou}
        \country{China}
    }
    \email{bulutuo@i-i.ai}

\author{Yan Feng}
    \orcid{0000-0002-3605-5404}
    \affiliation{
        \institution{Zhejiang University}
        \city{Hangzhou}
        \country{China}
    }
    \email{fengyan@zju.edu.cn}
\authornotemark[2]
\authornotemark[3]

\author{Chun Chen}
    \orcid{0000-0002-6198-7481}
    \affiliation{
        \institution{Zhejiang University}
        \city{Hangzhou}
        \country{China}
    }
    \email{chenc@zju.edu.cn}
\authornotemark[2]
\authornotemark[3]

\author{Can Wang}
    \orcid{0000-0002-5890-4307}
    \affiliation{
        \institution{Zhejiang University}
        \city{Hangzhou}
        \country{China}
    }
    \email{wcan@zju.edu.cn}
\authornotemark[2]
\authornotemark[4]

% By default, the full list of authors will be used in the page
% headers. Often, this list is too long, and will overlap
% other information printed in the page headers. This command allows
% the author to define a more concise list
% of authors' names for this purpose.
\ifanonymous
    % do nothing
\else
    \renewcommand{\shortauthors}{Weiqin Yang et al.}
\fi

% END: TITLE -------------------------------------------------------
%%%%%%%%%%%%%%%%%%%%%%%%%%%%%%%%%%%%%%%%%%%%%%%%%%%%%%%%%%%%%%%%%%%%%

%%%%%%%%%%%%%%%%%%%%%%%%%%%%%%%%%%%%%%%%%%%%%%%%%%%%%%%%%%%%%%%%%%%%%
% ABSTRACT ----------------------------------------------------------

\begin{abstract}
    In the realm of recommender systems (RS), Top-$K$ ranking metrics such as NDCG@$K$ are the gold standard for evaluating recommendation performance. However, during the training of recommendation models, optimizing NDCG@$K$ poses significant challenges due to its inherent discontinuous nature and the intricate Top-$K$ truncation. Recent efforts to optimize NDCG@$K$ have either overlooked the Top-$K$ truncation or suffered from high computational costs and training instability. To overcome these limitations, we propose \textbf{SoftmaxLoss@$K$ (SL@$K$)}, a novel recommendation loss tailored for NDCG@$K$ optimization. Specifically, we integrate the quantile technique to handle Top-$K$ truncation and derive a smooth upper bound for optimizing NDCG@$K$ to address discontinuity. The resulting SL@$K$ loss has several desirable properties, including theoretical guarantees, ease of implementation, computational efficiency, gradient stability, and noise robustness. Extensive experiments on four real-world datasets and three recommendation backbones demonstrate that SL@$K$ outperforms existing losses with a notable average improvement of \textbf{6.03\%}. The code is available at \url{https://github.com/Tiny-Snow/IR-Benchmark}.
\end{abstract}

% The code below is generated by the tool at http://dl.acm.org/ccs.cfm.
\begin{CCSXML}
<ccs2012>
    <concept>
        <concept_id>10002951.10003317.10003347.10003350</concept_id>
        <concept_desc>Information systems~Recommender systems</concept_desc>
        <concept_significance>500</concept_significance>
    </concept>
    % <concept>
    %     <concept_id>10002951.10003317.10003338.10003343</concept_id>
    %     <concept_desc>Information systems~Learning to rank</concept_desc>
    %     <concept_significance>500</concept_significance>
    % </concept>
</ccs2012>
\end{CCSXML}

\ccsdesc[500]{Information systems~Recommender systems}
% \ccsdesc[500]{Information systems~Learning to rank}

% Keywords. The author(s) should pick words that accurately describe
% the work being presented. Separate the keywords with commas.
\keywords{Recommender systems; Surrogate loss; \texorpdfstring{NDCG@$K$}{NDCG@K} optimization}
% \keywords{Recommender systems, Surrogate loss, \texorpdfstring{Top-$K$ recommendation}{Top-K recommendation}, \texorpdfstring{NDCG@$K$}{NDCG@K} optimization}

% A "teaser" image appears between the author and affiliation
% information and the body of the document, and typically spans the
% page.
% \begin{teaserfigure}
%   \includegraphics[width=\textwidth]{sampleteaser}
%   \caption{Seattle Mariners at Spring Training, 2010.}
%   \Description{Enjoying the baseball game from the third-base
%   seats. Ichiro Suzuki preparing to bat.}
%   \label{fig:teaser}
% \end{teaserfigure}

% \received{Day Month Year}
% \received[revised]{Day Month Year}
% \received[accepted]{Day Month Year}

% This command processes the author and affiliation and title
% information and builds the first part of the formatted document.
\maketitle

% Artifacts
% \newcommand\kddavailabilityurl{https://doi.org/10.5281/zenodo.15535932}
\newcommand\kddavailabilityurl{https://github.com/Tiny-Snow/IR-Benchmark}

\ifdefempty{\kddavailabilityurl}{}{
\begingroup\small\noindent\raggedright\textbf{Artifacts Link:}\\
% please change the following context to include multiple artifacts if necessary.
The source code of this paper has been made publicly available at \url{\kddavailabilityurl}.
\endgroup
}

% END: ABSTRACT -----------------------------------------------------
%%%%%%%%%%%%%%%%%%%%%%%%%%%%%%%%%%%%%%%%%%%%%%%%%%%%%%%%%%%%%%%%%%%%%

%%%%%%%%%%%%%%%%%%%%%%%%%%%%%%%%%%%%%%%%%%%%%%%%%%%%%%%%%%%%%%%%%%%%%
% INTRODUCTION ------------------------------------------------------

\section{Introduction} \label{sec:introduction}

Recommender systems (RS) \citep{ko2022survey,wang2024distributionally,wang2024llm4dsr,chen2021autodebias,gao2023cirs,gao2023alleviating} have been widely applied in various personalized services \citep{nie2019multimodal,ren2017social}. The primary goal of RS is to model users' preferences on items and subsequently retrieve a select number of items that users are most likely to interact with \citep{liu2009learning,li2020sampling,hurley2011novelty}. In practice, RS typically display only $K$ top-ranked items to users based on their preference scores. Therefore, \emph{Top-$K$ ranking metrics}, \eg NDCG@$K$ \citep{he2017neural}, are commonly used to evaluate recommendation performance. Unlike \emph{full-ranking metrics}, \eg NDCG \citep{jarvelin2017ir}, which assess the entire ranking list, Top-$K$ ranking metrics focus on the quality of the items ranked within the Top-$K$ positions, making them more aligned with practical requirements.

% FIGURE: INTRODUCTION FIG ------------------------------------------
\begin{figure*}[t]
    \centering
    \begin{subfigure}[b]{0.51\textwidth}
        \centering
        \includegraphics[width=\textwidth]{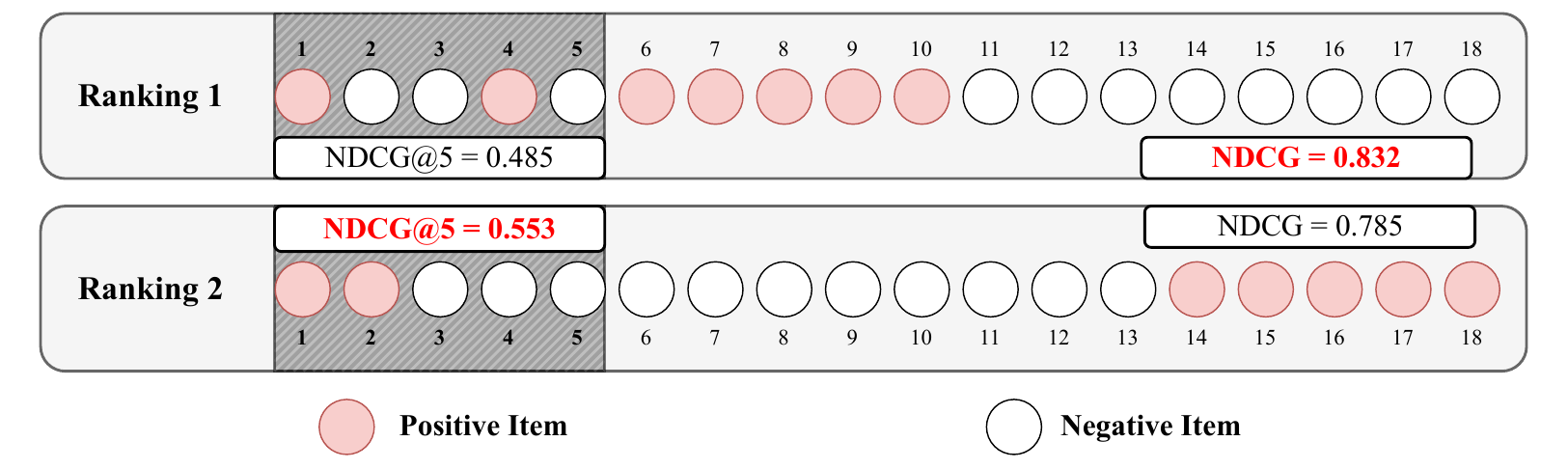}
        \caption{Inconsistency between NDCG and NDCG@$K$.}
        \label{fig:ndcg_ndcgatk}
    \end{subfigure}
    \begin{subfigure}[b]{0.24\textwidth}
        \centering
        \includegraphics[width=\textwidth]{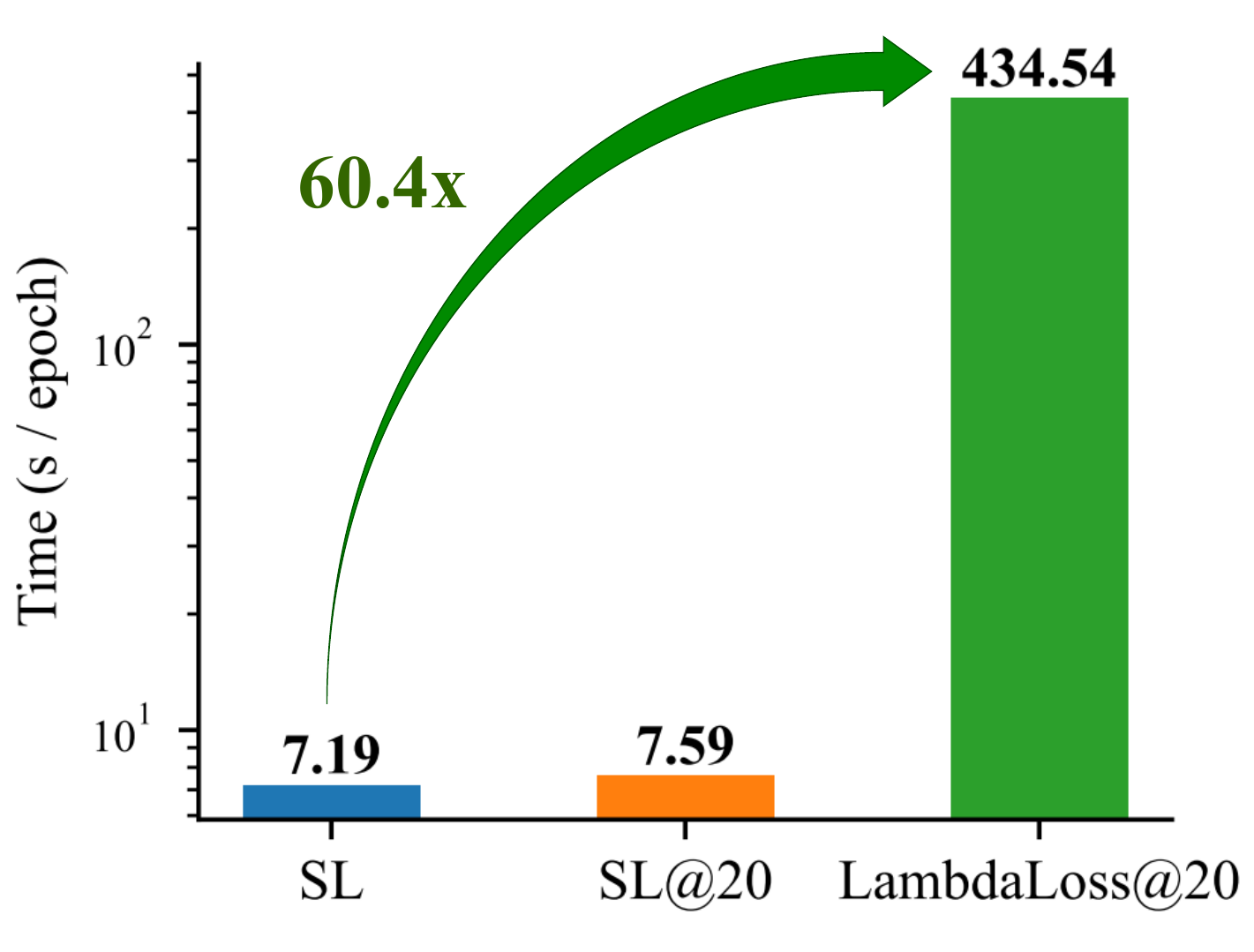}
        \caption{Execution time.}
        \label{fig:execution_time_comparison}
    \end{subfigure}
    \begin{subfigure}[b]{0.24\textwidth}
        \centering
        \includegraphics[width=\textwidth]{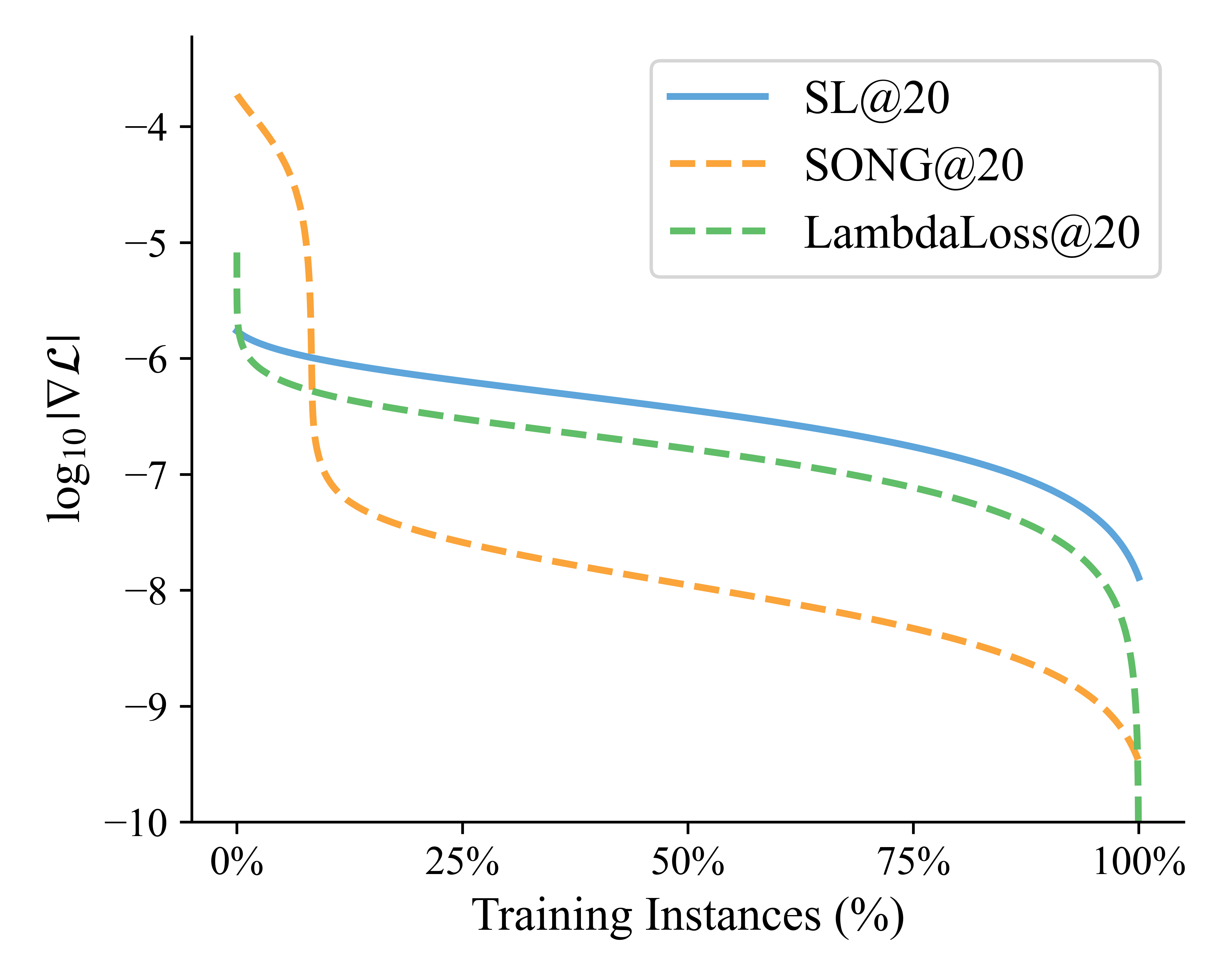}
        \caption{Gradient distribution.}
        \label{fig:gradient_distribution_comparison}
    \end{subfigure}
    \caption{(a) Inconsistency between NDCG and NDCG@$K$. Ranking 1 and Ranking 2 represent two different ranking lists of the same set of items, where red/white circles denote positive/negative items, respectively. While Ranking 1 has a better NDCG than Ranking 2, it has worse NDCG@5. (b) Execution time comparison. LambdaLoss@$K$ incurs a significantly higher (60 times) computational overhead compared to SL and SL@$K$ on the Electronic dataset (8K items). (c) Gradient distribution comparison. LambdaLoss@$K$ and SONG@$K$ exhibit skewed long-tailed gradient distributions, where top-5\% samples contribute over 90\% of the overall gradients. In contrast, SL@$K$ achieves a more moderate gradient distribution, where top-5\% samples contribute less than 15\% of the overall gradients. This leads to better data utilization and training stability.}
    \Description{Illustration of inconsistency between NDCG and NDCG@$K$, execution time comparison, and gradient distribution comparison.}
    \label{fig:introduction-fig}
\end{figure*}
% END FIGURE: INTRODUCTION FIG --------------------------------------

\noindentpar{Challenges.}
Despite the widespread adoption of the NDCG@$K$ metric, its optimization presents two fundamental challenges:

\begin{itemize}[topsep=3pt,leftmargin=10pt,itemsep=0pt]
    \item \emph{Top-$K$ truncation}: NDCG@$K$ involves truncating the ranking list, requiring the identification of which items occupy the Top-$K$ positions. This necessitates sorting the entire item list, imposing significant computational costs and optimization complexities.
    \item \emph{Discontinuity}: NDCG@$K$ is inherently discontinuous or flat everywhere in the space of model parameters, which severely impedes the effectiveness of gradient-based optimization methods.
\end{itemize}

\noindentpar{Existing works.}
Recent studies have introduced two types of NDCG@$K$ surrogate losses to tackle these challenges. However, these approaches still exhibit significant limitations:

\begin{itemize}[topsep=3pt,leftmargin=10pt,itemsep=0pt]
    \item A prominent line of work focuses on \emph{optimizing full-ranking metrics} like NDCG, without accounting for the complex Top-$K$ truncation. Notable among these is Softmax Loss (SL) \citep{wu2024effectiveness}, which serves as an upper bound for optimizing NDCG and demonstrates state-of-the-art performance \citep{wu2023bsl,yang2024psl,bruch2019analysis,wang2025msl}. Moreover, SL enjoys practical advantages in terms of formulation simplicity and computational efficiency. However, we argue that NDCG is \emph{inconsistent} with NDCG@$K$ --- NDCG@$K$ focuses exclusively on a few top-ranked items, while NDCG evaluates the entire ranking list. This discrepancy means that optimizing NDCG does not always yield improvements in NDCG@$K$ and may even lead to performance degradation, as illustrated in \cref{fig:ndcg_ndcgatk}. Therefore, without incorporating Top-$K$ truncation, these NDCG surrogate losses could inherently encounter performance bottlenecks. 
    \item Few studies have explored \emph{incorporating Top-$K$ truncation} into NDCG@$K$ optimization. For example, LambdaLoss@$K$ \citep{jagerman2022optimizing} incorporates truncation-aware lambda weights \citep{burges2006learning,wang2018lambdaloss} based on ranking positions to optimize NDCG@$K$, exhibiting superior performance compared to full-ranking surrogate losses like SL \citep{wu2024effectiveness} and LambdaLoss \citep{wang2018lambdaloss} in learning to rank tasks \citep{liu2009learning}. Another notable work is SONG@$K$ \citep{qiu2022large}, which employs a ingenious bilevel compositional optimization strategy \citep{wang2017stochastic} to optimize NDCG@$K$ with provable guarantees. While these methods have proven effective in other tasks, we find them \emph{ineffective} for recommendation due to the large-scale and sparse nature of RS data. Specifically, LambdaLoss@$K$ requires sorting the entire item list to calculate lambda weights, which is computationally impractical in real-world RS (\cf \cref{fig:execution_time_comparison}). Additionally, both LambdaLoss@$K$ and SONG@$K$ exhibit a highly skewed gradient distribution in RS --- a few instances dominate the gradients, while the majority contribute negligibly (\cf \cref{fig:gradient_distribution_comparison}). This severely hinders effective data utilization and model training.
\end{itemize}

\noindentpar{Our method.}
Given the critical importance of optimizing NDCG@$K$ and the inherent limitations of existing losses in RS, it is essential to devise a more effective NDCG@$K$ surrogate loss. In this paper, we propose \textbf{SoftmaxLoss@$K$ (SL@$K$)}, incorporating the following two key strategies to address the aforementioned challenges:

\begin{itemize}[topsep=3pt,leftmargin=10pt,itemsep=0pt]
    \item To address the \emph{Top-$K$ truncation} challenge, we employ the \emph{quantile technique} \citep{koenker2005quantile,boyd2012accuracy}. Specifically, we introduce a Top-$K$ quantile for each user as a threshold score that separates the Top-$K$ items from the remainder. This technique transforms the complex Top-$K$ truncation into a simpler comparison between item scores and quantiles, which circumvents the need for explicit calculations of ranking positions. We further develop a Monte Carlo-based quantile estimation strategy that achieves both computational efficiency and theoretical precision guarantees.
    \item To overcome the \emph{discontinuity} challenge, we derive an upper bound for optimizing NDCG@$K$ and relax it into a smooth surrogate loss --- SL@$K$. Our analysis proves that SL@$K$ serves as a tight upper bound for $-\log$ NDCG@$K$, ensuring its theoretical effectiveness in Top-$K$ recommendation.
\end{itemize}

Beyond its theoretical foundations, SL@$K$ offers several practical advantages: (i) \emph{Ease of implementation}: Compared to SL, SL@$K$ only adds a quantile-based weight for each positive instance, making it easy to implement and integrate into existing RS. (ii) \emph{Computational efficiency}: The adoption of quantile estimation and relaxation techniques incurs minimal additional computational overhead over SL (\cf \cref{fig:execution_time_comparison}). (iii) \emph{Gradient stability}: SL@$K$ exhibits more moderate gradient distribution characteristics during training (\cf \cref{fig:gradient_distribution_comparison}), promoting effective data utilization and improving model training stability. (iv) \emph{Noise robustness}: SL@$K$ demonstrates enhanced robustness against false positive noise \citep{chen2023bias,wen2019leveraging}, \ie interactions arising from extraneous factors rather than user preferences.

Finally, to empirically validate the effectiveness of SL@$K$, we conduct extensive experiments on four real-world recommendation datasets and three typical recommendation backbones. Experimental results demonstrate that SL@$K$ achieves impressive performance improvements of \textbf{6.03\%} on average. Additional experiments, including an exploration of varying hyperparameter $K$ and robustness evaluations, confirm that SL@$K$ is not only well-aligned with NDCG@$K$, but also exhibits superior resistance to noise. Moreover, since SL@$K$ is essentially a general ranking loss, it can be seamlessly applied to other information retrieval (IR) tasks. We extend our work to three different IR tasks, including learning to rank (LTR) \citep{pobrotyn2021neuralndcg}, sequential recommendation (SeqRec) \citep{kang2018self}, and link prediction (LP) \citep{li2023evaluating}. Empirical results validate the versatility and effectiveness of SL@$K$ across diverse IR tasks.

\noindentpar{Contributions.}
In summary, our contributions are as follows:  
\begin{itemize}[topsep=3pt,leftmargin=10pt,itemsep=0pt]  
    \item We highlight the significance of optimizing the Top-$K$ ranking metric NDCG@$K$ in recommendation and reveal the limitations of existing losses.
    \item We propose a novel loss function, SL@$K$, tailored for Top-$K$ recommendation by integrating the quantile technique and analyzing the upper bound of NDCG@$K$.  
    \item We conduct extensive experiments on various real-world datasets and backbones, demonstrating the superiority of SL@$K$ over existing losses, achieving an average improvement of 6.03\%.
    \item We extend SL@$K$ to three different IR tasks, validating its versatility and effectiveness beyond conventional recommendation.
\end{itemize}

% END: INTRODUCTION -------------------------------------------------
%%%%%%%%%%%%%%%%%%%%%%%%%%%%%%%%%%%%%%%%%%%%%%%%%%%%%%%%%%%%%%%%%%%%%

%%%%%%%%%%%%%%%%%%%%%%%%%%%%%%%%%%%%%%%%%%%%%%%%%%%%%%%%%%%%%%%%%%%%%
% PRELIMINARIES -----------------------------------------------------
\section{Preliminaries} \label{sec:preliminaries}

In this section, we first present the task formulation (\cref{subsec:task_formulation}), then highlight the challenges in optimizing NDCG@$K$ (\cref{subsec:formulation_of_ndcgatk}), and finally introduce Softmax Loss (SL) \citep{wu2024effectiveness} while discussing its limitations in optimizing NDCG@$K$ (\cref{subsec:softmax_loss}).

% PRELIMINARIES: TASK FORMULATION -----------------------------------
\subsection{\texorpdfstring{Top-$K$ Recommendation}{Top-K Recommendation}} \label{subsec:task_formulation}

In this work, we focus on the Top-$K$ recommendation from implicit feedback, a widely-used scenario in recommender systems (RS) \citep{su2009survey,zhu2019improving}. Specifically, given an RS with a user set $\mathcal{U}$ and an item set $\mathcal{I}$, let $\mathcal{D} = \{y_{ui} : u \in \mathcal{U}, i \in \mathcal{I}\}$ denote the historical interactions between users and items, where $y_{ui} = 1$ indicates that user $u$ has interacted with item $i$, and $y_{ui} = 0$ indicates no interaction. For each user $u$, we denote $\mathcal{P}_u = \{ i \in \mathcal{I} : y_{ui} = 1\}$ as the set of positive items, and $\mathcal{N}_u = \mathcal{I} \setminus \mathcal{P}_u$ as the set of negative items. The recommendation task can be formulated as follows: learning user preferences from dataset $\mathcal{D}$ and recommending the Top-$K$ items that users are most likely to interact with.

Formally, modern RS typically infer user preferences for items with a learnable model $s_{ui} = f_{\Theta}(u, i)$, where $f_{\Theta}(u, i): \mathcal{U} \times \mathcal{I} \to \mathbb{R}$ can be any flexible recommendation backbone with parameters $\Theta$, mapping user/item features (\eg IDs) into their preference scores $s_{ui}$. Subsequently, the Top-$K$ items with the highest scores $s_{ui}$ are retrieved as recommendations. In this work, we focus not on model architecture design but instead on exploring the recommendation loss. Given that the loss function guides the optimization direction of models, its importance cannot be overemphasized \citep{rendle2009bpr}.

% END: PRELIMINARIES: TASK FORMULATION ------------------------------

% PRELIMINARIES: FORMULATION OF NDCG@K MERTIC -----------------------
\subsection{\texorpdfstring{NDCG@$K$}{NDCG@K} Metric} \label{subsec:formulation_of_ndcgatk}

\noindentpar{Formulation of NDCG@$K$.}
Given the Top-$K$ nature of RS, Top-$K$ ranking metrics have been widely used to evaluate the recommendation performance. This work focuses on the most representative Top-$K$ ranking metric, \ie NDCG@$K$ (Normalized Discounted Cumulative Gain with Top-$K$ truncation) \citep{he2017neural,jarvelin2017ir}. Formally, for each user $u$, NDCG@$K$ can be formulated as follows:
\begin{equation} \label{eq:dcg_idcg}
    \mathrm{NDCG}@K(u) = \frac{\mathrm{DCG}@K(u)}{\mathrm{IDCG}@K(u)}, 
    \ \ 
    \mathrm{DCG}@K(u) = \sum_{i \in \mathcal{P}_u} \frac{\mathbb{I}(\pi_{ui} \leq K)}{\log_2(\pi_{ui} + 1)} ,
\end{equation}
where $\mathbb{I}(\cdot)$ is the indicator function, $\pi_{ui} = \sum_{j \in \mathcal{I}} \mathbb{I}(s_{uj} \geq s_{ui})$ is the ranking position of item $i$ for user $u$, and IDCG@$K$ is a normalizing constant representing the optimal DCG@$K$  with an ideal ranking. 

As observed, NDCG@$K$ not only evaluates the number of positive items within the Top-$K$ recommendations (similar to other Top-$K$ metrics, \eg Recall@$K$ and Precision@$K$), but also accounts for their ranking positions, \ie higher-ranked items contribute more to NDCG@$K$. This makes NDCG@$K$ a more practical metric for recommendation. Therefore, this work focuses on NDCG@K, while we also observe that effectively optimizing NDCG@$K$ can bring improvements on other Top-$K$ metrics like Recall@K (\cf \cref{tab:slatk_versus_sota}). 

\noindentpar{Challenges in optimizing NDCG@$K$.}
While NDCG@$K$ is widely applied, directly optimizing it presents significant challenges:

\begin{itemize}[topsep=3pt,leftmargin=10pt,itemsep=0pt]  
    \item \emph{Challenge 1: Top-$K$ truncation.} NDCG@$K$ involves truncating the ranking list, as indicated by the term $\mathbb{I}(\pi_{ui} \leq K)$ in \cref{eq:dcg_idcg}. This implies the need to determine whether an item is situated within the Top-$K$ positions. Directly computing this involves sorting all items for each user, which is computationally impractical for RS. Moreover, this truncation introduces highly complex gradient signals, complicating the optimization process.
    \item \emph{Challenge 2: Discontinuity.} NDCG@$K$ is a discontinuous metric as it incorporates the indicator function and the ranking positions. Furthermore, this metric exhibits flat characteristics across most regions of the parameter space, \ie the metric remains unchanged with minor perturbations of $s_{ui}$ almost everywhere. This results in the gradient being undefined or vanishing, posing substantial challenges to the effectiveness of existing gradient-based optimization methods \citep{robbins1951stochastic}. Consequently, a smooth surrogate for NDCG@K is required to facilitate optimization.
\end{itemize}

% END: PRELIMINARIES: FORMULATION OF NDCG@K MERTIC ------------------

% PRELIMINARIES: SOFTMAX LOSS ---------------------------------------
\subsection{Softmax Loss} \label{subsec:softmax_loss}

\textbf{Softmax Loss (SL)} \citep{wu2024effectiveness} has achieved remarkable success in RS. Specifically, SL integrates a contrastive learning paradigm \citep{liu2021self}. It normalizes the preference scores to a multinomial distribution \citep{casella2024statistical} by Softmax operator, augmenting the scores of positive items as compared to the negative ones \citep{cao2007l2r}. Formally, SL is defined as:
\begin{equation} \label{eq:sl}
    \mathcal{L}_{\text{SL}}(u)
    = -\sum_{i \in \mathcal{P}_u} \log \frac{\exp(s_{ui} / \tau)}{\displaystyle\sum_{j \in \mathcal{I}} \exp(s_{uj} / \tau)}
    = \sum_{i \in \mathcal{P}_u} \log\left(\sum_{j \in \mathcal{I}} \exp(d_{uij} / \tau)\right) ,
\end{equation}
where $d_{uij} = s_{uj} - s_{ui}$ is the negative-positive score difference, and $\tau$ is a temperature coefficient controlling the sharpness of the Softmax distribution.

\noindentpar{Underlying rationale of SL.}
The success of SL can be attributed to two main aspects: (i) \emph{Theoretical guarantees}: SL has been proven to serve as an upper bound of $-\log$ NDCG \citep{bruch2019analysis,yang2024psl}, ensuring that optimizing SL is consistent with optimizing NDCG, leading to state-of-the-art (SOTA) performance \citep{wu2023bsl}.
(ii) \emph{Computational efficiency}: SL does not require accurately calculating the ranking positions, which is time-consuming. In fact, SL can be efficiently estimated through negative sampling \citep{wu2023bsl}. That is, the sum of item $j$ over the entire item set $\mathcal{I}$ in \cref{eq:sl} can be approximated by sampling a few negative items through uniform \citep{gutmann2012noise,yang2024psl} or in-batch \citep{wu2024effectiveness,ji2019invariant} sampling. These advantages make SL a practical and effective choice for NDCG optimization, demonstrating superior performance and efficiency over other NDCG surrogate methods, including ranking-based (\eg Smooth-NDCG \citep{chapelle2010gradient}), Gumbel-based (\eg NeuralSort \citep{grover2019stochastic}), and neural-based (\eg GuidedRec \citep{rashed2021guided}) methods. Nowadays, SL has been extensively applied in practice, attracting considerable research exploration with a substantial amount of follow-up work.

\noindentpar{Limitations of SL.}
While SL serves as an effective surrogate loss for NDCG, a significant gap remains between NDCG and NDCG@$K$, which limits its performance. As \cref{fig:ndcg_ndcgatk} shows, optimizing NDCG does not consistently improve NDCG@$K$ and sometimes even leads to performance drops. This limitation still exists in more advanced SL-based losses, \eg AdvInfoNCE \citep{zhang2024empowering}, BSL \citep{wu2023bsl}, and PSL \citep{yang2024psl}. Therefore, how to bridge this gap and effectively model the Top-$K$ truncation in recommendation loss remains an open challenge.

% END: PRELIMINARIES: SOFTMAX LOSS ----------------------------------

% END: PRELIMINARIES ------------------------------------------------
%%%%%%%%%%%%%%%%%%%%%%%%%%%%%%%%%%%%%%%%%%%%%%%%%%%%%%%%%%%%%%%%%%%%%

%%%%%%%%%%%%%%%%%%%%%%%%%%%%%%%%%%%%%%%%%%%%%%%%%%%%%%%%%%%%%%%%%%%%%
% SOFTMAXLOSS@K -----------------------------------------------------
\section{Methodology} \label{sec:methodology}

To bridge the gap towards NDCG@$K$ optimization, we propose \textbf{SoftmaxLoss@$K$ (SL@$K$)}, a novel NDCG@$K$ surrogate loss. In this section, we first present the derivations and implementation details of SL@$K$ (\cref{subsec:softmaxlossatk}). Then, we analyze its properties and discuss its advantages over existing losses (\cref{subsec:slatk_properties}).

% SOFTMAXLOSS@K: FORMULATION ----------------------------------------
\subsection{Proposed Loss: \texorpdfstring{SoftmaxLoss@$K$}{SoftmaxLoss@K}} \label{subsec:softmaxlossatk}

The primary challenges in optimizing NDCG@$K$, as discussed in \cref{subsec:formulation_of_ndcgatk}, are the \emph{Top-$K$ truncation} and the \emph{discontinuity}. To tackle these challenges, we introduce the following two techniques.

\subsubsection{Quantile-based Top-$K$ Truncation} \label{subsubsec:quantile_based_topk_truncation}

To address the \emph{Top-$K$ truncation} challenge, we need to estimate the Top-$K$ truncation term $\mathbb{I}(\pi_{ui} \leq K)$, which involves estimating the ranking position $\pi_{ui}$ for each interaction $(u, i)$. However, directly estimating $\pi_{ui}$ is particularly challenging. Sorting all items for each user to calculate $\pi_{ui}$ will incur a computational cost of $O(|\mathcal{U}||\mathcal{I}| \log |\mathcal{I}|)$, which is impractical for real-world RS with immense user and item scales.
% An alternative approach is based on sampling, \ie approximating $\pi_{ui} = \sum_{j \in \mathcal{I}} \mathbb{I}(s_{uj} \geq s_{ui})$ by sampling a few items $j$ and smoothing the indicator function $\mathbb{I}(\cdot)$ by a sigmoid function \citep{wu2009smoothing,qin2010general,cuturi2019differentiable}. While this method is more computationally efficient, it introduces significant bias in practice, since the estimation error of each term $\mathbb{I}(s_{uj} \geq s_{ui})$ will accumulate when summing over all items $j$, leading to inaccurate rankings and unstable training.

To overcome this, we borrow the \textbf{quantile technique} \citep{koenker2005quantile,hao2007quantile}. Specifically, we introduce a \emph{Top-$K$ quantile} $\beta_{u}^{K}$ for each user $u$, \ie
\begin{equation} \label{eq:quantile}
    \beta_{u}^{K} \coloneqq \inf \{s_{ui} : \pi_{ui} \leq K\} .
\end{equation}
This quantile acts as a threshold score that separates the Top-$K$ items from the rest. Specifically, if an item's score is larger than the quantile, \ie $s_{ui} \geq \beta_{u}^{K}$, then item $i$ is Top-$K$ ranked; conversely, $s_{ui} < \beta_{u}^{K}$ implies that item $i$ is outside the Top-$K$ positions. Therefore, the Top-$K$ truncation term can be rewritten as:
\begin{equation} \label{eq:quantile-delta}
    \mathbb{I}(\pi_{ui} \leq K) = \mathbb{I}(s_{ui} \geq \beta_{u}^{K}) .
\end{equation}

This transformation reduces the complex truncation to a simple comparison between the preference score $s_{ui}$ and the quantile $\beta_{u}^{K}$, thus avoiding the need to directly estimate $\pi_{ui}$. This makes the Top-$K$ truncation both computationally efficient and easy to optimize. To handle the complexities of quantile estimation, we further propose a simple Monte Carlo-based quantile estimation strategy in \cref{subsubsec:quantile_estimation}, which guarantees both high efficiency and precision. Notably, while quantile-based techniques have been explored in previous works -- e.g., AATP \citep{boyd2012accuracy} employs quantiles to optimize Top-$K$ accuracy, and SONG@$K$ \citep{qiu2022large} adopts quantile-related thresholds for bilevel compositional optimization -- we adapt this approach specifically for NDCG@$K$ optimization in the context of recommendation. Specifically, we propose a novel tailored recommendation loss and a dedicated quantile estimation strategy, which address the unique challenges in RS. Readers can refer to \cref{subsubsec:comparison_with_existing_losses,sec:related_work} for a detailed comparison between our proposed loss and existing methods, as well as a discussion of their limitations.

\subsubsection{Smooth Surrogate for NDCG@$K$} \label{subsubsec:smooth_surrogate_for_ndcgatk}

To tackle the \emph{discontinuity} challenge, we proceed to relax the discontinuous NDCG@$K$ into a smooth surrogate. Specifically, our approach focuses on deriving a smooth upper bound for $-\log$ DCG@$K$, since optimizing this upper bound is equivalent to lifting NDCG@$K$\footnote{Note that optimizing DCG@$K$ and NDCG@$K$ is equivalent, as the normalization term IDCG@$K$ is a constant.} \citep{yang2024psl,wu2024effectiveness}. To ensure mathematical well-definedness, we make a simple assumption that DCG@$K$ is non-zero, which is practical in optimization\footnote{This assumption is conventional in RS \citep{yang2024psl,wu2024effectiveness,bruch2019analysis}. Note that DCG@$K = 0$ suggests the worst result. During training, the scores of positive instances are rapidly elevated. As a result, there is almost always at least one positive item within the Top-$K$ positions, ensuring DCG@$K > 0$. (\cf \cref{subapp:proof_slatk} for empirical validation).}.

\noindentpar{Upper bound derivation.}
While several successful examples (\eg SL) of relaxing full-ranking metric DCG exist as references \citep{wu2024effectiveness,yang2024psl,wang2018lambdaloss}, special care must be taken to account for the differences in DCG@$K$ introduced by the Top-$K$ truncation. Based on the quantile technique and some specific relaxations, we can derive an upper bound for $-\log$ DCG@$K$ as follows:
\begin{subequations} \label{eq:slatk-derivation}
    \begin{align}
        & -\log \mathrm{DCG}@K(u)
            \nonumber
        & \\
        & \stackrel{(\ref{eq:quantile-delta})}{=}
            -\log \left( \sum_{i \in \mathcal{P}_u} \mathbb{I}(s_{ui} \geq \beta_{u}^{K}) \frac{1}{\log_2(\pi_{ui} + 1)} \right)
            \label{eq:slatk-derivation-1}
        & \\
        & \stackrel{\text{\ding{172}}}{\leq}
            -\log \left( \sum_{i \in \mathcal{P}_u} \mathbb{I}(s_{ui} \geq \beta_{u}^{K}) \frac{1}{\pi_{ui}} \right)
            \label{eq:slatk-derivation-2}
        & \\
        & =
            -\log \left( \sum_{i \in \mathcal{P}_u} \frac{\mathbb{I}(s_{ui} \geq \beta_{u}^{K})}{H_{u}^{K}} \frac{1}{\pi_{ui}} \right) - \log H_{u}^{K}
            \label{eq:slatk-derivation-3}
        & \\
        & \stackrel{\text{\ding{173}}}{\leq}
            \sum_{i \in \mathcal{P}_u} \frac{\mathbb{I}(s_{ui} \geq \beta_{u}^{K})}{H_{u}^{K}} \left( -\log \frac{1}{\pi_{ui}} \right) - \log H_{u}^{K}
            \label{eq:slatk-derivation-4}
        & \\
        & \stackrel{\text{\ding{174}}}{\leq}
            \sum_{i \in \mathcal{P}_u} \mathbb{I}(s_{ui} \geq \beta_{u}^{K}) \log \pi_{ui} , 
            \label{eq:slatk-derivation-5}
        & 
    \end{align}
\end{subequations}
where $H_{u}^{K} = \sum_{i \in \mathcal{P}_u} \mathbb{I}(s_{ui} \geq \beta_{u}^{K})$ is the number of Top-$K$ positive items (\aka hits) for user $u$. \cref{eq:slatk-derivation-3} is well-defined since $H_{u}^{K} \geq 1$ due to our non-zero assumption\footnote{Since DCG@$K$ $> 0$, there is at least one Top-$K$ hit $i$ such that $s_{ui} \geq \beta_{u}^{K}$.}. Several important relaxations are applied in \cref{eq:slatk-derivation}: \ding{172} is due to $\log_2(\pi_{ui} + 1) \leq \pi_{ui}$; \ding{173} is due to Jensen's inequality \citep{jensen1906fonctions}; \ding{174} is due to $H_{u}^{K} \geq 1$. 

The motivation behind the relaxations \ding{172} and \ding{173} is to simplify the DCG term $1 / \log_2(\pi_{ui} + 1)$, which includes the ranking position $\pi_{ui}$ in the denominator. It is important to note that the ranking position $\pi_{ui}$ is intricate and challenging to estimate accurately. Retaining $\pi_{ui}$ in the denominator could exacerbate the optimization difficulty, potentially leading to high estimation errors and numerical instability. SONG@$K$ \citep{qiu2022large} is a representative example. Although SONG@$K$ utilizes a sophisticated compositional optimization technique \citep{wang2017stochastic}, it still performs poorly in RS due to its highly skewed gradient distributions (\cf \cref{fig:gradient_distribution_comparison,tab:slatk_versus_sota}). Therefore, we follow the successful paths of SL \citep{wu2024effectiveness} and PSL \citep{yang2024psl}, aiming to simplify this complex structure. This significantly facilitates gradient-based optimization and supports sampling-based estimation. Moreover, in relaxation \ding{174}, we drop the term $H_{u}^{K}$ to reduce computational complexity. While retaining this term could potentially lead to improved performance, we empirically find that the gains are marginal, whereas the additional computational overhead is significant.
% Note that for each positive item $i \in \mathcal{P}_u$, the Top-$K$ hits $H_{u}^{K}$ is a constant, thus dropping it does not affect the optimization direction.

Furthermore, we can express the above upper bound in terms of the preference scores. Given the Heaviside step function $\delta(x) = \mathbb{I}(x \geq 0)$ \citep{yang2024psl}, recall that $\pi_{ui} = \sum_{j \in \mathcal{I}} \mathbb{I}(s_{uj} \geq s_{ui}) = \sum_{j \in \mathcal{I}} \delta(d_{uij})$, where $d_{uij} = s_{uj} - s_{ui}$, we can rewrite the upper bound (\ref{eq:slatk-derivation-5}) as:
\begin{equation} \label{eq:slatk-final}
    (\text{\ref{eq:slatk-derivation-5}}) = \sum_{i \in \mathcal{P}_u} \delta(s_{ui} - \beta_{u}^{K}) \cdot \log \left( \sum_{j \in \mathcal{I}} \delta(d_{uij}) \right) .
\end{equation}

\noindentpar{Smoothing Heaviside function.}
Note that \cref{eq:slatk-final} is still discontinuous due to the Heaviside step function $\delta(\cdot)$. To address this, following the conventional approach, we approximate $\delta(\cdot)$ by two continuous activation functions $\sigma_w(\cdot)$ and $\sigma_d(\cdot)$, resulting in the following recommendation loss --- \textbf{SoftmaxLoss@$K$ (SL@$K$)}:
\begin{equation} \label{eq:slatk}
    \mathcal{L}_{\text{SL@}K}(u) = 
    \sum_{i \in \mathcal{P}_u} \underbrace{\colorbox{pink!50}{$\sigma_w(s_{ui} - \beta_{u}^{K})$}}_{\textcolor{red}{\text{weight term:\ } w_{ui}}} \cdot \underbrace{\colorbox{skyblue!50}{$\log \left( \displaystyle\sum_{j \in \mathcal{I}} \sigma_d(d_{uij}) \right)$}}_{\textcolor{blue}{\text{SL term:\ } \mathcal{L}_{\text{SL}}(u, i)}} .
\end{equation}
To approximate the Heaviside step function $\delta(\cdot)$, two conventional activation functions are widely adopted --- the exponential function $e^{x / \tau_d}$ and the sigmoid function $1 / (1 + e^{-x / \tau_w})$, where $\tau_d$ and $\tau_w$ are temperature hyperparameters. The exponential function serves as an upper bound of $\delta(\cdot)$ and has been employed in SL, while the sigmoid function provides a tighter approximation of $\delta(\cdot)$ and has been utilized in BPR \citep{rendle2009bpr}. Here we select $\sigma_d$ as exponential and $\sigma_w$ as sigmoid in \cref{eq:slatk}. This configuration guarantees that SL@$K$ serves as a tight upper bound for $-\log$ DCG@$K$ (\cf \cref{thm:slatk} in \cref{subsec:slatk_properties}). In contrast, if both activations are chosen as sigmoid, the upper bound relation does not hold; if both are chosen as exponential, the bound is not as tight as in our setting. For detailed discussions, please refer to \cref{subapp:activation_functions}.

As shown in \cref{eq:slatk}, SL@$K$ can be interpreted as a specific \emph{weighted Softmax Loss}, where each positive interaction $(u, i)$ in SL (\cf \cref{eq:sl}) is assigned a \emph{quantile-based weight} $w_{ui}$. Intuitively, $w_{ui}$ serves to assign larger weights to positive instances with higher scores $s_{ui}$, emphasizing those ranked within the Top-$K$ positions during optimization (\ie those whose scores exceed the quantile). This aligns with the principle of Top-$K$ ranking metrics.

\subsubsection{\texorpdfstring{Top-$K$}{Top-K} Quantile Estimation} \label{subsubsec:quantile_estimation}

Now the question lies in how to estimate the Top-$K$ quantile $\beta_u^K$ efficiently and accurately. While quantile estimation \citep{koenker2005quantile,hao2007quantile,bickel2015mathematical} has been extensively studied in the field of statistics, these methods may not be appropriate in our scenarios. Given that the quantile evolves during training and the large scale of item set in RS, SL@$K$ places high demands on estimation efficiency. To address this, our work develops a simple Monte Carlo-based estimation strategy. Specifically, we randomly sample a small set of $N$ items for each user and estimate the Top-$K$ quantile among these sampled items. The computational complexity of this method is $O(|\mathcal{U}| N \log N)$, as it only requires sorting the sampled items, which significantly reduces the computational overhead compared to sorting the entire item set (\ie $O(|\mathcal{U}| |\mathcal{I}| \log |\mathcal{I}|)$).

\noindentpar{Theoretical guarantees.}
Despite its simplicity, our quantile estimation strategy has theoretical guarantees. To ensure rigor and facilitate generalization to the continuous case, we follow the conventional definition of the \emph{$p$-th quantile} \citep{bickel2015mathematical}. In the context of RS, the $p$-th quantile is exactly the Top-$(1 - p)|\mathcal{I}|$ quantile. We have:

\begin{restatable}[Monte Carlo quantile estimation]{theorem}{RestatableThmQuantileEstimation} \label{thm:quantile_estimation}
    Given the cumulative distribution function (\cdf) $F_u(s)$ of the preference scores $s_{ui}$ for user $u$, for any $p \in (0, 1)$, the $p$-th quantile is defined as $\theta^p_u \coloneqq F_u^{-1}(p) = \inf\{s : F_u(s) \geq p\}$. In Monte Carlo quantile estimation, we randomly sample $N$ preference scores $\{s_{uj}\}_{j = 1}^{N} \overset{\iid}{\sim} F_u(s)$. The estimated $p$-th quantile is defined as $\hat \theta_u^p \coloneqq \hat{F}_u^{-1}(p)$, where $\hat{F}_u(s) = \frac{1}{N}\sum_{j = 1}^{N} \mathbb{I}(s_{uj} \leq s)$ is the empirical \cdf of the sampled scores. Then, for any $\varepsilon > 0$, we have
    \begin{equation} \label{eq:quantile_estimation_error}
        \Pr\left( \left|\hat \theta_u^p - \theta_u^p\right| > \varepsilon \right) \leq 4e^{-2N\delta_{\varepsilon}^2} ,
    \end{equation}
    where $\delta_\varepsilon = \min\{F_u(\theta_u^p + \varepsilon) - p, p - F_u(\theta_u^p - \varepsilon)\}$. Specifically, in the discrete RS scenarios, the Top-$K$ quantile $\beta_{u}^{K}$ is exactly $\theta_u^{1 - K / |\mathcal{I}|}$.
\end{restatable}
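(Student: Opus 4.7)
The plan is to decompose the two-sided deviation event $\{|\hat\theta_u^p - \theta_u^p| > \varepsilon\}$ into two one-sided events, translate each into a statement about the empirical CDF $\hat F_u$, and then apply Hoeffding's inequality, exploiting the fact that $\hat F_u(s)$ is a sample mean of $N$ i.i.d.\ Bernoulli$(F_u(s))$ indicator variables.

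First I would use the definition $\hat\theta_u^p = \inf\{s : \hat F_u(s) \geq p\}$ together with the monotonicity (and right-continuity) of $\hat F_u$ to establish the key translations
\begin{equation*}
    \{\hat\theta_u^p > \theta_u^p + \varepsilon\} \Longleftrightarrow \{\hat F_u(\theta_u^p + \varepsilon) < p\},
\end{equation*}
and
\begin{equation*}
    \{\hat\theta_u^p < \theta_u^p - \varepsilon\} \Longrightarrow \{\hat F_u(\theta_u^p - \varepsilon) \geq p\}.
\end{equation*}
Setting $\delta_\varepsilon^+ = F_u(\theta_u^p + \varepsilon) - p \geq 0$ and $\delta_\varepsilon^- = p - F_u(\theta_u^p - \varepsilon) \geq 0$ (both nonnegative by the definition of the true quantile $\theta_u^p$ as an infimum), these events can be rewritten as deviations of $\hat F_u$ from its mean $F_u$ of magnitude at least $\delta_\varepsilon^+$ and $\delta_\varepsilon^-$, respectively. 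A union bound then reduces the claim to bounding these two CDF-deviation probabilities.

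Next I would invoke Hoeffding's inequality. Because $\hat F_u(s) = \frac{1}{N}\sum_{j=1}^N \mathbb{I}(s_{uj} \leq s)$ is an average of i.i.d.\ $[0,1]$-valued random variables with mean $F_u(s)$, we have
\begin{equation*}
    \Pr\bigl(|\hat F_u(s) - F_u(s)| > t\bigr) \leq 2 e^{-2N t^2}
\end{equation*}
for any fixed $s$ and $t > 0$. Specializing to $s = \theta_u^p + \varepsilon$ with $t = \delta_\varepsilon^+$ and to $s = \theta_u^p - \varepsilon$ with $t = \delta_\varepsilon^-$, and then using $\delta_\varepsilon = \min\{\delta_\varepsilon^+, \delta_\varepsilon^-\}$ to obtain a uniform exponent, each one-sided tail is bounded by $2 e^{-2N\delta_\varepsilon^2}$, and summing the two gives the claimed $4 e^{-2N\delta_\varepsilon^2}$. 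The final discrete remark $\beta_u^K = \theta_u^{1 - K/|\mathcal{I}|}$ is then a straightforward counting identity: among $|\mathcal{I}|$ scores, being Top-$K$ is the same as being in the upper $K/|\mathcal{I}|$ mass of the empirical distribution, so the separating threshold is exactly the $(1 - K/|\mathcal{I}|)$-quantile.

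The main obstacle I anticipate is the careful bookkeeping around the infimum in the quantile definition. The equivalence for the upper tail is clean, but the lower tail requires a one-sided implication rather than an equivalence because of how strict and non-strict inequalities interact with the infimum and with jumps of $\hat F_u$ in the discrete case. This is the step that forces the use of the two-sided Hoeffding bound (factor $2$) for each tail rather than a one-sided bound, and thus accounts for the constant $4$ rather than a sharper $2$ in the final inequality. Beyond this point, the remainder of the argument is essentially algebraic substitution.
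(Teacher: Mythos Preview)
Your proposal is correct and follows exactly the same skeleton as the paper: translate each one-sided quantile deviation into a deviation of the empirical CDF at the fixed point $\theta_u^p \pm \varepsilon$, bound each with a concentration inequality, and combine via $\delta_\varepsilon = \min\{\delta_\varepsilon^+,\delta_\varepsilon^-\}$. The one difference is the concentration tool: the paper invokes the Dvoretzky--Kiefer--Wolfowitz inequality $\Pr(\|\hat F_u - F_u\|_\infty > t) \le 2e^{-2Nt^2}$ and then specializes it to those two points, whereas you apply Hoeffding's inequality pointwise. Since DKW is only ever evaluated at two fixed $s$-values here, both routes yield the identical constant $4$; your Hoeffding argument is the more elementary of the two, while the appeal to DKW is the conventional citation in the sample-quantile literature. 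A minor quibble: the non-strict inequality on the lower tail does not actually \emph{force} a two-sided Hoeffding bound --- one-sided Hoeffding at each point would in fact give the sharper constant $2$ --- but this does not affect correctness for the bound as stated.
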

The proof is provided in \cref{subapp:quantile_estimation_error_bound}. \cref{thm:quantile_estimation} provides the theoretical foundation for sampling-based quantile estimation --- the error between the estimated and ideal quantile is bounded by a function that decreases exponentially with the sample size $N$. This implies that the Top-$K$ quantile $\beta_{u}^{K}$ can be estimated to arbitrary precision given a sufficiently large $N$.

% FIGURE: QUANTILE ESTIMATION ---------------------------------------
\begin{figure}[t]
    \centering
    \begin{subfigure}[b]{0.235\textwidth}
        \centering
        \includegraphics[width=\textwidth]{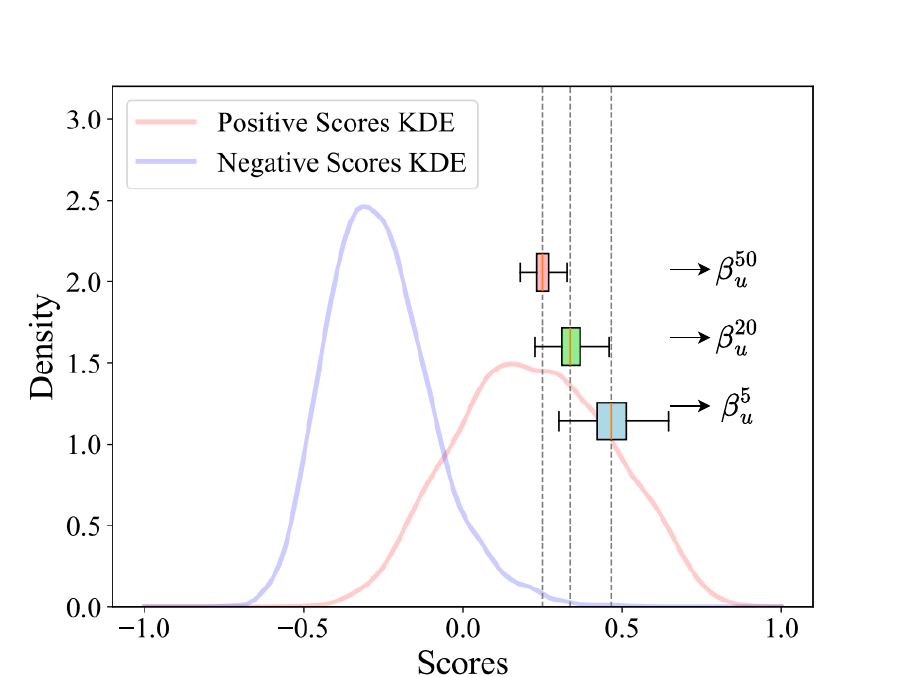}
        \caption{Quantile distribution.}
        % \caption{Distributions of Top-$K$ quantile $\beta_{u}^{K}$ and scores.}
        \label{fig:quantile_distribution}
    \end{subfigure}
    \begin{subfigure}[b]{0.235\textwidth}
        \centering
        \includegraphics[width=\textwidth]{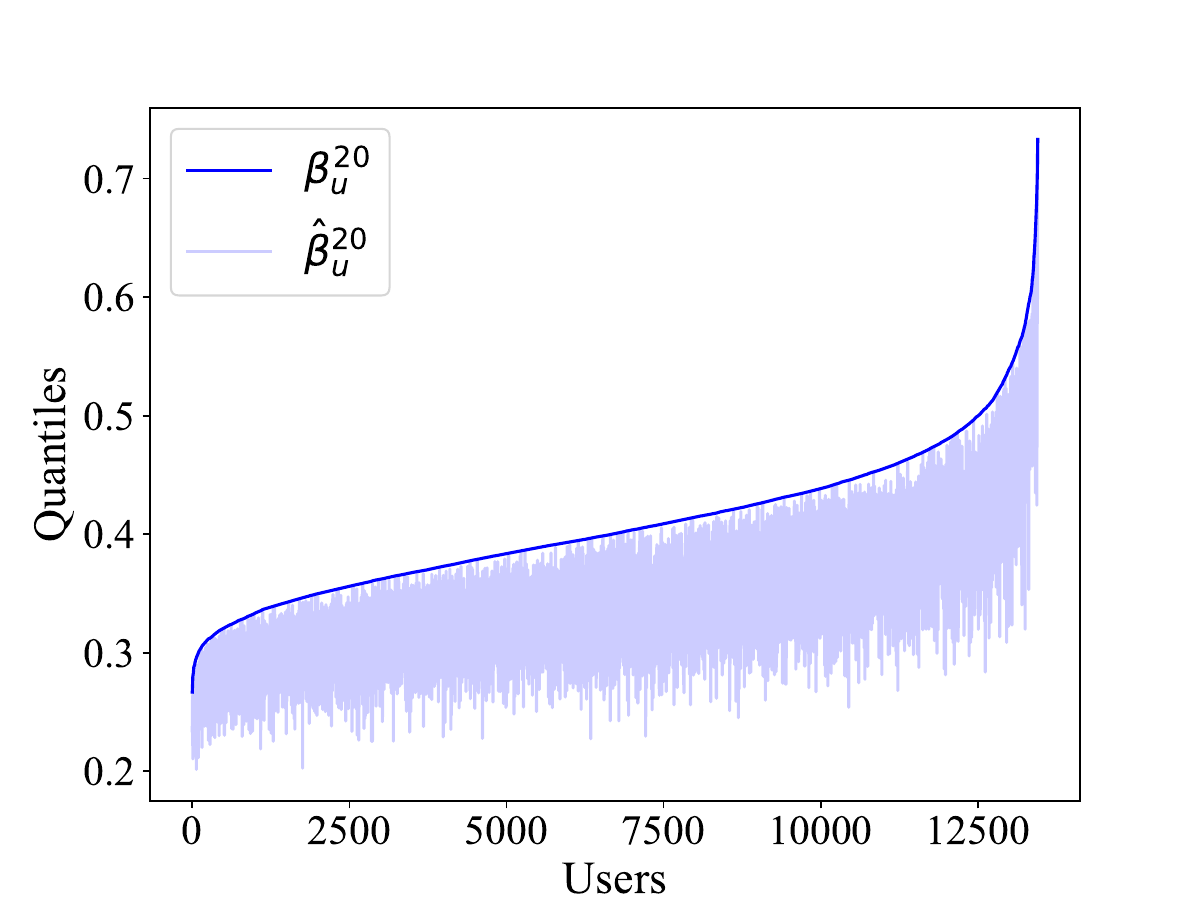}
        \caption{Quantile estimation.}
        % \caption{Ideal quantile $\beta_{u}^{K}$ \versus estimated quantile $\hat \beta_{u}^{K}$.}
        \label{fig:sample_quantile_estimation}
    \end{subfigure}
    \caption{(a) Quantile distribution. The distributions of ideal quantiles $\beta_{u}^{20}$ and the positive/negative scores are illustrated using Kernel Density Estimation (KDE) \citep{parzen1962estimation}. (b) Quantile estimation. The estimated quantile $\hat \beta_{u}^{20}$ and ideal quantile $\beta_{u}^{20}$ are illustrated. The estimation error is 0.06 $\pm$ 0.03.}
    \Description{Quantile estimation.}
    \label{fig:sample_quantile}
\end{figure}
% END FIGURE: QUANTILE ESTIMATION -----------------------------------

\noindentpar{Practical strategies.}
In practice, our Monte Carlo-based quantile estimation strategy can be further improved by leveraging the properties of RS. As shown in \cref{fig:quantile_distribution}, the scores of positive items are typically much higher than those of negative items, and the Top-$K$ quantile is often located within the range of positive item scores. Therefore, it is more effective to retain all positive instances and randomly sample a small set of negative instances for quantile estimation. This strategy, though simple, yields more accurate results. \cref{fig:sample_quantile_estimation} provides an example of estimated quantiles across users on the Electronic dataset, with a sample size of $N = 1000$. The estimated quantile $\hat \beta_{u}^{20}$ closely matches the ideal quantile $\beta_{u}^{20}$, with an average deviation of only 0.06. Further analyses and results can be found in \cref{app:sample_quantile_estimation}. The overall optimization process for SL@$K$ is also summarized in \cref{alg:slatk}.

% END: SOFTMAXLOSS@K: FORMULATION -----------------------------------

% SL@K: ANALYSES ----------------------------------------------------
\subsection{Analyses of \texorpdfstring{SL@$K$}{SL@K}} \label{subsec:slatk_properties}

\subsubsection{Properties of \texorpdfstring{SL@$K$}{SL@K}} \label{subsubsec:theoretical_properties_of_slatk}

Our proposed SL@$K$ offers several desirable properties (P), as summarized below:

\noindentpar{(P1) Theoretical guarantees.}
We establish a theoretical connection between SL@$K$ and NDCG@$K$ as follows:
\begin{restatable}[NDCG@$K$ surrogate]{theorem}{RestatableThmSlatk} \label{thm:slatk}
    For any user $u$, if the Top-$K$ hits $H_{u}^{K} > 1$\footnote{The assumption $H_{u}^{K} > 1$ is commonly satisfied in practice, as the training process tends to increase the scores of positive items, making them typically larger than those of negative items. \cref{subapp:proof_slatk} provides further empirical validation.}, SL@$K$ serves as an upper bound of $-\log \mathrm{DCG}@K$, \ie
    \begin{equation} \label{eq:slatk_bound}
        -\log \mathrm{DCG}@K(u) \leq \mathcal{L}_{\emph{\textrm{SL@}}K}(u) .
    \end{equation}
    When the Top-$K$ hits $H_{u}^{K} = 1$, a marginally looser yet effective bound holds, \ie $-\frac{1}{2} \log \mathrm{DCG}@K(u) \leq \mathcal{L}_{\emph{\textrm{SL@}}K}(u)$.
\end{restatable}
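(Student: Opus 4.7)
The plan is to build on the upper-bound chain already established in equation (3.2) and then carefully substitute the smooth activations $\sigma_w(x) = 1/(1 + e^{-x/\tau_w})$ and $\sigma_d(x) = e^{x/\tau_d}$ for the Heaviside $\delta$. The key observation driving both cases of the theorem is an asymmetric approximation: $\sigma_d$ uniformly upper-bounds $\delta$ pointwise on $\mathbb{R}$, whereas $\sigma_w$ only satisfies $\sigma_w(x) \geq 1/2$ on $\{x \geq 0\}$. So a factor of $1/2$ must inevitably be paid on the weight side, and the two cases of the theorem are distinguished by whether this factor can be absorbed or not.

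First, I will sharpen (3.2d) using the assumption $H_u^K > 1$. Because $H_u^K$ is integer-valued, this means $H_u^K \geq 2$, hence $1/H_u^K \leq 1/2$ and $-\log H_u^K \leq 0$. Combined with $\log \pi_{ui} \geq 0$, these give the tighter intermediate bound
\begin{equation*}
    -\log \mathrm{DCG}@K(u) \;\leq\; \tfrac{1}{2} \sum_{i \in \mathcal{P}_u} \mathbb{I}(s_{ui} \geq \beta_u^K) \log \pi_{ui}.
\end{equation*}
For $H_u^K = 1$ this sharpening is unavailable, so I will fall back directly to (3.2e), paying the factor of $2$ that eventually produces the looser $-\tfrac{1}{2} \log \mathrm{DCG}@K$ statement.

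Next, I will trade each Heaviside term for its smooth counterpart. Since $\sigma_d(x) \geq \delta(x)$ pointwise, $\sum_{j} \sigma_d(d_{uij}) \geq \pi_{ui} \geq 1$, and therefore $\log(\sum_j \sigma_d(d_{uij})) \geq \log \pi_{ui} \geq 0$. Since $\sigma_w(s_{ui} - \beta_u^K) \geq 1/2$ on every Top-$K$ hit (and for non-hits the indicator vanishes while $\sigma_w \log(\cdot) \geq 0$), the pointwise inequality
\begin{equation*}
    \mathbb{I}(s_{ui} \geq \beta_u^K) \log \pi_{ui} \;\leq\; 2\, \sigma_w(s_{ui} - \beta_u^K) \log\!\Bigl( \sum_{j \in \mathcal{I}} \sigma_d(d_{uij}) \Bigr)
\end{equation*}
holds for every $i \in \mathcal{P}_u$. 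Summing and chaining with Step 1 yields $-\log \mathrm{DCG}@K(u) \leq \mathcal{L}_{\text{SL@}K}(u)$ when $H_u^K \geq 2$, and $-\tfrac{1}{2}\log \mathrm{DCG}@K(u) \leq \mathcal{L}_{\text{SL@}K}(u)$ when $H_u^K = 1$.

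The main obstacle is precisely the weight asymmetry. Because $\sigma_w$ is not a uniform upper bound on $\delta$, a naive substitution breaks the inequality direction on the hit side; the trick is to simultaneously pay a factor-$\tfrac{1}{2}$ penalty on the sigmoid and cash in a factor-$\tfrac{1}{2}$ savings from $1/H_u^K \leq 1/2$ in (3.2d), so that the two effects cancel exactly when $H_u^K \geq 2$. Once this calibration is lined up, the remainder is bookkeeping: checking that the non-hit terms in $\mathcal{L}_{\text{SL@}K}$ are non-negative (they are, because $\sum_j \sigma_d(d_{uij}) \geq \sigma_d(0) = 1$), so that enlarging the sum over hits to all of $\mathcal{P}_u$ preserves the inequality.
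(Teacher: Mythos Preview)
Your proposal is correct and follows essentially the same approach as the paper's proof: both start from (3.2d), exploit $1/H_u^K \leq 1/2$ when $H_u^K \geq 2$ together with $\sigma_w \geq 1/2$ on hits to cancel the factor, use $\sigma_d \geq \delta$ pointwise to bound $\pi_{ui}$, and verify non-negativity of non-hit terms via $\sum_j \sigma_d(d_{uij}) \geq \sigma_d(0) = 1$. The only cosmetic difference is that you package the hit-side substitution as a single pointwise inequality with an explicit factor of $2$, whereas the paper writes the chain $\frac{1}{H_u^K}\delta \leq \frac{1}{2}\delta \leq \sigma_w$ directly; the content is identical.
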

The proof is provided in \cref{subapp:proof_slatk}. \cref{thm:slatk} reveals that minimizing SL@$K$ leads to improved NDCG@$K$, ensuring the theoretical effectiveness of SL@$K$ in Top-$K$ recommendation.

\noindentpar{(P2) Ease of implementation.}
Compared to SL, SL@$K$ introduces only a quantile-based weight $w_{ui}$. Given the widespread adoption of SL in RS, SL@$K$ can be seamlessly integrated into existing recommendation frameworks with minimal modifications.

\noindentpar{(P3) Computational efficiency.}
The utilization of the Monte Carlo strategy for quantile estimation in SL@$K$ (\cf \cref{subsubsec:quantile_estimation}) ensures computational efficiency. The conventional SL has a time complexity of $O(|\mathcal{U}| \bar{P} N)$, where $\bar{P}$ denotes the average number of positive items per user, and $N$ denotes the sample size satisfying $N \ll |\mathcal{I}|$. Compared to SL, SL@$K$ only introduces an additional complexity of $O(|\mathcal{U}| N \log N)$ for quantile estimation, which is typically negligible in practice (\cf \cref{fig:execution_time_comparison}).
% Moreover, our strategy eliminates the necessity of computing exact rankings, making SL@$K$ more computationally efficient than other NDCG@$K$ surrogate methods, \eg LambdaLoss@$K$ \citep{jagerman2022optimizing} (\cf \cref{tab:slatk_versus_lambda}).

\noindentpar{(P4) Gradient stability.}
SL@$K$ exhibits a moderate gradient distribution comparable to that of SL (\cf \cref{fig:gradient_distribution_comparison}), which contributes to its training stability and data utilization effectiveness. This property is mainly attributed to the bounded weight $w_{ui} \in (0.1, 1)$ with sigmoid temperature $\tau_w \geq 1$, thus not significantly amplifying gradient variance. In contrast, other NDCG@$K$ surrogate losses, including LambdaLoss@$K$ \citep{jagerman2022optimizing} and SONG@$K$ \citep{qiu2022large}, are usually hindered by the excessively long-tailed gradients (\cf \cref{fig:gradient_distribution_comparison}).

\noindentpar{(P5) Noise robustness.}
\emph{False positive noise} \citep{chen2023bias} is prevalent in RS, arising from various factors such as clickbait \citep{wang2021clicks}, item position bias \citep{hofmann2014effects}, or accidental interactions \citep{adamopoulos2014unexpectedness}. Recent studies have shown that such noise can significantly mislead model training and degrade performance \citep{wen2019leveraging}. Interestingly, the introduction of weight $w_{ui}$ in SL@$K$ helps mitigate this issue. In fact, the false positives, which often resemble negative instances, tend to have lower preference scores $s_{ui}$ than the true positives. As a result, these noisy instances typically receive smaller weights $w_{ui}$ (which are positively correlated with $s_{ui}$) and contribute less in model training. This enhances the model's robustness against false positive noise, as demonstrated in the gradient analysis in \cref{subapp:gradient_analysis}.

\subsubsection{Comparison with Existing Losses} \label{subsubsec:comparison_with_existing_losses}

In this subsection, we delve into the connections and differences between SL@$K$ and other closely related losses to provide further insights:

\noindentpar{SL@$K$ \versus Softmax Loss (SL).}
As discussed in \cref{subsubsec:smooth_surrogate_for_ndcgatk}, SL@$K$ can be viewed as a specific weighted SL \citep{wu2024effectiveness}. Although SL demonstrates theoretical advantages due to its close connection with NDCG, as well as practical benefits such as concise formulation and computational efficiency, it does not account for the Top-$K$ truncation. Our SL@$K$ bridges this gap by accompanying each term of SL with a quantile-based weight. As such, SL@$K$ inherits the advantages of SL, while introducing additional merits, \eg theoretical connections to NDCG@$K$ and robustness to false positive noise.

\noindentpar{SL@$K$ \versus LambdaLoss@$K$ and SONG@$K$.}
LambdaLoss@$K$ \citep{jagerman2022optimizing} and SONG@$K$ \citep{qiu2022large} take into account the Top-$K$ truncation and have shown promising results in other fields like document retrieval \citep{liu2009learning}. However, we find that their effectiveness in RS is compromised, particularly given the large item space and sparse interactions. Specifically, both of them suffer from the issue of long-tailed gradients due to their inherent design. The gradients are dominated by a few instances, while the majority of instances have negligible contributions, which may lead to data utilization inefficiency and optimization instability. In contrast, SL@$K$ exhibits moderate gradients by leveraging the quantile technique and appropriate relaxations, which addresses these issues and achieves superior performance (\cf \cref{fig:gradient_distribution_comparison}). 

Beyond the gradient instability, LambdaLoss@$K$ also faces additional challenges on computational efficiency. Specifically, it requires calculating exact rankings, which is computationally impractical in RS. Even worse, the skewed gradient distribution hinders the sampling-based strategy to reduce computational overhead, since the gradients of sampled instances may be either vanishingly small or excessively large, leading to unstable optimization. In contrast, SL@$K$ is both theoretically sound and computationally efficient, making it a more suitable choice for RS. 

Notably, while SONG@$K$ also employs a threshold to tackle Top-$K$ truncation similar to SL@$K$'s quantile, SL@$K$ differs significantly from SONG@$K$ in two aspects: (i) we follow the successful approaches of SL \citep{wu2024effectiveness} and PSL \citep{yang2024psl} to simplify and smooth NDCG@$K$, which facilitates sampling-based estimation and optimization, while SONG@$K$ employs a compositional optimization technique, which may not be effective in RS. (ii) we employ a simple sampling-based strategy to estimate the threshold (quantile) with theoretical guarantees, as opposed to the complex bilevel optimization in SONG@$K$. These differences contribute to the significant superiority of SL@$K$ over SONG@$K$ in terms of both recommendation performance and practical applicability. \cref{app:gradient_vanishing} gives a detailed discussion on these two losses.

% END: SL@K: ANALYSES -----------------------------------------------

% END: SOFTMAXLOSS@K ------------------------------------------------
%%%%%%%%%%%%%%%%%%%%%%%%%%%%%%%%%%%%%%%%%%%%%%%%%%%%%%%%%%%%%%%%%%%%%

% TABLE: DATASET STATISTICS -----------------------------------------
\begin{table}[t]
    \centering
    \caption{Dataset statistics. Refer to \cref{subapp:datasets} for details.}
    \label{tab:datasets-statistics-maintext}
    \begin{tabular}{l|rrrr}
    \Xhline{1.2pt}
    \multicolumn{1}{c|}{\textbf{Dataset}} & \textbf{\#Users} & \textbf{\#Items} & \textbf{\#Interactions} & \textbf{Density} \bigstrut\\
    \Xhline{1pt}
    Health & 1,974  & 1,200  & 48,189  & 0.02034 \bigstrut[t]\\
    Electronic & 13,455  & 8,360  & 234,521  & 0.00208 \\
    Gowalla & 29,858  & 40,988  & 1,027,464  & 0.00084 \\
    Book  & 135,109  & 115,172  & 4,042,382  & 0.00026 \\
    % MovieLens & 939   & 1,016  & 80,393  & 0.08427 \\
    % Food  & 5,875  & 9,852  & 233,038  & 0.00403 \bigstrut[b]\\
    \Xhline{1.2pt}
    \end{tabular}
\end{table}
% END TABLE: DATASET STATISTICS -------------------------------------

% TABLE: IID EXPERIMENTS --------------------------------------------
\begin{table*}[t]
    \centering
    \caption{Top-20 recommendation performance comparison of SL@$K$ with existing losses. The best results are highlighted in bold, and the best baselines are underlined. "\textcolor{red}{Imp.}" denotes the improvement of SL@$K$ over the best baseline.}
    \label{tab:slatk_versus_sota}
    \small
    \begin{tabularx}{0.98\textwidth}{c|l|YY|YY|YY|YY}
        \Xhline{1.2pt}
        \multirow{2}[4]{*}{\textbf{Backbone}} & \multicolumn{1}{c|}{\multirow{2}[4]{*}{\textbf{Loss}}} & \multicolumn{2}{c|}{\textbf{Health}} & \multicolumn{2}{c|}{\textbf{Electronic}} & \multicolumn{2}{c|}{\textbf{Gowalla}} & \multicolumn{2}{c}{\textbf{Book}} \bigstrut\\
        \cline{3-10}    &   & \textbf{Recall@20} & \textbf{NDCG@20} & \textbf{Recall@20} & \textbf{NDCG@20} & \textbf{Recall@20} & \textbf{NDCG@20} & \textbf{Recall@20} & \textbf{NDCG@20} \bigstrut\\
        \Xhline{1.0pt}
        \multirow{10}[6]{*}{MF}
            & BPR               & 0.1627  & 0.1234  & 0.0816  & 0.0527  & 0.1355  & 0.1111  & 0.0665  & 0.0453  \bigstrut[t]\\
            & GuidedRec         & 0.1568  & 0.1093  & 0.0644  & 0.0385  & 0.1135  & 0.0863  & 0.0518  & 0.0361  \\
            % & SONG              & 0.0688  & 0.0488  & 0.0792  & 0.0493  & 0.1231  & 0.0952  & 0.0805  & 0.0570  \\
            & SONG@20           & 0.0874  & 0.0650  & 0.0708  & 0.0444  & 0.1237  & 0.0970  & 0.0747  & 0.0542  \\
            & LLPAUC            & 0.1644  & 0.1209  & 0.0821  & 0.0499  & 0.1610  & 0.1189  & 0.1150  & 0.0811  \\
            & SL                & \uline{0.1719}  & 0.1261  & 0.0821  & 0.0529  & 0.2064  & 0.1624  & 0.1559  & 0.1210  \\
            & AdvInfoNCE        & 0.1659  & 0.1237  & 0.0829  & 0.0527  & 0.2067  & 0.1627  & 0.1557  & 0.1172  \\
            & BSL               & \uline{0.1719}  & 0.1261  & 0.0834  & 0.0530  & 0.2071  & 0.1630  & 0.1563  & 0.1212 \\
            & PSL               & 0.1718  & \uline{0.1268}  & \uline{0.0838}  & \uline{0.0541}  & \uline{0.2089}  & \uline{0.1647}  & \uline{0.1569}  & \uline{0.1227}  \\
            & \textbf{SL@20 (Ours)}    & \textbf{0.1823} & \textbf{0.1390} & \textbf{0.0901} & \textbf{0.0590} & \textbf{0.2121} & \textbf{0.1709} & \textbf{0.1612} & \textbf{0.1269} \bigstrut[b]\\
            \cline{2-10}    & \textcolor{red}{\textbf{Imp. \%}} & \textcolor{red}{\textbf{+6.05\%}} & \textcolor{red}{\textbf{+9.62\%}} & \textcolor{red}{\textbf{+7.52\%}} & \textcolor{red}{\textbf{+9.06\%}} & \textcolor{red}{\textbf{+1.53\%}} & \textcolor{red}{\textbf{+3.76\%}} & \textcolor{red}{\textbf{+2.74\%}} & \textcolor{red}{\textbf{+3.42\%}} \bigstrut\\
        \Xhline{1.0pt}
        \multirow{10}[6]{*}{LightGCN}
            & BPR               & 0.1618  & 0.1203  & 0.0813  & 0.0524  & 0.1745  & 0.1402  & 0.0984  & 0.0678  \bigstrut[t]\\
            & GuidedRec         & 0.1550  & 0.1073  & 0.0657  & 0.0393  & 0.0921  & 0.0686  & 0.0468  & 0.0310  \\
            % & SONG              & 0.1325  & 0.0907  & 0.0769  & 0.0477  & 0.1224  & 0.0919  & 0.0815  & 0.0572  \\
            & SONG@20           & 0.1353  & 0.0960  & 0.0816  & 0.0511  & 0.1261  & 0.0968  & 0.0820  & 0.0573  \\
            & LLPAUC            & 0.1685  & 0.1207  & \uline{0.0831}  & 0.0507  & 0.1616  & 0.1192  & 0.1147  & 0.0810  \\
            & SL                & 0.1691  & 0.1235  & 0.0823  & 0.0526  & 0.2068  & 0.1628  & 0.1567  & 0.1220  \\
            & AdvInfoNCE        & \uline{0.1706}  & 0.1264  & 0.0823  & 0.0528  & 0.2066  & 0.1625  & 0.1568  & 0.1177  \\
            & BSL               & 0.1691  & 0.1236  & 0.0823  & 0.0526  & 0.2069  & 0.1628  & 0.1568  & 0.1220  \\
            & PSL               & 0.1701  & \uline{0.1270}  & 0.0830  & \uline{0.0536}  & \uline{0.2086}  & \uline{0.1648}  & \uline{0.1575}  & \uline{0.1233}  \\
            & \textbf{SL@20 (Ours)}    & \textbf{0.1783} & \textbf{0.1371} & \textbf{0.0903} & \textbf{0.0591} & \textbf{0.2128} & \textbf{0.1729} & \textbf{0.1625} & \textbf{0.1280} \bigstrut[b]\\
            \cline{2-10}            & \textcolor{red}{\textbf{Imp. \%}} & \textcolor{red}{\textbf{+4.51\%}} & \textcolor{red}{\textbf{+7.95\%}} & \textcolor{red}{\textbf{+8.66\%}} & \textcolor{red}{\textbf{+10.26\%}} & \textcolor{red}{\textbf{+2.01\%}} & \textcolor{red}{\textbf{+4.92\%}} & \textcolor{red}{\textbf{+3.17\%}} & \textcolor{red}{\textbf{+3.81\%}} \bigstrut\\
        \Xhline{1.0pt}
        \multirow{10}[6]{*}{XSimGCL}
            & BPR               & 0.1496  & 0.1108  & 0.0777  & \uline{0.0508}  & 0.1966  & 0.1570  & 0.1269  & 0.0905  \bigstrut[t]\\
            & GuidedRec         & 0.1539  & 0.1088  & 0.0760  & 0.0473  & 0.1685  & 0.1277  & 0.1275  & 0.0951  \\
            % & SONG              & 0.1346  & 0.0918  & 0.0513  & 0.0313  & 0.1373  & 0.0992  & 0.1298  & 0.0972  \\ 
            & SONG@20           & 0.1378  & 0.0948  & 0.0525  & 0.0320  & 0.1367  & 0.0985  & 0.1281  & 0.0964  \\
            & LLPAUC            & 0.1519  & 0.1083  & 0.0781  & 0.0481  & 0.1632  & 0.1200  & 0.1363  & 0.1008  \\
            & SL                & 0.1534  & 0.1113  & 0.0772  & 0.0490  & 0.2005  & 0.1570  & 0.1549  & 0.1207  \\
            & AdvInfoNCE        & 0.1499  & 0.1072  & 0.0776  & 0.0489  & 0.2010  & 0.1564  & 0.1568  & 0.1179  \\
            & BSL               & \uline{0.1649}  & \uline{0.1201}  & 0.0800  & 0.0507  & \uline{0.2037}  & \uline{0.1597}  & 0.1550  & 0.1207  \\
            & PSL               & 0.1579  & 0.1143  & \uline{0.0801}  & 0.0507  & \uline{0.2037}  & 0.1593  & \uline{0.1571}  & \uline{0.1228}  \\
            & \textbf{SL@20 (Ours)}    & \textbf{0.1753} & \textbf{0.1332} & \textbf{0.0869} & \textbf{0.0571} & \textbf{0.2095} & \textbf{0.1717} & \textbf{0.1624} & \textbf{0.1277} \bigstrut[b]\\
            \cline{2-10}            & \textcolor{red}{\textbf{Imp. \%}} & \textcolor{red}{\textbf{+6.31\%}} & \textcolor{red}{\textbf{+10.91\%}} & \textcolor{red}{\textbf{+8.49\%}} & \textcolor{red}{\textbf{+12.40\%}} & \textcolor{red}{\textbf{+2.85\%}} & \textcolor{red}{\textbf{+7.51\%}} & \textcolor{red}{\textbf{+3.37\%}} & \textcolor{red}{\textbf{+3.99\%}} \bigstrut\\
        \Xhline{1.2pt}
    \end{tabularx}
\end{table*}
% END TABLE: IID EXPERIMENTS ----------------------------------------

%%%%%%%%%%%%%%%%%%%%%%%%%%%%%%%%%%%%%%%%%%%%%%%%%%%%%%%%%%%%%%%%%%%%%
% EXPERIMENTS -------------------------------------------------------
\section{Experiments} \label{sec:experiments}

We aim to answer the following research questions (RQs):
% In this section, we conduct extensive experiments to evaluate the effectiveness of the proposed SL@$K$ loss. Our goal is to answer the following research questions (RQs):

\begin{itemize}[topsep=3pt,leftmargin=10pt,itemsep=0pt]
    \item \textbf{RQ1}: How does SL@$K$ perform compared with existing losses?
    \item \textbf{RQ2}: Does SL@$K$ exhibit consistent improvements across different NDCG@$K$ metrics with varying $K$?
    \item \textbf{RQ3}: Does SL@$K$ exhibit robustness against false positive noise?
    \item \textbf{RQ4}: Can SL@$K$ be effectively applied to other information retrieval (IR) tasks?
\end{itemize}

% TABLE: NDCG@K IID EXPERIMENTS -------------------------------------
\begin{table*}[t]
    \centering
    \caption{NDCG@$K$ (D@$K$) comparisons with varying $K$ on Health and Electronic datasets and MF backbone. The best results are highlighted in bold, and the best baselines are underlined. "\textcolor{red}{Imp.}" denotes the improvement of SL@$K$ over the best baseline.}
    \label{tab:slatk_versus_sota_mf}
    \small
    \setlength{\tabcolsep}{4.0pt}
    \begin{tabular}{l|cccccc|cccccc}
        \Xhline{1.2pt}
        \multicolumn{1}{c|}{\multirow{2}[4]{*}{\textbf{Method}}} & \multicolumn{6}{c|}{\textbf{Health}} & \multicolumn{6}{c}{\textbf{Electronic}} \bigstrut\\
        \cline{2-13}
        & \textbf{D@5}  & \textbf{D@10}  & \textbf{D@20}  & \textbf{D@50}  & \textbf{D@75}  & \textbf{D@100}  
        & \textbf{D@5}  & \textbf{D@10}  & \textbf{D@20}  & \textbf{D@50}  & \textbf{D@75}  & \textbf{D@100} \bigstrut\\
        \Xhline{1.0pt}
        BPR             & \uline{0.0940}  & 0.1037  & 0.1234  & \uline{0.1621}  & \uline{0.1804}  & \uline{0.1925}  
                        & 0.0345  & 0.0419  & 0.0527  & 0.0690  & 0.0777  & 0.0845 \bigstrut[t]\\
        GuidedRec       & 0.0769  & 0.0881  & 0.1093  & 0.1484  & 0.1671  & 0.1811  
                        & 0.0228  & 0.0294  & 0.0385  & 0.0551  & 0.0635  & 0.0703 \\
        SONG            & 0.0353  & 0.0392  & 0.0488  & 0.0709  & 0.0834  & 0.0930  
                        & 0.0316  & 0.0393  & 0.0493  & 0.0661  & 0.0744  & 0.0803 \\
        SONG@$K$        & 0.0503  & 0.0535  & 0.0650  & 0.0896  & 0.1037  & 0.1135  
                        & 0.0276  & 0.0349  & 0.0444  & 0.0581  & 0.0651  & 0.0706 \\
        LLPAUC          & 0.0887  & 0.0996  & 0.1209  & 0.1592  & 0.1765  & 0.1892  
                        & 0.0305  & 0.0388  & 0.0499  & 0.0686  & 0.0778  & \uline{0.0848} \\
        SL              & 0.0922  & 0.1037  & 0.1261  & 0.1620  & 0.1791  & 0.1924  
                        & 0.0353  & 0.0430  & 0.0529  & 0.0696  & 0.0783  & 0.0845 \\
        AdvInfoNCE      & 0.0926  & 0.1038  & 0.1237  & 0.1608  & 0.1789  & 0.1920  
                        & 0.0341  & 0.0423  & 0.0527  & 0.0697  & 0.0782  & 0.0843 \\
        BSL             & 0.0922  & 0.1037  & 0.1261  & 0.1620  & 0.1791  & 0.1924  
                        & 0.0344  & 0.0425  & 0.0530  & 0.0691  & 0.0776  & 0.0843 \\
        PSL             & \uline{0.0940}  & \uline{0.1048}  & \uline{0.1268}  & 0.1613  & 0.1789  & 0.1912  
                        & \uline{0.0356}  & \uline{0.0434}  & \uline{0.0541}  & \uline{0.0700}  & \uline{0.0784}  & 0.0845 \\
        \textbf{SL@$K$ (Ours)} & \textbf{0.1080} & \textbf{0.1190} & \textbf{0.1390} & \textbf{0.1736} & \textbf{0.1916} & \textbf{0.2035}  
                                & \textbf{0.0402} & \textbf{0.0484} & \textbf{0.0590} & \textbf{0.0760} & \textbf{0.0844} & \textbf{0.0908} \bigstrut[b]\\
        \hline
        \textcolor{red}{\textbf{Imp. \%}} & \textcolor{red}{\textbf{+14.89\%}} & \textcolor{red}{\textbf{+13.55\%}} & \textcolor{red}{\textbf{+9.62\%}} & \textcolor{red}{\textbf{+7.09\%}} & \textcolor{red}{\textbf{+6.21\%}} & \textcolor{red}{\textbf{+5.71\%}}  
                                           & \textcolor{red}{\textbf{+12.92\%}} & \textcolor{red}{\textbf{+11.52\%}} & \textcolor{red}{\textbf{+9.06\%}} & \textcolor{red}{\textbf{+8.57\%}} & \textcolor{red}{\textbf{+7.65\%}} & \textcolor{red}{\textbf{+7.08\%}} \bigstrut\\
        \Xhline{1.2pt}
    \end{tabular}
\end{table*}
% END TABLE: NDCG@K IID EXPERIMENTS ---------------------------------

% TABLE: CONSISTENCY OF SL@K ----------------------------------------
\begin{table*}[t]
    \centering
    \caption{Performance exploration of SL@$K$ on NDCG@$K'$ with varying $K$ and $K'$. The best results are highlighted in bold.}
    \label{tab:consistency_atk}
    \small
    \setlength{\tabcolsep}{4.0pt}
    \begin{tabular}{l|cccccc|cccccc}
        \Xhline{1.2pt}
        \multicolumn{1}{l|}{\multirow{2}[4]{*}{\textbf{SL@$K$}}} & \multicolumn{6}{c|}{\textbf{Health}} & \multicolumn{6}{c}{\textbf{Electronic}} \bigstrut\\
        \cline{2-13}
        & \textbf{D@5} & \textbf{D@10} & \textbf{D@20} & \textbf{D@50} & \textbf{D@75} & \textbf{D@100} & \textbf{D@5} & \textbf{D@10} & \textbf{D@20} & \textbf{D@50} & \textbf{D@75} & \textbf{D@100} \bigstrut\\
        \Xhline{1.0pt}
        SL@5   & \cellcolor{skyblue!50}\textbf{0.1080} & 0.1180 & 0.1379 & 0.1724 & 0.1906 & 0.2032 & \cellcolor{skyblue!50}\textbf{0.0402} & 0.0480 & 0.0583 & 0.0753 & 0.0839 & 0.0900 \bigstrut\\
        SL@10  & 0.1077 & \cellcolor{skyblue!50}\textbf{0.1190} & 0.1377 & 0.1734 & 0.1909 & 0.2028 & 0.0400 & \cellcolor{skyblue!50}\textbf{0.0484} & 0.0583 & 0.0755 & 0.0839 & 0.0901 \bigstrut\\
        SL@20  & 0.1076 & 0.1188 & \cellcolor{skyblue!50}\textbf{0.1390} & 0.1733 & 0.1909 & 0.2029 & 0.0400 & 0.0483 & \cellcolor{skyblue!50}\textbf{0.0590} & 0.0759 & 0.0837 & 0.0900 \bigstrut\\
        SL@50  & 0.1062 & 0.1167 & 0.1364 & \cellcolor{skyblue!50}\textbf{0.1736} & 0.1901 & 0.2020 & 0.0398 & 0.0481 & 0.0587 & \cellcolor{skyblue!50}\textbf{0.0760} & 0.0842 & 0.0907 \bigstrut\\
        SL@75  & 0.1073 & 0.1179 & 0.1387 & 0.1734 & \cellcolor{skyblue!50}\textbf{0.1916} & 0.2031 & 0.0397 & 0.0481 & 0.0587 & 0.0759 & \cellcolor{skyblue!50}\textbf{0.0844} & 0.0907 \bigstrut\\
        SL@100 & 0.1071 & 0.1177 & 0.1375 & 0.1727 & 0.1904 & \cellcolor{skyblue!50}\textbf{0.2035} & 0.0399 & 0.0481 & 0.0587 & 0.0759 & 0.0843 & \cellcolor{skyblue!50}\textbf{0.0908} \bigstrut\\
        \hline
        SL (@$\infty$) & \cellcolor{gray!20}0.0922 & \cellcolor{gray!20}0.1037 & \cellcolor{gray!20}0.1261 & \cellcolor{gray!20}0.1620 & \cellcolor{gray!20}0.1791 & \cellcolor{gray!20}0.1924 & \cellcolor{gray!20}0.0353 & \cellcolor{gray!20}0.0430 & \cellcolor{gray!20}0.0529 & \cellcolor{gray!20}0.0696 & \cellcolor{gray!20}0.0783 & \cellcolor{gray!20}0.0845 \bigstrut\\
        \Xhline{1.2pt}
    \end{tabular}
\end{table*}
% END TABLE: CONSISTENCY OF SL@K ------------------------------------

% EXPERIMENTAL SETUP ------------------------------------------------
\subsection{Experimental Setup} \label{subsec:experimental_setup}

% \subsubsection{Recommendation Task Settings (RQ1\textasciitilde RQ3)} \label{subsubsec:experimental_setup_recommendation}~

\noindentpar{Datasets.}
To ensure fair comparisons, our experimental setup closely follows the prior work of \citet{wu2023bsl} and \citet{yang2024psl}. We conduct experiments on four widely-used datasets: Health \citep{he2016ups,mcauley2015image}, Electronic \citep{he2016ups,mcauley2015image}, Gowalla \citep{cho2011friendship}, and Book \citep{he2016ups,mcauley2015image}. Additionally, given the inefficiency of LambdaLoss@$K$ \citep{jagerman2022optimizing} in handling these large datasets, we further evaluate its performance on two relatively smaller datasets, \ie MovieLens \citep{harper2015movielens} and Food \citep{majumder2019generating}. Detailed dataset descriptions can be found in \cref{tab:datasets-statistics-maintext,subapp:datasets}.

\noindentpar{Recommendation backbones.}
Following the settings in \citet{yang2024psl}, we evaluate the proposed losses on three backbones: \uline{MF} \citep{koren2009matrix} (classic Matrix Factorization model), \uline{LightGCN} \citep{he2020lightgcn} (SOTA graph-based model), and \uline{XSimGCL} \citep{yu2023xsimgcl} (SOTA contrastive-based model). The implementation details can be found in \cref{subapp:recommendation_backbones}.

\noindentpar{Baseline losses.}
We compare SL@$K$ with the following baselines: (i) Pairwise loss (\uline{BPR} \citep{rendle2009bpr}); (ii) NDCG surrogate losses (\uline{GuidedRec} \citep{rashed2021guided} and \uline{Softmax Loss (SL)} \citep{wu2024effectiveness}); (iii) NDCG@$K$ surrogate losses (\uline{LambdaLoss@$K$} \citep{jagerman2022optimizing} and \uline{SONG@$K$} \citep{qiu2022large}); (iv) Partial AUC surrogate loss (\uline{LLPAUC} \citep{shi2024lower}); (v) Advanced SL-based losses (\uline{AdvInfoNCE} \citep{zhang2024empowering}, \uline{BSL} \citep{wu2023bsl}, and \uline{PSL} \citep{yang2024psl}). Refer to \cref{subapp:compared_methods_hyperparameters} for details.

\noindentpar{Hyperparameter settings.}
For fair comparisons, SL@$K$ adopts the same temperature parameter $\tau_d$ as the optimal $\tau$ in SL. SL@$K$ also uses the same negative sampling strategy as SL for both training and quantile estimation with sample size $N = 1000$. For all baselines, we follow the hyperparameter settings provided in original papers and further tune them to achieve the best performance. We provide the details in \cref{subapp:compared_methods_hyperparameters}, optimal hyperparameters in \cref{subapp:optimal_hyperparameters}, and supplementary results in \cref{app:supplementary_experimental_results}.

% \subsubsection{Information Retrieval Task Settings (RQ4)} \label{subsubsec:experimental_setup_ir}

\noindentpar{Information Retrieval Tasks.}
To extend SL@$K$ to other fields, we adapt it to three different IR tasks: (i) \textbf{Learning to rank (LTR)}, aiming to order a list of candidate items according to their relevance to a given query; (ii) \textbf{Sequential recommendation (SeqRec)}, focusing on next item prediction in a user's interaction sequence; and (iii) \textbf{Link prediction (LP)}, predicting links between two nodes in a graph. We closely follow the experimental settings in prior work \citep{pobrotyn2021neuralndcg, kang2018self, li2023evaluating} and provide the details in \cref{subapp:experimental_setup_ir}.

% END: EXPERIMENTAL SETUP -------------------------------------------

\subsection{Performance Comparison} \label{subsec:performance_comparison}

% RQ1: OVERALL PERFORMANCE COMPARISON -------------------------------
% \subsection{Overall Performance Comparison (RQ1)} \label{subsec:overall_performance_comparison}

\noindentpar{SL@$K$ \versus Baselines (RQ1).}
\cref{tab:slatk_versus_sota} presents the performance comparison of SL@$K$ against existing losses. As shown, SL@$K$ consistently outperforms all competing losses across various datasets and backbones. The improvements are substantial, with an average increase of 6.03\% over the best baselines. This improvement can be attributed to the closer alignment of SL@$K$ with NDCG@$K$, highlighting the importance of explicitly modeling Top-$K$ truncation during optimization, as opposed to NDCG surrogate losses. Notably, SL@$K$ also demonstrates strong performance on Recall@$K$. This is because optimizing NDCG@$K$ naturally increases the positive hits in Top-$K$ positions, thereby enhancing Recall@$K$ performance. 

\noindentpar{SL@$K$ \versus NDCG@$K$ surrogate losses (RQ1).}
We further compare SL@$K$ with existing NDCG@$K$ surrogate losses, \ie SONG@$K$ and LambdaLoss@$K$, in \cref{tab:slatk_versus_sota,tab:slatk_versus_lambda}. Although these losses are also designed to optimize NDCG@$K$, our experiments show that SL@$K$ consistently outperforms them, with significant improvements of over 70\% and 13\% in NDCG@20 compared to SONG@$K$ and LambdaLoss@$K$, respectively. The unsatisfactory performance of these surrogate losses can be attributed to their unstable and ineffective optimization process, as discussed in \cref{subsubsec:comparison_with_existing_losses}. Moreover, LambdaLoss@$K$ incurs significantly higher computational costs compared to SL@$K$. While sampling strategies could be employed to accelerate LambdaLoss@$K$ (\ie LambdaLoss@$K$-S in \cref{tab:slatk_versus_lambda}), they lead to substantial performance degradation (over 30\%).

% END: RQ1: OVERALL PERFORMANCE COMPARISON --------------------------

% RQ2: PERFORMANCE WITH VARYING K -----------------------------------
% \subsection{Top-\texorpdfstring{$K$}{K} Recommendation Performance (RQ2)} \label{subsec:performance_with_varying_k}

\noindentpar{NDCG@$K$ performance with varying $K$ (RQ2).}  
\cref{tab:slatk_versus_sota_mf} illustrates the NDCG@$K$ performance across different values of $K$. Experimental results show that SL@$K$ consistently outperforms the baseline methods in all NDCG@$K$ metrics. We also observe that as $K$ increases, the magnitude of the improvements decreases, which aligns with our intuition. Specifically, the Top-$K$ truncation has a greater impact when $K$ is small. As $K$ increases, NDCG@$K$ degrades to the full-ranking metric NDCG. Consequently, the advantage of optimizing for NDCG@$K$ diminishes as $K$ grows.

\noindentpar{Top-$K$ recommendation consistency (RQ2).} \cref{tab:consistency_atk} presents the performance of NDCG@$K'$ for SL@$K$ with varying values of $K, K'$ in $\{5, 10, 20, 50, 75, 100\}$. We observe that the best NDCG@$K'$ performance is always achieved when $K' = K$ in SL@$K$. This consistency aligns with our theoretical analysis in \cref{subsubsec:theoretical_properties_of_slatk}, \ie SL@$K$ is oriented towards optimizing NDCG@$K$ rather than other NDCG@$K'$ when $K \neq K'$. For instance, SL@20 achieves the best NDCG@20 performance, but its performance on NDCG@50 is lower compared to SL@50. Nonetheless, SL@$K$ always outperforms SL(@$\infty$), emphasizing the effectiveness of SL@$K$ in real-world RS.

% END: RQ2: PERFORMANCE WITH VARYING K ------------------------------

% RQ3: NOISE ROBUSTNESS ---------------------------------------------
% \subsection{Noise Robustness (RQ3)} \label{subsec:noise_robustness}

\noindentpar{Noise Robustness (RQ3).}
In \cref{fig:slatk_noise}, we assess the robustness of SL@$K$ to false positive instances. Following \citet{wu2023bsl}, we manually introduce a certain ratio of negative instances as noisy positive instances during training. As the noise ratio increases, SL@$K$ demonstrates greater improvements over SL (up to 24\%), indicating superior robustness to false positive noise. This finding is consistent with our analysis in \cref{subsubsec:theoretical_properties_of_slatk}.

% END: RQ3: NOISE ROBUSTNESS ----------------------------------------

% RQ4: APPLICATION TO OTHER IR TASKS --------------------------------
\noindentpar{Application to other IR tasks (RQ4).} We adapt SL@$K$ to three different IR tasks: LTR (\cref{tab:ltr-results}), SeqRec (\cref{tab:seqrec-results}), and LP (\cref{tab:lp-results}). Results show that SL@$K$ consistently outperforms baseline ranking losses (\eg LambdaLoss@$K$ \citep{jagerman2022optimizing} and NeuralNDCG \citep{pobrotyn2021neuralndcg}) and classification losses (\eg BCE \citep{kang2018self} and SL \citep{wu2024effectiveness}) across all tasks, demonstrating its versatility for general IR applications.

% END: APPLICATION TO OTHER IR TASKS --------------------------------

% FIGURE: NOISE ROBUSTNESS ------------------------------------------
\begin{figure}[t]
    \centering
    \includegraphics[width=0.47\textwidth]{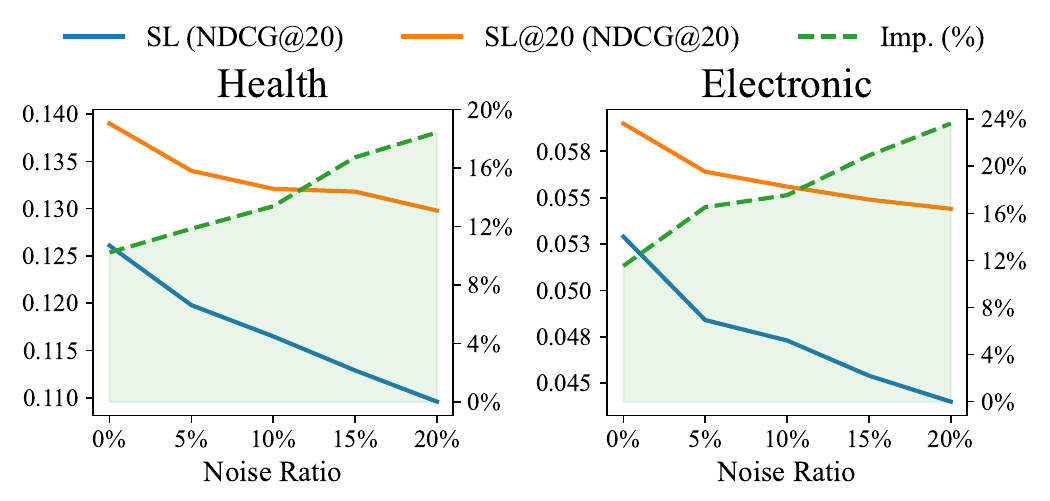}
    \caption{NDCG@20 performance of SL@$K$ compared with SL under varying ratios of imposed false positive instances.}
    % "Noise Ratio" denotes the ratio of false positive instances. "Imp." indicates the improvement of SL@$K$ over SL.
    \label{fig:slatk_noise}
    \Description{Noise robustness of SL@$K$.}
\end{figure}
% END FIGURE: NOISE ROBUSTNESS --------------------------------------

% TABLE: LTR EXPERIMENTS --------------------------------------------
\begin{table}[t]
    \centering
    \caption{LTR results on WEB10K, WEB30K \citep{DBLP:journals/corr/QinL13}, and Istella \citep{dato2016fast} datasets (metrics: NDCG@5).}
    \label{tab:ltr-results}
    \begin{tabular}{l|ccc}
    \Xhline{1.2pt}
    \multicolumn{1}{c|}{\textbf{Loss}} & 
    \textbf{WEB10K} & 
    \textbf{WEB30K} & 
    \textbf{Istella} \bigstrut\\
    \Xhline{1pt}
    ListMLE \citep{xia2008listwise} & 0.4145 & 0.4433 & 0.5671 \\
    ListNet \cite{cao2007learning} & 0.4225 & \uline{0.4594} & 0.6290 \bigstrut[t]\\
    RankNet \citep{burges2005learning} & 0.4253 & 0.4426 & 0.6189 \\
    LambdaLoss@5 \citep{jagerman2022optimizing} & 0.4320 & 0.4496 & 0.5860 \\
    NeuralNDCG \citep{pobrotyn2021neuralndcg} & \uline{0.4338} & 0.4524 & 0.5823 \\
    SL \citep{wu2024effectiveness} & 0.4310 & 0.4552 & \uline{0.6327} \\
    \textbf{SL@5 (Ours)} & \textbf{0.4633} & \textbf{0.4895} & \textbf{0.6412} \bigstrut[b]\\
    \hline
    \textcolor{red}{\textbf{Imp. \%}}   & \textcolor{red}{\textbf{+6.80\%}} & \textcolor{red}{\textbf{+6.55\%}} & \textcolor{red}{\textbf{+1.34\%}} \bigstrut\\
    \Xhline{1.2pt}
    \end{tabular}
\end{table}
% END: LTR EXPERIMENTS ----------------------------------------------

% END: EXPERIMENTS --------------------------------------------------
%%%%%%%%%%%%%%%%%%%%%%%%%%%%%%%%%%%%%%%%%%%%%%%%%%%%%%%%%%%%%%%%%%%%%

%%%%%%%%%%%%%%%%%%%%%%%%%%%%%%%%%%%%%%%%%%%%%%%%%%%%%%%%%%%%%%%%%%%%%
% RELATED WORK ------------------------------------------------------
\section{Related Work} \label{sec:related_work}

\begin{comment}
% RECOMMENDATION MODELS ---------------------------------------------
\noindentpar{Recommendation models.}
As a fundamental component of RS, recommendation models aim to predict users' preferences on items. The earliest works, such as Matrix Factorization (MF) \citep{koren2009matrix} and its variants (SVD \citep{deerwester1990indexing, bell2007modeling}, SVD++ \citep{koren2008factorization}, and NCF \citep{he2017ncf}), factorize the user-item preference matrix into the user and item embeddings. However, these models are limited to first-order interactions and struggle to capture higher-order relations. To address this, Graph Neural Networks (GNNs) \citep{wu2022graph,kipf2016semi,wang2019neural} have been introduced to model the graph structure of user-item interactions, leading to successful methods like LightGCN \citep{he2020lightgcn}, NGCF \citep{wang2019neural}, and APDA \citep{zhou2023adaptive}. Recent advancements, such as SGL \citep{wu2021self} and XSimGCL \citep{yu2023xsimgcl}, further enhance GNN-based RS by incorporating contrastive learning \citep{liu2021self,oord2018representation}.
% END: RECOMMENDATION MODELS ----------------------------------------
\end{comment}

% RECOMMENDATION LOSSES ---------------------------------------------

\noindentpar{Recommendation losses.}
Recommendation losses play a vital role in recommendation models optimization. The earliest works treat recommendation as a simple regression or binary classification problem, utilizing pointwise losses such as MSE \citep{he2017nfm} and BCE \citep{he2017ncf}. However, due to neglecting the ranking essence in RS, these pointwise losses usually result in inferior performance. To address this, pairwise losses such as BPR \citep{rendle2009bpr,lin2025recommendation} have been proposed. BPR aims to learn a partial order among items and serves as a surrogate for AUC. Following BPR, Softmax Loss (SL) \citep{wu2024effectiveness} extends the pairwise ranking to listwise by introducing the Plackett-Luce models \citep{luce1959individual,plackett1975analysis} or contrastive learning principles \citep{oord2018representation,chen2020simple}. SL has been proven to be an NDCG surrogate and achieves SOTA performance \citep{bruch2019analysis,yang2024psl}.

Recent works have further improved ranking losses from various approaches. For example, robustness enhancements to SL have been explored via Distributionally Robust Optimization (DRO) \citep{shapiro2017distributionally}, as seen in AdvInfoNCE \citep{zhang2024empowering}, BSL \citep{wu2023bsl} and PSL \citep{yang2024psl}. Other approaches directly optimize NDCG, including LambdaRank \citep{burges2006learning}, LambdaLoss \citep{wang2018lambdaloss}, SONG \citep{qiu2022large}, and PSL \citep{yang2024psl}. There are also works focusing on alternative surrogate approaches for NDCG, including ranking-based \citep{chapelle2010gradient}, Gumbel-based \citep{grover2019stochastic}, and neural-based \citep{rashed2021guided} methods.

Despite recent advancements, most ranking losses struggle in practical Top-$K$ recommendation, where only the top-ranked items are retrieved. Losses ignoring Top-$K$ truncation may face performance bottlenecks. To address this, LambdaLoss@$K$ and SONG@$K$ optimize NDCG@$K$ using elegant lambda weights and compositional optimization, respectively, but their performance in RS remains unsatisfactory, as discussed in \cref{subsec:performance_comparison}. Other methods, such as AATP \cite{boyd2012accuracy}, LLPAUC \citep{shi2024lower}, and OPAUC \citep{shi2023theories}, target metrics like Precision@$K$ and Recall@$K$, yet their theoretical connections to NDCG@$K$ remain unclear. While AATP employs a quantile technique, it lacks a theoretical foundation and suffers from inefficiency issues, making it impractical for RS. LLPAUC and OPAUC rely on complex adversarial training, potentially limiting their effectiveness and applicability.

% END: RELATED WORK -------------------------------------------------
%%%%%%%%%%%%%%%%%%%%%%%%%%%%%%%%%%%%%%%%%%%%%%%%%%%%%%%%%%%%%%%%%%%%%

% TABLE: SEQREC EXPERIMENTS -----------------------------------------
\begin{table}[t]
    \centering
    \caption{SeqRec results on Beauty and Games \citep{he2016ups,mcauley2015image} datasets.}
    \label{tab:seqrec-results}
    \begin{tabular}{l|cc|cc}
    \Xhline{1.2pt}
    \multicolumn{1}{c|}{\multirow{2}[4]{*}{\textbf{Loss}}} & 
    \multicolumn{2}{c|}{\textbf{Beauty}} & 
    \multicolumn{2}{c}{\textbf{Games}} \bigstrut\\
    \cline{2-5}
    & Hit@20 & NDCG@20 & Hit@20 & NDCG@20 \bigstrut\\
    \Xhline{1pt}
    BCE \citep{kang2018self}    & 0.1130 & 0.0484 & 0.1577 & 0.0671 \bigstrut[t]\\
    SL \citep{wu2024effectiveness}     & \uline{0.1578} & \uline{0.0766} & \uline{0.2243} & \uline{0.1024} \\
    \textbf{SL@20 (Ours)} & \textbf{0.1586} & \textbf{0.0780} & \textbf{0.2283} & \textbf{0.1045} \bigstrut[b]\\
    \hline
    \textcolor{red}{\textbf{Imp. \%}} & \textcolor{red}{\textbf{+0.51\%}} & \textcolor{red}{\textbf{+1.82\%}} & \textcolor{red}{\textbf{+1.78\%}} & \textcolor{red}{\textbf{+2.05\%}} \bigstrut\\
    \Xhline{1.2pt}
    \end{tabular}
\end{table}
% END: SEQREC EXPERIMENTS -------------------------------------------

% TABLE: LP EXPERIMENTS ---------------------------------------------
\begin{table}[t]
    \centering
    \caption{LP results on Cora and Citeseer \citep{sen2008collective} datasets.}
    \label{tab:lp-results}
    \begin{tabular}{l|cc|cc}
    \Xhline{1.2pt}
    \multicolumn{1}{c|}{\multirow{2}[4]{*}{\textbf{Loss}}} & 
    \multicolumn{2}{c|}{\textbf{Cora}} & 
    \multicolumn{2}{c}{\textbf{Citeseer}} \bigstrut\\
    \cline{2-5}
    & Hit@20 & MRR & Hit@20 & MRR \bigstrut\\
    \Xhline{1pt}
    BCE \citep{kang2018self}     & 0.3643 & 0.1482 & 0.3560 & 0.1424 \bigstrut[t]\\
    SL \citep{wu2024effectiveness}     & \uline{0.4668} & \uline{0.1772} & \uline{0.4989} & \uline{0.1942} \\
    \textbf{SL@20 (Ours)} & \textbf{0.4839} & \textbf{0.1812} & \textbf{0.5099} & \textbf{0.1963} \bigstrut[b]\\
    \hline
    \textcolor{red}{\textbf{Imp. \%}} & \textcolor{red}{\textbf{+3.65\%}} & \textcolor{red}{\textbf{+2.25\%}} & \textcolor{red}{\textbf{+2.20\%}} & \textcolor{red}{\textbf{+1.08\%}} \bigstrut\\
    \Xhline{1.2pt}
    \end{tabular}
\end{table}
% END: LINK PREDICTION EXPERIMENTS ----------------------------------

%%%%%%%%%%%%%%%%%%%%%%%%%%%%%%%%%%%%%%%%%%%%%%%%%%%%%%%%%%%%%%%%%%%%%
% CONCLUSION AND FUTURE WORK ----------------------------------------
\section{Conclusion and Future Directions} \label{sec:conclusion_future_work}

This work introduces SoftmaxLoss@$K$ (SL@$K$), a novel recommendation loss tailored for optimizing NDCG@$K$. SL@$K$ employs a quantile-based technique to address the Top-$K$ truncation challenge and derives a smooth approximation to tackle the discontinuity issue. We establish a tight bound between SL@$K$ and NDCG@$K$, demonstrating its theoretical effectiveness. Beyond its theoretical soundness, SL@$K$ offers a concise form, introducing only quantile-based weights atop the conventional Softmax Loss, making it both easy to implement and computationally efficient. 

Looking ahead, a promising direction is to develop incremental quantile estimation methods to further improve the efficiency of SL@$K$ and enable incremental learning in RS. Additionally, since Top-$K$ metrics are widely used, further exploring the application of SL@$K$ in other domains, such as multimedia retrieval, question answering, and anomaly detection, is valuable.

% END: CONCLUSION AND FUTURE WORK -----------------------------------------------------------------
%%%%%%%%%%%%%%%%%%%%%%%%%%%%%%%%%%%%%%%%%%%%%%%%%%%%%%%%%%%%%%%%%%%%%%%%%%%%%%%%%%%%%%%%%%%%%%%%%%%

%%%%%%%%%%%%%%%%%%%%%%%%%%%%%%%%%%%%%%%%%%%%%%%%%%%%%%%%%%%%%%%%%%%%%
% ACKNOWLEDGMENTS ---------------------------------------------------

% The acknowledgments section is defined using the "acks" environment
% (and NOT an unnumbered section). This ensures the proper
% identification of the section in the article metadata, and the
% consistent spelling of the heading.
\begin{acks}
    This work is supported by the Zhejiang Province "JianBingLingYan+X" Research and Development Plan (2025C02020).
\end{acks}

% END: ACKNOWLEDGMENTS ----------------------------------------------
%%%%%%%%%%%%%%%%%%%%%%%%%%%%%%%%%%%%%%%%%%%%%%%%%%%%%%%%%%%%%%%%%%%%%

%%%%%%%%%%%%%%%%%%%%%%%%%%%%%%%%%%%%%%%%%%%%%%%%%%%%%%%%%%%%%%%%%%%%%
% REFERENCES --------------------------------------------------------

\clearpage
\bibliographystyle{ACM-Reference-Format}
\balance
\bibliography{references}

% END: REFERENCES ---------------------------------------------------
%%%%%%%%%%%%%%%%%%%%%%%%%%%%%%%%%%%%%%%%%%%%%%%%%%%%%%%%%%%%%%%%%%%%%

%%%%%%%%%%%%%%%%%%%%%%%%%%%%%%%%%%%%%%%%%%%%%%%%%%%%%%%%%%%%%%%%%%%%%
% APPENDICES --------------------------------------------------------
\clearpage
\appendix

% SETTING APPENDIX SECTION NUMBERING --------------------------------
% set counter to 0, and format them as Section.No
% NOT number figure and table within sections
% \numberwithin{figure}{section}
% \numberwithin{table}{section}
% \renewcommand{\thefigure}{\thesection.\arabic{figure}}
% \renewcommand{\thetable}{\thesection.\arabic{table}}

%%%%%%%%%%%%%%%%%%%%%%%%%%%%%%%%%%%%%%%%%%%%%%%%%%%%%%%%%%%%%%%%%%%%%
% APPENDIX: ANALYSIS OF NDCG@K SURROGATE LOSSES ---------------------
\section{Analysis of \texorpdfstring{NDCG@$K$}{NDCG@K} Surrogate Losses} \label{app:gradient_vanishing}

In this section, we provide additional analysis of NDCG@$K$ surrogate losses, including LambdaLoss@$K$ \citep{jagerman2022optimizing} and SONG@$K$ \citep{qiu2022large}. We investigate their gradient distributions, which may shed light on their training instability and underperformance in RS.

\noindentpar{LambdaLoss@$K$ \citep{jagerman2022optimizing}.}
LambdaLoss@$K$ incorporates the lambda weights \citep{burges2006learning,wang2018lambdaloss} to optimize NDCG@$K$, and has shown promising results in document retrieval \citep{liu2009learning}. In recommendation scenarios, LambdaLoss@$K$ can be viewed as the following BPR-like \citep{rendle2009bpr} loss:
\begin{equation} \label{eq:lambdalossatk}
    \mathcal{L}_{\text{LambdaLoss@}K}(u) = \sum_{i \in \mathcal{P}_u, j \in \mathcal{N}_u} \mu_{uij} \cdot \sigma_{\text{softplus}}(d_{uij}) ,
\end{equation}
where $\sigma_{\text{softplus}}(x) = \log(1 + e^x)$ is the softplus activation function, and $\mu_{uij}$ is the lambda weight, which is defined as
\begin{equation} \label{eq:lambdalossatk-weight}
    \mu_{uij} = \begin{cases}
        \eta_{uij} \left( 1 - \dfrac{1}{\log_2(\pi_{uij} + 1)} \right)^{-1} &, \text{if } \pi_{uij} \coloneqq \max\{\pi_{ui}, \pi_{uj}\} > K , \\
        \eta_{uij} &, \text{otherwise} ,
    \end{cases}
\end{equation}
and
\begin{equation} \label{eq:lambdalossatk-eta}
    \eta_{uij} = \frac{1}{\log_2(|\pi_{ui} - \pi_{uj}| + 1)} - \frac{1}{\log_2(|\pi_{ui} - \pi_{uj}| + 2)} .
\end{equation}
The term $\eta_{uij}$ represents the difference between the reciprocals of adjacent discount factors $1 / \log_2(\cdot)$, causing the lambda weight $\mu_{uij}$ to rapidly decay to $0$ as $|\pi_{ui} - \pi_{uj}|$ increases, \ie as the ranking positions of the two items diverge. Consequently, during training, only negative items ranked close to positive items receive significant gradients, while most negative items are undertrained. This behavior is counter-intuitive and leads to inefficient training.

\cref{fig:lambda_weight} illustrates the lambda weights $\mu_{uij}$ for the Top-20 items in LambdaLoss@$5$. Results show that with the ranking differences $|\pi_{ui} - \pi_{uj}|$ approaching 20, $\mu_{uij}$ nearly vanishes (\ie $\mu_{uij} < 0.005$). This implies that in an RS with $|\mathcal{I}|$ items, at most $40|\mathcal{I}|$ lambda weights exceed $0.005$. This accounts for less than $1\%$ of the total $|\mathcal{I}|^2$ lambda weights in practical RS scenarios, which typically involve over 4K items. This clearly highlights the gradient vanishing issue in LambdaLoss@$K$. On the other hand, a small fraction (\ie $1 / |\mathcal{I}|$) of lambda weights exceed $0.3$, dominating the gradients and disproportionately influencing the optimization direction. This imbalance increases training instability and hinders effective optimization. Moreover, this means that we can not use a larger learning rate to mitigate the issue of gradient vanishing during sampling estimation. Sampling these few instances with large lambda weights occasionally can cause numerical explosions, further complicating optimization. Overall, the extreme long-tailed distribution of lambda weights poses significant challenges to optimization, which cannot be resolved by simply tuning the learning rate.

% FIGURE: LAMBDA WEIGHT IN LAMBDALOSS@K -----------------------------
\begin{figure}[t]
    \centering
    \includegraphics[width=0.49\textwidth]{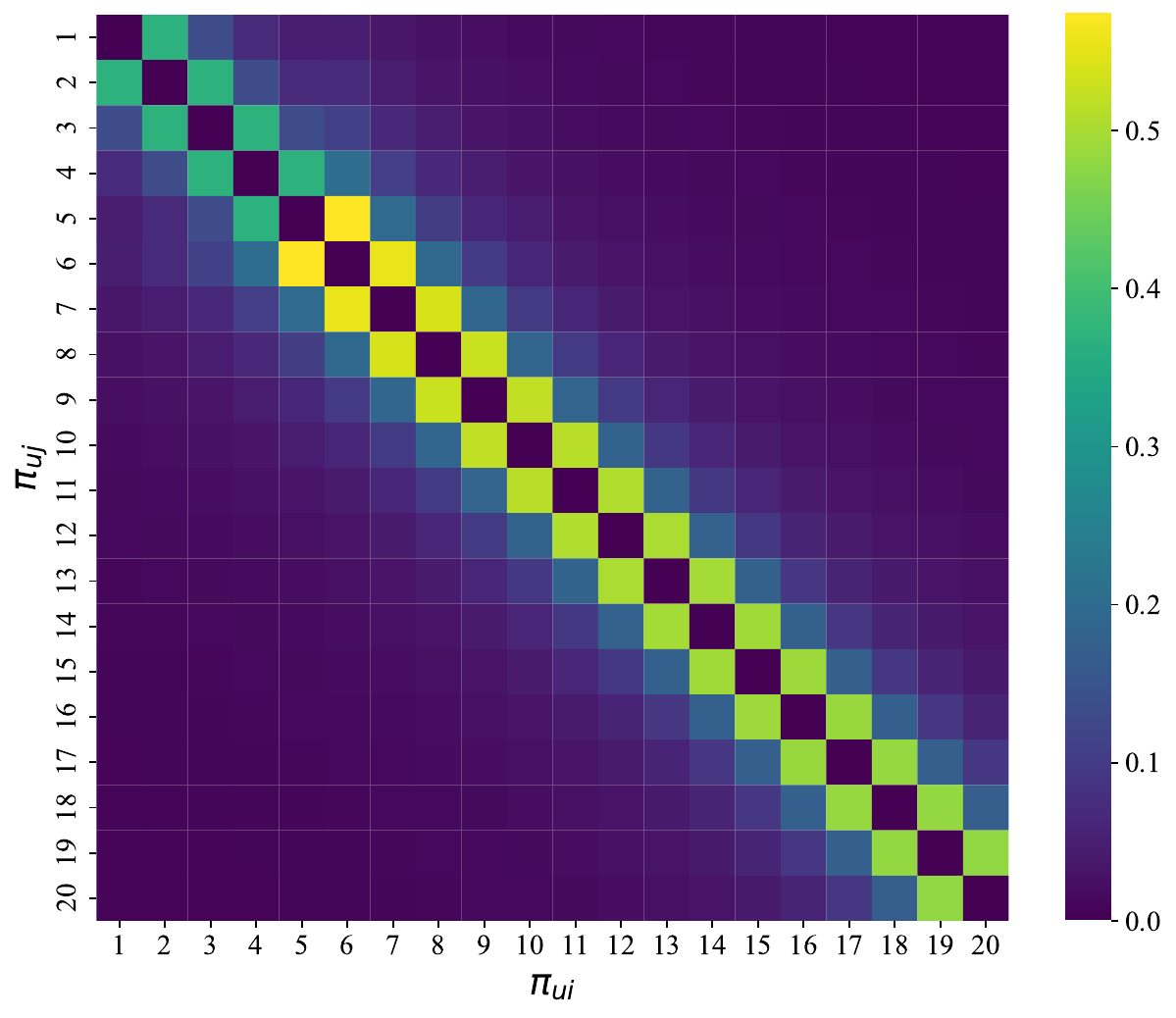}
    \caption{The lambda weight $\mu_{uij}$ ($\geq 0.005$) of Top-20 items in LambdaLoss@$5$.}
    \label{fig:lambda_weight}
    \Description{The lambda weights $\mu_{uij}$ of Top-20 items in LambdaLoss@$5$.}
\end{figure}
% END FIGURE: LAMBDA WEIGHT IN LAMBDALOSS@K -------------------------

Based on the above analysis, we summarize the limitations of LambdaLoss@$K$ in RS as follows:

\begin{itemize}[topsep=3pt,leftmargin=10pt,itemsep=0pt]
    \item \emph{Computational inefficiency}: The calculation of lambda weights $\mu_{uij}$ requires accurate item rankings $\pi_{ui}$ and $\pi_{uj}$, which necessitates a full sorting of items for each user at every iteration, leading to a high computational complexity of $O(|\mathcal U| |\mathcal I| \log |\mathcal I|)$. Therefore, LambdaLoss@$K$ is impractical for large-scale RS with extensive user and item spaces (\cf \cref{fig:execution_time_comparison,tab:slatk_versus_lambda}).
    \item \emph{Gradient instability}: Due to the large item space and the sparsity of positive instances in RS, most lambda weights $\mu_{uij}$ are extremely small, since $|\pi_{ui} - \pi_{uj}|$ tends to be large for negative item $j$ that is ranked far from positive item $i$. In our experiments, we found that 99\% of lambda weights are less than 0.005, suggesting that the gradients of LambdaLoss@$K$ are dominated by a few training instances, while others contribute negligibly (\cf \cref{fig:gradient_distribution_comparison,fig:lambda_weight}). This increases training instability and hampers the data utilization efficiency.
    \item \emph{Sampling-unfriendly}: To address the computational inefficiency, one may resort to Monte Carlo sampling \citep{metropolis1953equation} for LambdaLoss@$K$ (\cf \cref{subapp:sample_ranking_estimation}). Specifically, for each user, we sample and rank $N$ items, and the sampled ranking positions are multiplied by $|\mathcal{I}| / N$ to approximate the rankings $\pi_{ui}$ and $\pi_{uj}$ in lambda weights $\mu_{uij}$. While this strategy reduces the complexity to $O(|\mathcal U| N \log N)$, the performance of the sampling-based LambdaLoss@$K$ is significantly degraded, which can be attributed to its high sensitivity to estimation errors. Specifically, the instances with larger lambda weights $\mu_{uij}$, which contribute significantly to training, tend to have smaller $|\pi_{ui} - \pi_{uj}|$. Therefore, even small estimation errors in rankings can lead to substantial deviations in lambda weights, resulting in a performance degradation of over 30\% in our experiments (\cf \cref{tab:slatk_versus_lambda}).
\end{itemize}

\noindentpar{SONG@$K$ \citep{qiu2022large}.}
SONG@$K$ introduces the bilevel compositional optimization technique \citep{wang2017stochastic} to optimize NDCG@$K$. Specifically, SONG@$K$ first smooths the DCG@$K$ by approximating $\pi_{ui}$ with a continuous function $g_{ui} = \sum_{j \in \mathcal{I}} \sigma_{\text{relu}}(d_{uij})$, where $\sigma_{\text{relu}}(x) = \max(0, x + 1)^2$ is a surrogate for the Heaviside step function $\delta(x)$. Subsequently, SONG@$K$ proposes a stochastic gradient estimator $G_{\text{SONG@}K}(u)$ for $\nabla \mathrm{DCG}@K(u)$ as follows:
\begin{equation}
    G_{\text{SONG@}K}(u) = -\frac{1}{\log 2} \sum_{i \in \mathcal{P}_u} \frac{\sigma_{\text{sigmoid}}(s_{ui} - \beta_{u}^{K})}{\log_2^2(g_{ui} + 1) \cdot (g_{ui} + 1)} \nabla \hat{g}_{ui} ,
\end{equation}
where $g_{ui}$ is maintained as an exponential moving average \citep{kingma2014adam} of $\hat{g}_{ui}$, and $\beta_{u}^{K}$ is the Top-$K$ quantile defined similarly to \cref{eq:quantile} and updated by the quantile regression \citep{koenker2005quantile,hao2007quantile,boyd2012accuracy} (\cf \cref{subapp:quantile_regression}). 

Although SONG@$K$ has theoretical guarantees for optimizing NDCG@$K$, it encounters severe \emph{training instability} due to the highly long-tailed gradients. The underlying reason is that $g_{ui} \in [0, 9|\mathcal{I}|]$ is nearly unbounded in real-world RS, where $d_{uij} \in [-2, 2]$. This results in substantial variance in the gradients, leading to extremely diverse gradient distribution. Empirical results validate this issue, where SONG@$K$ exhibits significantly diverse gradients spanning over seven orders of magnitude (\cf \cref{fig:gradient_distribution_comparison}). This skewed gradient distribution hampers the optimization process, as the model is overwhelmed by the few instances with large gradients, while the majority of instances remain undertrained, which limits the training efficiency and stability. In contrast, SL@$K$ has a much more moderate gradient distribution due to (i) the bounded quantile-based weight $w_{ui}$, and (ii) the Softmax-normalized SL gradients $\nabla \mathcal{L}_{\text{SL}}(u, i)$, thus ensuring training stability.

% END: APPENDIX: ANALYSIS OF NDCG@K SURROGATE LOSSES ----------------
%%%%%%%%%%%%%%%%%%%%%%%%%%%%%%%%%%%%%%%%%%%%%%%%%%%%%%%%%%%%%%%%%%%%%

%%%%%%%%%%%%%%%%%%%%%%%%%%%%%%%%%%%%%%%%%%%%%%%%%%%%%%%%%%%%%%%%%%%%%
% APPENDIX: ADDITIONAL THEORETICAL ANALYSIS OF SL@K -----------------
\section{Additional Theoretical Analysis of \texorpdfstring{SL@$K$}{SL@K}} \label{app:theoretical_analysis}

In this section, we provide a detailed additional theoretical analysis of our proposed SL@$K$ loss. Particularly, \cref{subapp:activation_functions} discusses the rationale behind the selection of the activation functions in SL@$K$ (\cf \cref{subsubsec:smooth_surrogate_for_ndcgatk}). \cref{subapp:proof_slatk} presents the proof of \cref{thm:slatk} (\cf \cref{subsubsec:theoretical_properties_of_slatk}). Finally, \cref{subapp:gradient_analysis} provides a gradient analysis of SL@$K$ to support the noise robustness analysis in \cref{subsubsec:theoretical_properties_of_slatk}.

% APPENDIX: ACTIVATION FUNCTIONS IN SL@K ----------------------------
\subsection{Activation Functions in \texorpdfstring{SL@$K$}{SL@K}} \label{subapp:activation_functions}

In \cref{eq:slatk}, we smooth SL@$K$ by two conventional activation functions, \ie the sigmoid function $\sigma_w(x) = 1 / (1 + \exp(-x / \tau_w))$ and the exponential function $\sigma_d(x) = \exp(x / \tau_d)$, where $\tau_w$ and $\tau_d$ are the temperature parameters. In this section, we will discuss the rationale behind the selection of these activation functions, as summarized in \cref{tab:activation_functions}.

% TABLE: ACTIVATION FUNCTIONS CHOICES IN SL@K -----------------------
\begin{table}[t]
    \centering
    \caption{Different activation functions choices in SL@$K$.}
    \label{tab:activation_functions}
    \small
    \begin{tabularx}{0.48\textwidth}{c|YY}
        \Xhline{1.2pt}
        $(\sigma_w, \sigma_d)$ & \textbf{Sigmoid} & \textbf{Exponential} \bigstrut\\
        \Xhline{1.0pt}
        \textbf{Sigmoid} & \ding{55} (not upper bound) & \textcolor{red}{\ding{51}} \textbf{\textcolor{red}{(Our SL@$K$ loss)}} \bigstrut[t]\\
        \textbf{Exponential} & \ding{55} (not upper bound) & \ding{55} (not tight enough) \bigstrut[b]\\
        \Xhline{1.2pt}
    \end{tabularx}
\end{table}
% END TABLE: ACTIVATION FUNCTIONS CHOICES IN SL@K -------------------

\noindentparnoline{Case 1: $(\sigma_w, \sigma_d) = $ (Sigmoid, Sigmoid).}  
To achieve an upper bound of $-\log \mathrm{DCG}@K$ from \cref{eq:slatk-final} to \cref{eq:slatk}, $\sigma_d(\cdot)$ must serve as an upper bound of the Heaviside step function $\delta(\cdot)$, ensuring that the surrogate loss (\ref{eq:slatk}) correctly bounds the target objective. While the sigmoid function provides a close approximation to $\delta(\cdot)$, it does not satisfy the requirement of being an upper bound. As a result, selecting the sigmoid function for $\sigma_d(\cdot)$ would fail to establish the upper bound of $-\log \mathrm{DCG}@K$, thereby undermining the theoretical guarantees.

\noindentparnoline{Case 2: $(\sigma_w, \sigma_d) = $ (Sigmoid, Exponential).}  
This configuration corresponds to our proposed SL@$K$ loss. In this case, SL@$K$ achieves a tight upper bound for $-\log \mathrm{DCG}@K$, as proven in \cref{thm:slatk} and \cref{subapp:proof_slatk}. This tightness arises from the exponential function's ability to serve as an effective upper bound for $\delta(\cdot)$, while the sigmoid function provides a close approximation that ensures stability and smoothness in optimization (\cf Case 4).

\noindentparnoline{Case 3: $(\sigma_w, \sigma_d) = $ (Exponential, Sigmoid).} Similar to Case 1, the sigmoid function as $\sigma_d(\cdot)$ does not serve as an upper bound of $\delta(\cdot)$, hindering the theoretical guarantees of SL@$K$.

\noindentparnoline{Case 4: $(\sigma_w, \sigma_d) = $ (Exponential, Exponential).}  
In this case, SL@$K$ indeed serves as an upper bound of $-\log \mathrm{DCG}@K$, but the exponential function does not approximate the Heaviside step function $\delta(\cdot)$ tightly, resulting in a loose upper bound. Specifically, the difference between the sigmoid function $1 / (1 + \exp(-x / \tau_w))$ and $\delta(x)$ is $1 / (1 + \exp(|x| / \tau_w)) \approx \exp(-|x| / \tau_w)$ when $\tau_w$ is small. In contrast, the difference between the exponential function $\exp(x / \tau_d)$ and $\delta(x)$ is $\exp(x / \tau_d) - 1 \approx x / \tau_d$ when $x > 0$ and $\tau_d$ is large. This shows that the sigmoid function provides a better approximation of $\delta(\cdot)$. Moreover, although the sigmoid function is not an upper bound of $\delta(\cdot)$, it can still be used in $\sigma_w(\cdot)$ with a tight upper bound guarantee, as proven in \cref{thm:slatk}.

% APPENDIX: PROOF OF THEOREM \cref{thm:slatk} -----------------------
\subsection{Proof of \texorpdfstring{\cref{thm:slatk}}{Theorem SL@K}} \label{subapp:proof_slatk}

\RestatableThmSlatk*

\begin{proof}[Proof of \cref{thm:slatk}]
    Recall that we have derived inequality (\ref{eq:slatk-derivation-4}) in \cref{subsec:softmaxlossatk}, \ie
    \begin{equation} \label{eq:slatk-proof-1}
        -\log \mathrm{DCG}@K(u) \leq \sum_{i \in \mathcal{P}_u} \frac{\mathbb{I}(s_{ui} \geq \beta_{u}^{K})}{H_{u}^{K}} \log \pi_{ui} - \log H_{u}^{K} .
    \end{equation}
    By the assumption of $H_{u}^{K} \geq 1$, the last term $-\log H_{u}^{K}$ can be relaxed, resulting in
    \begin{equation} \label{eq:slatk-proof-2}
        -\log \mathrm{DCG}@K(u) \leq \sum_{i \in \mathcal{P}_u} \frac{\mathbb{I}(s_{ui} \geq \beta_{u}^{K})}{H_{u}^{K}} \log \pi_{ui} .
    \end{equation}
    Recall again that
    \begin{equation} \label{eq:slatk-proof-3}
        \pi_{ui} = \sum_{j \in \mathcal{I}} \mathbb{I}(s_{uj} \geq s_{ui}) = \sum_{j \in \mathcal{I}} \delta(d_{uij}) \leq \sum_{j \in \mathcal{I}} \sigma_d(d_{uij}) ,
    \end{equation}
    where $d_{uij} = s_{uj} - s_{ui}$, $\delta(x) = \mathbb{I}(x \geq 0)$ is the Heaviside step function, and $\sigma_d(x) = \exp(x / \tau_d)$ is the exponential function serving as a smooth upper bound of $\delta(x)$ for any $x$ and temperature $\tau_d > 0$. Therefore, \cref{eq:slatk-proof-2} can be further relaxed as
    \begin{equation} \label{eq:slatk-proof-4}
        -\log \mathrm{DCG}@K(u) \leq \sum_{i \in \mathcal{P}_u} \frac{1}{H_{u}^{K}} \delta(s_{ui} - \beta_{u}^{K}) \log \left( \sum_{j \in \mathcal{I}} \sigma_d(d_{uij}) \right) .
    \end{equation}
    In the following, we discuss two cases of $H_{u}^{K}$ to complete the proof.

    \noindentparnoline{Case 1.} In the case of $H_{u}^{K} > 1$, we have 
    \begin{equation} \label{eq:slatk-proof-5}
        \frac{1}{H_{u}^{K}} \delta(s_{ui} - \beta_{u}^{K}) \leq \frac{1}{2} \delta(s_{ui} - \beta_{u}^{K}) \leq \sigma_w(s_{ui} - \beta_{u}^{K}) ,
    \end{equation}
    where $\sigma_w(x) = 1 / (1 + \exp(-x / \tau_w))$ is the sigmoid function with temperature $\tau_w > 0$. The last inequality in \cref{eq:slatk-proof-5} holds due to $\sigma_w(s_{ui} - \beta_{u}^{K}) \geq \frac{1}{2}$ if $s_{ui} > \beta_{u}^{K}$. Therefore, by \cref{eq:slatk-proof-4,eq:slatk-proof-5}, we have
    \begin{equation} \label{eq:slatk-proof-6}
        -\log \mathrm{DCG}@K(u) \leq \sum_{i \in \mathcal{P}_u} \sigma_w(s_{ui} - \beta_{u}^{K}) \log \left( \sum_{j \in \mathcal{I}} \sigma_d(d_{uij}) \right) .
    \end{equation}
    which exactly corresponds to the SL@$K$ loss $\mathcal{L}_{\text{SL@}K}(u)$ in \cref{eq:slatk}. Therefore, SL@$K$ serves as an upper bound of $-\log \mathrm{DCG}@K$ when $H_{u}^{K} > 1$.

    \noindentparnoline{Case 2.} In the case of $H_{u}^{K} = 1$, there only exists one positive item $i^* \in \mathcal{P}_u$ with $s_{ui^*} \geq \beta_{u}^{K}$. In this case, \cref{eq:slatk-proof-1} can be reduced to
    \begin{equation} \label{eq:slatk-proof-8}
        -\log \mathrm{DCG}@K(u) \leq \log \pi_{ui^*} \leq \log \left( \sum_{j \in \mathcal{I}} \sigma_d(d_{ui^*j}) \right) .
    \end{equation}
    Since $s_{ui^*} \geq \beta_{u}^{K}$, we have $\sigma_w(s_{ui^*} - \beta_{u}^{K}) \geq \frac{1}{2}$, which leads to
    \begin{equation} \label{eq:slatk-proof-9}
        -\frac{1}{2}\log \mathrm{DCG}@K(u) \leq \sigma_w(s_{ui^*} - \beta_{u}^{K}) \log \left( \sum_{j \in \mathcal{I}} \sigma_d(d_{ui^*j}) \right) \leq \mathcal{L}_{\textrm{SL@}K}(u) .
    \end{equation}
    Therefore, SL@$K$ serves an upper bound of $-\frac{1}{2}\log \mathrm{DCG}@K$ when $H_{u}^{K} = 1$. This completes the proof.
\end{proof}

\noindentpar{Discussion.} The condition in \cref{thm:slatk} is easy to satisfy in practice. For example, SL@20 achieves $H_{u}^{20} > 1$ for 53.32\%, 81.92\%, and 95.66\% of users within 5, 10, and 20 epochs training on Electronic dataset and MF backbone, respectively.

% APPENDIX: GRADIENT ANALYSIS AND NOISE ROBUSTNESS ------------------
\subsection{Gradient Analysis and Noise Robustness} \label{subapp:gradient_analysis}

SL@$K$ inherently possesses the denoising ability to resist the false positive noise (\eg misclicks), which is common in RS \citep{wen2019leveraging}. To theoretically analyze the noise robustness of SL@$K$, we conduct a gradient analysis as follows:
\begin{equation} \label{eq:slatk-gradient}
    \nabla_{\mathbf{u}} \mathcal{L}_{\text{SL@}K} = \sum_{i \in \mathcal{P}_u} w_{ui} \nabla_{\mathbf{u}} \mathcal{L}_{\text{SL}}(u, i) + \frac{1}{\tau_w} w_{ui} (1 - w_{ui}) \mathcal{L}_{\text{SL}}(u, i) \nabla_{\mathbf{u}} s_{ui} .
\end{equation}
Therefore, we can derive an upper bound of $\left\| \nabla_{\mathbf{u}} \mathcal{L}_{\text{SL@}K} \right\|$ as
\begin{equation} \label{eq:slatk-gradient-bound}
    \left\| \nabla_{\mathbf{u}} \mathcal{L}_{\text{SL@}K} \right\|
    \leq
    \sum_{i \in \mathcal{P}_u} w_{ui} \left( \left\| \nabla_{\mathbf{u}} \mathcal{L}_{\text{SL}}(u, i) \right\| + \frac{1}{\tau_w} \mathcal{L}_{\text{SL}}(u, i) \left\| \nabla_{\mathbf{u}} s_{ui} \right\| \right) .
\end{equation}
It's evident that the above gradient upper bound of SL@$K$ \wrt the user embedding $\mathbf{u}$ is controlled by the weight $w_{ui}$. For any false positive item $i$ with low preference score $s_{ui}$, $w_{ui}$ will be sufficiently small, which reduces its impact on the gradient. This analysis indicates that SL@$K$ is robust to false positive noise.

% END: APPENDIX: GRADIENT ANALYSIS AND NOISE ROBUSTNESS -------------

% END: APPENDIX: ADDITIONAL THEORETICAL ANALYSIS OF SL@K ------------
%%%%%%%%%%%%%%%%%%%%%%%%%%%%%%%%%%%%%%%%%%%%%%%%%%%%%%%%%%%%%%%%%%%%%

%%%%%%%%%%%%%%%%%%%%%%%%%%%%%%%%%%%%%%%%%%%%%%%%%%%%%%%%%%%%%%%%%%%%%
% APPENDIX: SAMPLE QUANTILE ESTIMATION ------------------------------
\section{Sample Quantile Estimation} \label{app:sample_quantile_estimation}

In \cref{subsubsec:quantile_estimation}, we propose a sample yet efficient Monte Carlo sampling strategy to estimate the Top-$K$ quantile $\beta_{u}^{K}$ for SL@$K$. In this section, we provide additional details on the sample quantile estimation technique. 

In \cref{subapp:quantile_estimation_error_bound}, we provide the proof of the \emph{estimation error bound} of the sample quantile technique, \ie \cref{thm:quantile_estimation} in \cref{subsubsec:quantile_estimation}. In \cref{subapp:sample_quantile_estimation_tricks}, we detail our proposed \emph{negative sampling trick} in \cref{subsubsec:quantile_estimation} to enhance the Top-$K$ quantile estimation efficiency. In \cref{subapp:quantile_regression}, we briefly introduce the \emph{quantile regression} technique, which can be used as an alternative to the sample quantile estimation (though no performance gain is observed in our experiments). In \cref{subapp:sample_ranking_estimation}, we discuss the \emph{sample ranking estimation} technique, which can be used to estimate the ranking position (though less effective in practice). Finally, in \cref{subapp:slatk_optimization}, we provide the detailed optimization algorithm for SL@$K$.

% FIGURE: SAMPLE QUANTILE ESTIMATION TRICKS -------------------------
\begin{figure*}[t]
    \centering
    \begin{subfigure}[b]{0.48\textwidth}
        \centering
        \includegraphics[width=\textwidth]{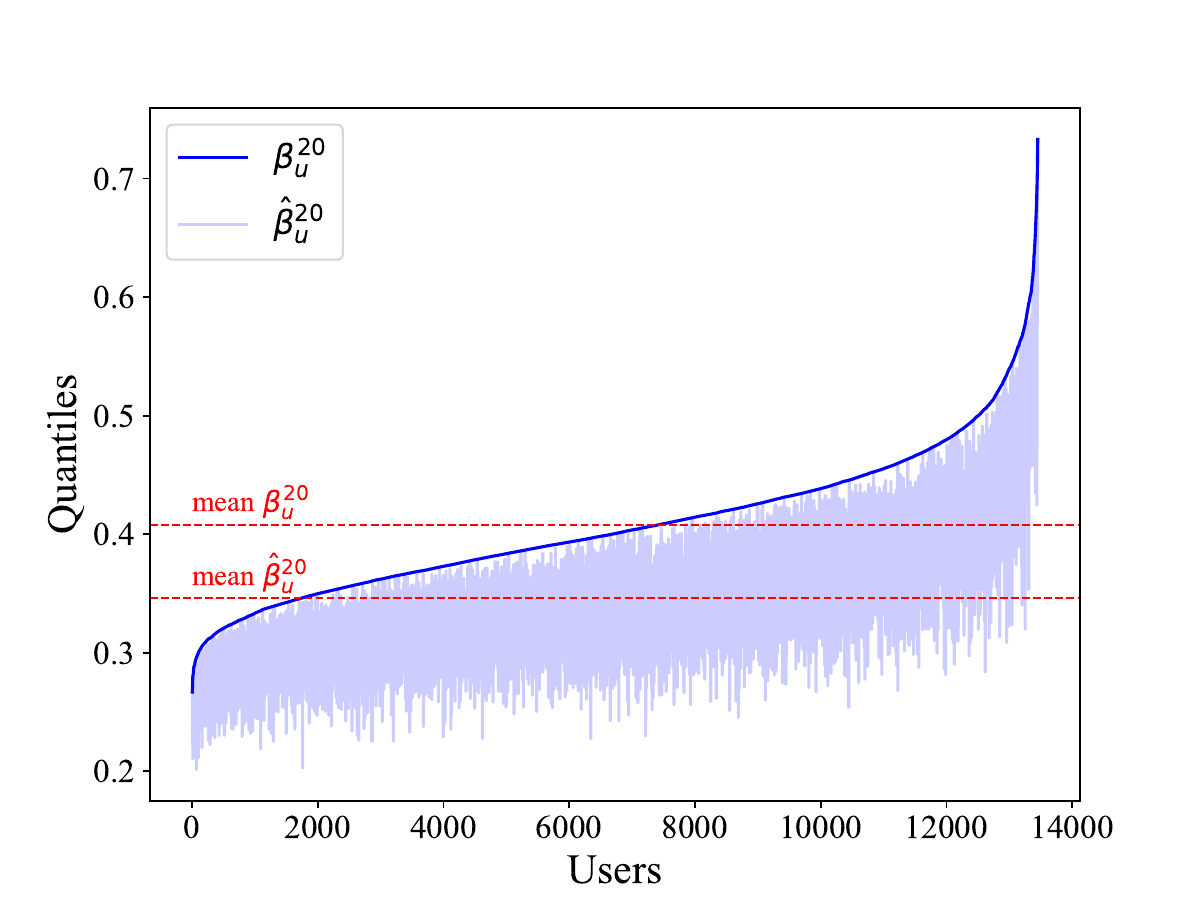}
        \caption{With negative sampling trick.}
        \label{fig:sample_quantile_estimation-trick}
    \end{subfigure}
    \begin{subfigure}[b]{0.48\textwidth}
        \centering
        \includegraphics[width=\textwidth]{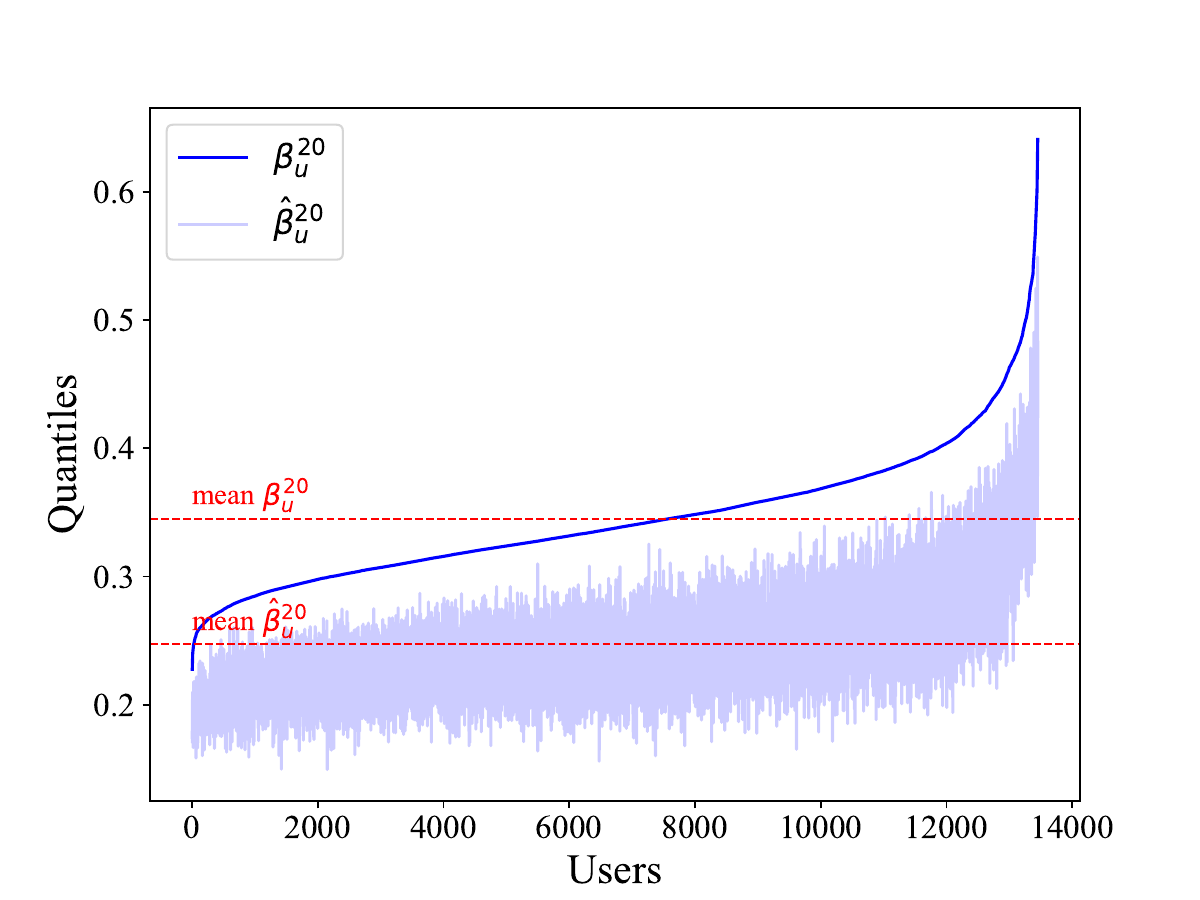}
        \caption{Without negative sampling trick.}
        \label{fig:sample_quantile_estimation-iid}
    \end{subfigure}
    \caption{Comparison of sample quantile estimation with and without the negative sampling trick.}
    \label{fig:sample_quantile_estimation-trick-compare}
    \Description{Comparison of sample quantile estimation with and without the negative sampling trick.}
\end{figure*}

\begin{figure*}[t]
    \centering
    \begin{subfigure}[b]{0.48\textwidth}
        \centering
        \includegraphics[width=\textwidth]{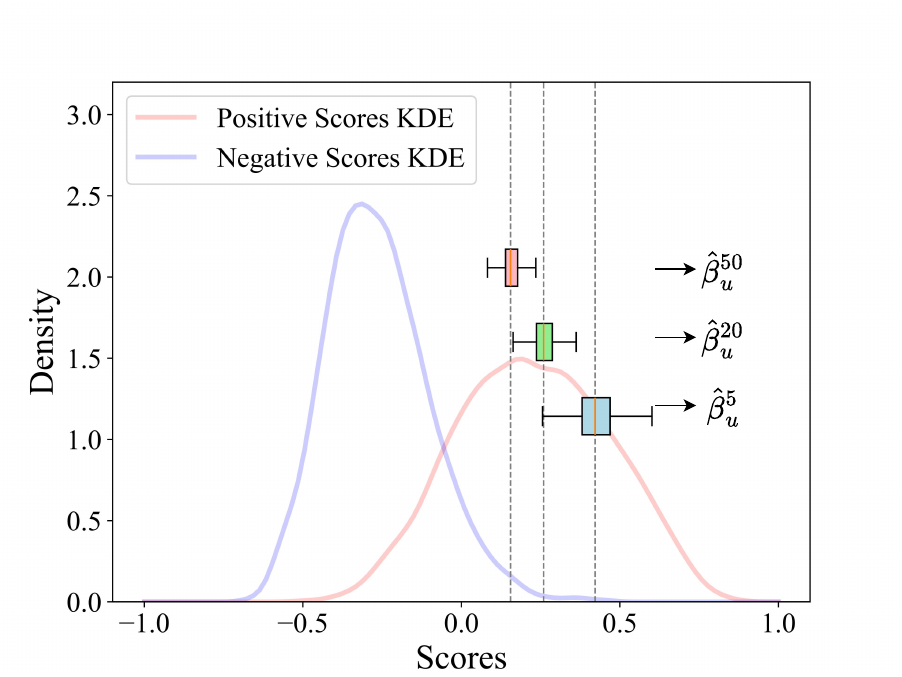}
        \caption{Estimated Top-$K$ quantile $\hat \beta_{u}^{K}$.}
        \label{fig:sample_quantile_estimation-compare-sample}
    \end{subfigure}
    \begin{subfigure}[b]{0.48\textwidth}
        \centering
        \includegraphics[width=\textwidth]{fig/quantile_scores_nosample.drawio.pdf}
        \caption{Ideal Top-$K$ quantile $\beta_{u}^{K}$.}
        \label{fig:sample_quantile_estimation-compare-ideal}
    \end{subfigure}
    \caption{Comparison of the estimated Top-$K$ quantile $\hat \beta_{u}^{K}$ with the ideal Top-$K$ quantile $\beta_{u}^{K}$.}
    \label{fig:sample_quantile_estimation-compare}
    \Description{Comparison of the estimated Top-$K$ quantile $\hat \beta_{u}^{K}$ with the ideal Top-$K$ quantile $\beta_{u}^{K}$.}
\end{figure*}
% END FIGURE: SAMPLE QUANTILE ESTIMATION TRICKS ---------------------

% APPENDIX: PROOF OF THEOREM \cref{thm:quantile_estimation} ---------
\subsection{Proof of \texorpdfstring{\cref{thm:quantile_estimation}}{Theorem Quantile Estimation}} \label{subapp:quantile_estimation_error_bound}

\RestatableThmQuantileEstimation*

To proof \cref{thm:quantile_estimation}, we first introduce the following lemma.

\begin{lemma}[Dvoretzky-Kiefer-Wolfowitz (DKW) inequality \citep{massart1990tight,bickel2015mathematical}] \label{lemma:dkw_inequality}
    For any \cdf $F(s)$ and the corresponding empirical \cdf $\hat{F}(s)$ with sample size $N$, given the sup-norm distance between $\hat{F}$ and $F$ defined as $\| \hat{F} - F \|_{\infty} = \sup_{s \in \mathbb{R}} \{|\hat{F}(s) - F(s)|\}$, we have
    \begin{equation} \label{eq:dkw_inequality}
        \Pr\left( \| \hat{F} - F \|_{\infty} > \varepsilon \right) \leq 2e^{-2N\varepsilon^2} .
    \end{equation}
\end{lemma}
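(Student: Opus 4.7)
The plan is to establish the DKW inequality in three stages: reduction to the uniform case, symmetrization to a one-sided bound, and Massart's sharp tail inequality for the one-sided empirical process.

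First, I would appeal to the probability integral transform. If $X_1,\ldots,X_N \overset{\iid}{\sim} F$, then $F(X_1),\ldots,F(X_N)$ are stochastically dominated by uniform $[0,1]$ variables, with equality in law when $F$ is continuous. A standard coupling argument shows
\[
\|\hat F_N - F\|_\infty \;\le\; \|\hat G_N - G\|_\infty,
\]
where $G(u)=u$ is the uniform CDF on $[0,1]$ and $\hat G_N$ its empirical counterpart. Atoms of $F$ are handled by a routine right-continuous modification of the transform. This reduces the problem to the canonical uniform setting, where the empirical process has a clean representation in terms of order statistics.

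Second, I would decompose the two-sided deviation into the two one-sided events,
\[
\Pr\bigl(\|\hat G_N - G\|_\infty > \varepsilon\bigr) \;\le\; \Pr\Bigl(\sup_{t}(\hat G_N(t) - t) > \varepsilon\Bigr) + \Pr\Bigl(\sup_{t}(t - \hat G_N(t)) > \varepsilon\Bigr),
\]
and observe that the symmetry $U \mapsto 1-U$ of the uniform law equates the two probabilities on the right. It therefore suffices to prove the one-sided inequality $\Pr(\sup_{t}(\hat G_N(t) - t) > \varepsilon) \le e^{-2N\varepsilon^2}$, after which the factor $2$ in the statement appears from the union bound.

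Third, and this is the technical heart of the proof, I would invoke Massart's sharp one-sided bound. Using the order statistics, $\sup_t(\hat G_N(t) - t) = \max_{1\le i \le N}(i/N - U_{(i)})$, so the event of interest becomes $\{\exists\, i : U_{(i)} < i/N - \varepsilon\}$. The argument proceeds by representing the uniform order statistics via normalized cumulative sums of i.i.d.\ exponentials, introducing the first-hitting index at which the deviation exceeds $\varepsilon$, and applying an optimally tilted Chernoff bound to that stopping time; the sharp exponent $2N\varepsilon^2$ emerges from the KL-divergence calculation at the optimal tilt. This is the main obstacle: weaker arguments such as union-bounding Hoeffding's inequality over an $\varepsilon$-grid of quantile points yield only a bound of the form $C(\varepsilon)\,e^{-2N\varepsilon^2}$ with $C(\varepsilon)=O(1/\varepsilon)$, which would still suffice for the downstream quantile-concentration result \cref{thm:quantile_estimation} up to a harmless adjustment of constants, but falls short of the sharp constant $2$ claimed in the lemma. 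For a self-contained exposition I would cite \citep{massart1990tight} for the sharp step and sketch the tilt calculation rather than reproducing it in full.
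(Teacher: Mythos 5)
The paper offers no proof of this lemma at all: it is imported verbatim as a classical result, citing Massart (1990) for the sharp constant and Bickel--Doksum for the textbook statement, and is used only as an ingredient in the proof of \cref{thm:quantile_estimation}. Your sketch is a correct account of the standard proof architecture (probability integral transform, union bound over the two one-sided events, Massart's tilted-Chernoff argument for the sharp exponent), and since you too defer the one technically hard step to \citep{massart1990tight}, your attempt rests on exactly the same external result as the paper and is consistent with it.
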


The estimation error bound of the sample quantile technique (\cf \cref{thm:quantile_estimation}) can be simply derived from the DKW inequality (\cf \cref{lemma:dkw_inequality}) as follows.

\begin{proof}[Proof of \cref{thm:quantile_estimation}]
    Consider the error between $\hat \theta_u^p$ and $\theta_u^p$ , we have
    \begin{equation} \label{eq:quantile_estimation_proof-1}
        \begin{aligned}
            \Pr(\hat \theta_u^p > \theta_u^p + \varepsilon) 
            &= \Pr(p > \hat{F}_u(\theta_u^p + \varepsilon))	\\
            &= \Pr(F_u(\theta_u^p + \varepsilon) - \hat{F}_u(\theta_u^p + \varepsilon) > F_u(\theta_u^p + \varepsilon) - p)	\\
            &\leq \Pr(\| \hat{F}_u - F_u \|_{\infty} > \delta_{\varepsilon}^+) ,
        \end{aligned}    
    \end{equation}
    where $\delta_{\varepsilon}^+ = F_u(\theta_u^p + \varepsilon) - p$. Analogously, let $\delta_{\varepsilon}^- = p - F_u(\theta_u^p - \varepsilon)$ , we have
    \begin{equation}
        \Pr(\hat \theta_u^p < \theta_u^p - \varepsilon) \leq \Pr(\| \hat{F}_u - F_u \|_{\infty} > \delta_{\varepsilon}^-) .
    \end{equation}
    Therefore, we have the two side error bound (\cf \cref{eq:quantile_estimation_error}) by setting $\delta_{\varepsilon} = \min\{\delta_{\varepsilon}^+, \delta_{\varepsilon}^-\}$, which completes the proof.
\end{proof}
% END: APPENDIX: PROOF OF THEOREM \cref{thm:quantile_estimation} ----

% APPENDIX: NEGATIVE SAMPLING TRICK ---------------------------------
\subsection{Negative Sampling Trick} \label{subapp:sample_quantile_estimation_tricks}

\noindentpar{Negative sampling trick.}
In \cref{subsubsec:quantile_estimation}, we introduce a \emph{negative sampling trick} to estimate the Top-$K$ quantile $\beta_{u}^{K}$ in RS. Specifically, our sampled items will include all positive items $\mathcal{P}_u$ and $N$ ($\ll I$) \iid sampled negative items $\hat{\mathcal{N}}_u = \{j_k : j_k \overset{\iid}{\sim} \operatorname{Uniform}(\mathcal{N}_u)\}_{k=1}^{N}$ from the negative item set $\mathcal{N}_u$ uniformly. Since the Top-$K$ quantile is usually located within the range of positive scores, this trick can estimate the quantile more effectively than directly \iid sampling from all items (\cf \cref{fig:sample_quantile_estimation-trick-compare} for empirical comparison, all experiments are conducted under the same settings as \cref{fig:sample_quantile}).

\noindentpar{Discussions on bias.}
However, applying the negative sampling trick leads to a theoretical gap. Since the sampled items $\hat{\mathcal{I}}_u = \mathcal{P}_u \cup \hat{\mathcal{N}}_u$ are not \iid sampled from the whole item set $\mathcal{I}$, we should not directly take the $(K / I)$-th quantile of $\hat{\mathcal{I}}_u$ as the estimated quantile $\hat \beta_{u}^{K}$, which may introduce bias. Instead, under a reasonable assumption that all Top-$\min(K, P_u)$ items are positive items, we can set the estimated quantile $\hat \beta_{u}^{K}$ as:

\begin{itemize}[topsep=3pt,leftmargin=10pt,itemsep=0pt]
    \item If $K \leq P_u$, $\hat \beta_{u}^{K}$ should be set as the Top-$K$ score of $\{s_{ui}\}$, where $i \in \mathcal{P}_u$.
    \item If $K > P_u$, $\hat \beta_{u}^{K}$ should be set as the $((K - P_u) / I)$-th quantile of $\{s_{uj}\}$, where $j \in \hat{\mathcal{N}}_u$.
\end{itemize}

The sampling strategy above can be seen as unbiased. Nevertheless, this sampling setting is still not practical in RS. In the case of $K > P_u$, the quantile ratio $(K - P_u) / I$ can be too small and even less than $1 / N$ (\eg $K = 20, I = 10^5, N = 10^3$). Therefore, the estimated quantile $\hat \beta_{u}^{K}$ could be theoretically higher than all the negative item scores and can not be estimated by sampling negative items $\hat{\mathcal{N}}_u$.

Given the impracticality of the above unbiased sampling setting, we slightly modify the above sampling to derive our proposed negative sampling trick. Specifically, we set $\hat \beta_{u}^{K}$ as the Top-$K$ score of $\{s_{uk}\}$, where $k \in \mathcal{P}_u \cup \hat{\mathcal{N}}_u$. This sampling trick perfectly fits the above unbiased case when $K \leq P_u$. In the case of $K > P_u$, this setting actually estimates the $(K - P_u) / N$-th quantile of negative item scores, introducing a slight bias but also making the training more stable. Moreover, it's clear that the estimated quantile $\hat \beta_{u}^{K}$ will always be lower than the ideal Top-$K$ quantile $\beta_{u}^{K}$ under this sampling trick (\cf \cref{fig:sample_quantile_estimation-trick-compare}), which leads to a more conservative yet moderate truncation in training SL@$K$, as shown in \cref{fig:sample_quantile_estimation-compare}.

% END: APPENDIX: NEGATIVE SAMPLING TRICK ----------------------------

% APPENDIX: NEGATIVE SAMPLING TRICK ---------------------------------
\begin{algorithm*}[t]
    \caption{SL@$K$ Optimization.}
    \label{alg:slatk}
    \begin{algorithmic}[1]
        \Statex \textbf{Input:} user and item sets $\mathcal{U}, \mathcal{I}$; dataset $\mathcal{D} = \{y_{ui} \in \{0, 1\} : u \in \mathcal{U}, i \in \mathcal{I}\}$; score function $s_{ui}: \mathcal{U} \times \mathcal{I} \to \mathbb{R}$ with parameters $\Theta$; negative sampling number $N$; the number of epochs $T$; the number of $K$; temperature parameters $\tau_w, \tau_d$; quantile update interval $T_{\beta}$.
        \State Initialize the estimated Top-$K$ quantiles $\hat \beta_{u}^{K} \gets 0$ for all $u \in \mathcal{U}$.
        \For{$t = 1, 2, \ldots, T$}
            \For{$u \in \mathcal{U}$}
                \State Let $\mathcal{P}_u = \{i : y_{ui} = 1\}$ be the positive items of user $u$.
                \State Let $\mathcal{N}_u = \{i : y_{ui} = 0\}$ be the negative items of user $u$.
                \Statex \Comment{Estimate the Top-$K$ quantiles $\hat \beta_{u}^{K}$ by sample quantile estimation}
                \Statex \Comment{Complexity: $O((|\mathcal{P}_u| + N) \log (|\mathcal{P}_u| + N)) \approx O(N \log N)$}
                \If{$t \equiv 0 \mod T_{\beta}$}
                    \State Sample $N$ negative items $\hat{\mathcal{N}}_u = \{j_k : j_k \overset{\iid}{\sim} \operatorname{Uniform}(\mathcal{N}_u)\}_{k=1}^{N}$, 
                        let $\hat{\mathcal{I}}_u = \mathcal{P}_u \cup \hat{\mathcal{N}}_u$.
                    \State Sort items $\hat i \in \hat{\mathcal{I}}_u$ by scores $\{s_{u\hat{i}}\}$.
                    \State Estimate the Top-$K$ quantile $\hat \beta_{u}^{K} \gets \hat{\mathcal{I}}_u[K]$, where $\hat{\mathcal{I}}_u[K]$ denotes the $K$-th top-ranked item in $\hat{\mathcal{I}}_u$.
                \EndIf
                \Statex \Comment{Optimize $\Theta$ by minimizing the SL@$K$ loss}
                \Statex \Comment{Complexity: $O(|\mathcal{P}_u| N)$}
                \State Sample $N$ negative items $\hat{\mathcal{N}}_u = \{j_k : j_k \overset{\iid}{\sim} \operatorname{Uniform}(\mathcal{N}_u)\}_{k=1}^{N}$.
                \For{$i \in \mathcal{P}_u$}
                    \State Compute the quantile-based weight $w_{ui} = \sigma_w(s_{ui} - \hat \beta_{u}^{K})$,
                        where $\sigma_w = \sigma(\cdot / \tau_w)$, and $\sigma(\cdot)$ is the sigmoid function.
                    \State Compute the (sampled) SL loss $\mathcal{L}_{\text{SL}}(u, i) = \log \sum_{j \in \hat{\mathcal{N}}_u} \sigma_d(d_{uij})$,
                        where $\sigma_d = \exp(\cdot / \tau_d)$.
                \EndFor
                \State Compute the loss $\mathcal{L}_{\text{SL@}K}(u) = \sum_{i \in \mathcal{P}_u} w_{ui} \cdot \mathcal{L}_{\text{SL}}(u, i)$.
                \State Update the parameters $\Theta$ by minimizing $\mathcal{L}_{\text{SL@}K}(u)$.
            \EndFor
        \EndFor
        \Statex \textbf{Output:} the optimized parameters $\Theta$.
    \end{algorithmic}
\end{algorithm*}
% END ALGORITHM: SL@K OPTIMIZATION ----------------------------------

% APPENDIX: QUANTILE REGRESSION -------------------------------------
\subsection{Quantile Regression} \label{subapp:quantile_regression}

\noindentpar{Quantile regression.}
Quantile regression technique \citep{koenker2005quantile,hao2007quantile} can also be utilized for sample quantile estimation. Specifically, to estimate the $p$-th quantile (see definition in \cref{thm:quantile_estimation}), the \emph{quantile regression loss} $\mathcal{L}_{\text{QR}}$ can be defined as
\begin{equation} \label{eq:quantile_regression-1}
    \mathcal{L}_{\text{QR}}(u) = \mathbb{E}_{i \sim \operatorname{Uniform}(\mathcal{I})} \left[ p(s_{ui} - \hat \beta_u)_+ + (1 - p)(\hat \beta_u - s_{ui})_+ \right] ,
\end{equation}
where $(\cdot)_+ = \max(\cdot, 0)$, $\hat \beta_u$ is the estimated $p$-th quantile, and $\operatorname{Uniform}(\mathcal{I})$ denotes the uniform distribution over the item set $\mathcal{I}$. Note that for any $x \in \mathbb{R}$, $x \cdot \delta(x) = x_+$, and $x_+ - (-x)_+ = x$, we can rewrite the quantile regression loss in \cref{eq:quantile_regression-1} as
\begin{equation} \label{eq:quantile_regression-2}
    \mathcal{L}_{\text{QR}}(u) = \mathbb{E}_{i \sim \operatorname{Uniform}(\mathcal{I})} \left[ (\hat \beta_u - s_{ui})(\delta(\hat \beta_u - s_{ui}) - p) \right] .
\end{equation}

Suppose that $S$ is a random variable representing the preference scores $s_{ui}$ (\wrt user $u$), and $F_S$ is the \cdf of $S$ on $\mathbb{R}$. Since item $i$ is uniformly distributed, the quantile regression loss (\ref{eq:quantile_regression-1}) can be rewritten as the following expectation:
\begin{equation} \label{eq:quantile_regression-3}
    \begin{aligned}
        \mathcal{L}_{\text{QR}}(u) 
        &= \mathbb{E}_{S \sim F_S} \left[ p(S - \hat \beta_u)_+ + (1 - p)(\hat \beta_u - S)_+ \right]    \\
        &= \int_{-\infty}^{\hat \beta_u} (1 - p)(\hat \beta_u - S) \diff F_S(S) + \int_{\hat \beta_u}^{\infty} p(S - \hat \beta_u) \diff F_S(S) .
    \end{aligned}
\end{equation}
Let $\beta_u = \argmin_{\hat \beta_u} \mathcal{L}_{\text{QR}}(u)$, we have
\begin{equation} \label{eq:quantile_regression-solution}
    (1 - p) \int_{-\infty}^{\beta_u} \diff F_S(S) = p \int_{\beta_u}^{\infty} \diff F_S(S) .
\end{equation}
This results in $\int_{-\infty}^{\beta_u} \diff F_S(S) = p$, \ie the optimal $\hat \beta_u$ is precisely the $p$-th quantile of scores $S$.

\noindentpar{Discussion.}
Quantile regression is indeed theoretically unbiased. However, we did not observe any performance gains in our experiments when applying it to SL@$K$. This may be because the quantile regression loss is relatively difficult to optimize due to sparse signals and large variance. In \cref{eq:quantile_regression-1}, as $p = 1 - K / |\mathcal{I}| \approx 1$ in Top-$K$ recommendation scenarios, the samples scored above the quantile $\hat \beta_u$ (high-ranked items) are assigned a weight of $p$, which is close to 1, while the samples scored below the quantile are assigned a nearly vanishing weight of $1 - p$. Given the sparse nature of the high-ranked items, the Monte Carlo estimation of the quantile regression loss is highly unstable -- once the high-ranked items are sampled, the loss will be dominated by these items, leading to large variance and unstable gradients \wrt $\hat \beta_u$. Although importance sampling \citep{wasserman2004all} may reduce the variance, it is difficult to accurately estimate the sampling distribution over the entire item set during sampling-based training. In contrast, our proposed strategy is both practical and effective for sample quantile estimation.

% This regression-based approach can slightly reduce the complexity of SL@$K$ from $O(|\mathcal{U}| N (\bar{P} + \log N))$ to $O(|\mathcal{U}| N \bar{P})$, where $\bar{P}$ is the average number of positive items per user, and $N$ is the number of sampled negative items. However, we found that the quantile regression technique neither provides performance gains nor enhances the training efficiency in our experiments. This is primarily because (i) the quantile regression loss is relatively difficult to optimize, since the scores $s_{ui}$ are also dynamically updated during training; (ii) the learning rate of quantile regression is hard to control; (iii) the main component of the time complexity lies in the computation of SL@$K$. Moreover, the quantile regression performs poorly except in cases where all positive items are sampled, which is similar to our negative sampling trick in \cref{subapp:sample_quantile_estimation_tricks}. Therefore, our proposed strategy is both practical and effective for sample quantile estimation.

% END: APPENDIX: QUANTILE REGRESSION --------------------------------

% APPENDIX: SAMPLE RANKING ESTIMATION -------------------------------
\subsection{Sample Ranking Estimation} \label{subapp:sample_ranking_estimation}

Similar to sample quantile estimation, sample ranking estimation \citep{weston2010large} can also be applied to estimate the ranking position $\pi_{ui}$. Specifically, we can uniformly sample $N$ negative items $\hat{\mathcal{N}}_u = \{j_k : j_k \overset{\iid}{\sim} \operatorname{Uniform}(\mathcal{N}_u)\}_{k=1}^{N}$, then sort the sampled items $i \in \hat{\mathcal{I}}_u = \mathcal{P}_u \cup \hat{\mathcal{N}}_u$ by their scores $\{s_{ui}\}$. Then, for any item $i$, given the sample ranking position $\pi_{ui}^{*}$ in the sampled items $\hat{\mathcal{I}}_u$, the estimated ranking position $\hat \pi_{ui}$ in the entire item set can be rescaled as
\begin{equation} \label{eq:sample_ranking_estimation}
    \hat \pi_{ui} = \pi_{ui}^{*} \cdot \frac{|\mathcal{I}|}{|\hat{\mathcal{I}}_u|} .
\end{equation}
Compared to sample quantile estimation, sample ranking estimation may result in greater errors, primarily because the estimated ranking $\hat \pi_{ui}$ obtained from sample ranking estimation is always discrete and predefined, i.e., $1, 1 + {|\mathcal{I}|}/{|\hat{\mathcal{I}}_u|}, 1 + 2{|\mathcal{I}|}/{|\hat{\mathcal{I}}_u|}, \cdots$. It is evident that sample ranking estimation will result in an expected error of at least $\frac{1}{2}{|\mathcal{I}|}/{|\hat{\mathcal{I}}_u|} \approx \frac{1}{2}{|\mathcal{I}|}/{N}$, which decreases proportionally to $1/N$ when $N$ is large. However, the error in sample quantile estimation decreases exponentially \wrt $N$, which leads to better estimation accuracy (\cf \cref{thm:quantile_estimation}). Therefore, sample ranking estimation is not suitable for recommendation losses that are highly sensitive to ranking positions, such as LambdaLoss \citep{wang2018lambdaloss} and LambdaLoss@$K$ \citep{jagerman2022optimizing}, as discussed in \cref{app:gradient_vanishing}.

% END: APPENDIX: SAMPLE RANKING ESTIMATION --------------------------

% APPENDIX: SL@K ALGORITHM ------------------------------------------
\subsection{\texorpdfstring{SL@$K$}{SL@K} Optimization} \label{subapp:slatk_optimization}

The detailed optimization algorithm of SL@$K$ (\ref{eq:slatk}) is presented in \cref{alg:slatk}, which is based on the sample quantile estimation technique discussed in \cref{subapp:sample_quantile_estimation_tricks}. In practical SL@$K$ optimization, to address training difficulties arising from frequent quantile changes due to score variations (especially in the early stages), we introduce a quantile update interval hyperparameter $T_{\beta}$, i.e., where quantiles are updated every $T_{\beta}$ epochs.

% END: APPENDIX: SL@K ALGORITHM -------------------------------------

% END: APPENDIX: SAMPLE QUANTILE ESTIMATION -------------------------
%%%%%%%%%%%%%%%%%%%%%%%%%%%%%%%%%%%%%%%%%%%%%%%%%%%%%%%%%%%%%%%%%%%%%

% TABLE: DATASETS STATISTICS ----------------------------------------
\begin{table}[t]
    \centering
    \caption{Statistics of the datasets.}
    \label{tab:datasets-statistics}
    \begin{tabular}{l|rrrr}
    \Xhline{1.2pt}
    \multicolumn{1}{c|}{\textbf{Dataset}} & \textbf{\#Users} & \textbf{\#Items} & \textbf{\#Interactions} & \textbf{Density} \bigstrut\\
    \Xhline{1pt}
    Health \citep{he2016ups} & 1,974  & 1,200  & 48,189  & 0.02034 \bigstrut[t]\\
    Electronic \citep{he2016ups} & 13,455  & 8,360  & 234,521  & 0.00208 \\
    Gowalla \citep{cho2011friendship} & 29,858  & 40,988  & 1,027,464  & 0.00084 \\
    Book \citep{he2016ups}  & 135,109  & 115,172  & 4,042,382  & 0.00026 \\
    MovieLens \citep{harper2015movielens} & 939   & 1,016  & 80,393  & 0.08427 \\
    Food \citep{majumder2019generating}  & 5,875  & 9,852  & 233,038  & 0.00403 \bigstrut[b]\\
    \Xhline{1.2pt}
    \end{tabular}
\end{table}
% END TABLE: DATASETS STATISTICS ------------------------------------

% TABLE: HYPERPARAMETERS SEARCHING ----------------------------------
\begin{table}[t]
    \centering
    \caption{Hyperparameters to be searched for each method.}
    \label{tab:hyperparameters-searching}
    \begin{tabular}{l|l}
        \Xhline{1.2pt}
        \multicolumn{1}{c|}{\textbf{Method}} & \multicolumn{1}{c}{\textbf{Hyperparameters}} \bigstrut\\
        \Xhline{1pt}
        BPR             & lr, wd \bigstrut[t]\\
        GuidedRec       & lr, wd \\
        SONG            & lr, wd, $\gamma_{g}$ \\
        SONG@$K$        & lr, wd, $\gamma_{g}$, $\eta_{\lambda}$ \\
        LLPAUC          & lr, wd, $\alpha$, $\beta$ \\
        SL              & lr, wd, $\tau$ \\
        AdvInfoNCE      & lr, wd, $\tau$ \\
        BSL             & lr, wd, $\tau_1$, $\tau_2$ \\
        PSL             & lr, wd, $\tau$ \\
        LambdaRank      & lr, wd \\
        LambdaLoss      & lr, wd \\
        LambdaLoss@$K$  & lr, wd \\
        SL@$K$          & lr, wd, $\tau_d$, $\tau_w$, $T_{\beta}$ \bigstrut[b]\\
        \Xhline{1.2pt}
    \end{tabular}
\end{table}
% END: HYPERPARAMETERS SEARCHING ------------------------------------

%%%%%%%%%%%%%%%%%%%%%%%%%%%%%%%%%%%%%%%%%%%%%%%%%%%%%%%%%%%%%%%%%%%%%
% APPENDIX: EXPERIMENTAL DETAILS ------------------------------------
\section{Experimental Details} \label{app:experimental_details}

In this section, we provide additional details on the experiments, including the dataset descriptions in \cref{subapp:datasets}, the recommendation scenarios in \cref{subapp:recommendation_scenarios}, the recommendation backbones in \cref{subapp:recommendation_backbones}, the baseline methods and the corresponding hyperparameter settings in \cref{subapp:compared_methods_hyperparameters}. We also provide the optimal hyperparameters for all methods in \cref{subapp:optimal_hyperparameters} for reproducibility. Finally, we describe the additional information retrieval (IR) tasks in \cref{subapp:experimental_setup_ir}.

\noindentpar{Hardware and software.}
All experiments are conducted on 1x NVIDIA GeForce RTX 4090 GPU. The code is implemented in PyTorch \citep{paszke2019pytorch}. Both the datasets and code are available at \url{https://github.com/Tiny-Snow/IR-Benchmark}.

% APPENDIX: DATASETS ------------------------------------------------
\subsection{Datasets} \label{subapp:datasets}

In our recommendation experiments, we use six benchmark datasets, as summarized in \cref{tab:datasets-statistics}: 
\begin{itemize}[topsep=3pt,leftmargin=10pt,itemsep=0pt]
    \item \textbf{Health / Electronic / Book \citep{he2016ups,mcauley2015image}:} These datasets are collected from the Amazon dataset, a large-scale collection of product reviews from Amazon\footnote{\url{https://www.amazon.com/}}. The 2014 version of the Amazon dataset contains 142.8 million reviews spanning May 1996 to July 2014.
    \item \textbf{Gowalla \citep{cho2011friendship}:} The Gowalla dataset is a check-in dataset from the location-based social network Gowalla\footnote{\url{https://en.wikipedia.org/wiki/Gowalla}}, including 1 million users, 1 million locations, and 6 million check-ins.
    \item \textbf{MovieLens \citep{harper2015movielens}:} The MovieLens dataset is a movie rating dataset from MovieLens\footnote{\url{https://movielens.org/}}. We use the MovieLens-100K version, which contains 100,000 ratings from 1,000 users and 1,700 movies.
    \item \textbf{Food \citep{majumder2019generating}:} The Food dataset consists of 180,000 recipes and 700,000 recipe reviews spanning 18 years of user interactions and uploads on Food.com\footnote{\url{https://www.food.com/}}.
\end{itemize}

\noindentpar{Dataset preprocessing.}
Following the standard practice in \citet{yang2024psl}, we use a 10-core setting \citep{he2016vbpr,wang2019neural}, \ie all users and items have at least 10 interactions. To remove low-quality interactions, we retain only those with ratings (if available) greater than or equal to 3. After preprocessing, we randomly split the datasets into 80\% training and 20\% test sets, and further split 10\% of the training set as a validation set for hyperparameter tuning.

% END: APPENDIX: DATASETS -------------------------------------------

% APPENDIX: RECOMMENDATION SCENARIOS --------------------------------
\subsection{Recommendation Scenarios} \label{subapp:recommendation_scenarios}

In this paper, we evaluate the performance of each method primarily in the following two Top-$K$ recommendation scenarios:

\begin{itemize}[topsep=3pt,leftmargin=10pt,itemsep=0pt]
    \setlength{\itemsep}{0pt}
    \item \textbf{IID scenario \citep{he2020lightgcn,yang2024psl}:} The IID scenario is the most common recommendation scenario, where the training and test sets are split in an independent and identically distributed (\iid) manner from the entire dataset, ensuring the same distributions. This follows the setup described in \citet{he2020lightgcn} and \citet{yang2024psl}.
    \item \textbf{False Positive Noise scenario \citep{wu2023bsl,yang2024psl}:} The False Positive Noise scenario is commonly used to assess a method's ability to handle noisy data. Our false positive noise setting is adapted from \citet{wu2023bsl} and \citet{yang2024psl}. Specifically, given a noise ratio $r$, we randomly sample $\lceil r \times P_u \rceil$ negative items for each user $u$, and flip them to positive items to simulate false positive noise. The noise ratio $r$ represents the percentage of false positive noise and takes values from the set $\{ 5\%, 10\%, 15\%, 20\% \}$.
\end{itemize}
% END: APPENDIX: RECOMMENDATION SCENARIOS ---------------------------

% TABLES: HYPERPARAMETERS HEALTH AND ELECTRONIC ---------------------
\begin{table*}[t]
    \centering
    \begin{minipage}[t]{0.48\linewidth}
% TABLE: HYPERPARAMETERS HEALTH -------------------------------------
% \begin{table}[t]
    \centering
    \caption{Optimal hyperparameters on Health dataset.}
    \begin{tabular}{c|l|ccccc}
    \Xhline{1.2pt}
    \textbf{Model} & \multicolumn{1}{c|}{\textbf{Loss}} & \multicolumn{5}{c}{\textbf{Hyperparameters}} \bigstrut\\
    \Xhline{1pt}
    \multirow{14}[2]{*}{MF}
        & BPR           & 0.001 & 0.0001 &       &       &  \bigstrut[t]\\
        & GuidedRec     & 0.01  & 0     &       &       &  \\
        % & SONG          & 0.1   & 0     & 0.9   &       &  \\
        & SONG@20       & 0.1   & 0     & 0.9   & 0.001 &  \\
        & LLPAUC        & 0.1   & 0     & 0.7   & 0.01  &  \\
        & SL            & 0.1   & 0     & 0.2   &       &  \\
        & AdvInfoNCE    & 0.1   & 0     & 0.2   &       &  \\
        & BSL           & 0.1   & 0     & 0.2   & 0.2   &  \\
        & PSL           & 0.1   & 0     & 0.1   &       &  \\
        & SL@5          & 0.1   & 0     & 0.2   & 2.5   & 5 \\
        & SL@10         & 0.1   & 0     & 0.2   & 2.5   & 20 \\
        & SL@20         & 0.1   & 0     & 0.2   & 2.5   & 5 \\
        & SL@50         & 0.1   & 0     & 0.2   & 2.25  & 5 \\
        & SL@75         & 0.1   & 0     & 0.2   & 2.25  & 5 \\
        & SL@100        & 0.1   & 0     & 0.2   & 2.5   & 5 \bigstrut[b]\\
    \Xhline{1pt}
    \multirow{9}[2]{*}{LightGCN} 
        & BPR           & 0.001 & 0.000001 &       &       &  \bigstrut[t]\\
        & GuidedRec     & 0.01  & 0     &       &       &  \\
        % & SONG          & 0.1   & 0     & 0.5   &       &  \\
        & SONG@20       & 0.1   & 0     & 0.9   & 0.001 &  \\
        & LLPAUC        & 0.1   & 0     & 0.7   & 0.1   &  \\
        & SL            & 0.1   & 0     & 0.2   &       &  \\
        & AdvInfoNCE    & 0.1   & 0     & 0.2   &       &  \\
        & BSL           & 0.1   & 0     & 0.05  & 0.2   &  \\
        & PSL           & 0.1   & 0     & 0.1   &       &  \\
        & SL@20         & 0.01  & 0     & 0.2   & 2.5   & 20 \bigstrut[b]\\
    \Xhline{1pt}
    \multirow{9}[2]{*}{XSimGCL} 
        & BPR           & 0.1   & 0.000001 &       &       &  \bigstrut[t]\\
        & GuidedRec     & 0.001 & 0.000001 &       &       &  \\
        % & SONG          & 0.1   & 0     & 0.9   &       &  \\
        & SONG@20       & 0.1   & 0     & 0.1   & 0.01  &  \\
        & LLPAUC        & 0.1   & 0     & 0.1   & 0.1   &  \\
        & SL            & 0.1   & 0     & 0.2   &       &  \\
        & AdvInfoNCE    & 0.1   & 0     & 0.2   &       &  \\
        & BSL           & 0.1   & 0     & 0.05  & 0.2   &  \\
        & PSL           & 0.1   & 0     & 0.1   &       &  \\
        & SL@20         & 0.01  & 0     & 0.2   & 1.5   & 20 \bigstrut[b]\\
    \Xhline{1.2pt}
    \end{tabular}
    \label{tab:hyperparameters-health}
% \end{table}
% END TABLE: HYPERPARAMETERS HEALTH ---------------------------------
    \end{minipage}
    \hfill
    \begin{minipage}[t]{0.48\linewidth}
% TABLE: HYPERPARAMETERS ELECTRONIC ---------------------------------
% \begin{table}[t]
    \centering
    \caption{Optimal hyperparameters on Electronic dataset.}
    \begin{tabular}{c|l|ccccc}
    \Xhline{1.2pt}
    \textbf{Model} & \multicolumn{1}{c|}{\textbf{Loss}} & \multicolumn{5}{c}{\textbf{Hyperparameters}} \bigstrut\\
    \Xhline{1pt}
    \multirow{14}[2]{*}{MF}
        & BPR           & 0.001 & 0.00001 &       &       &  \bigstrut[t]\\
        & GuidedRec     & 0.01  & 0     &       &       &  \\
        % & SONG          & 0.1   & 0     & 0.1   &       &  \\
        & SONG@20       & 0.1   & 0     & 0.1   & 0.001 &  \\
        & LLPAUC        & 0.1   & 0     & 0.5   & 0.01  &  \\
        & SL            & 0.01  & 0     & 0.2   &       &  \\
        & AdvInfoNCE    & 0.1   & 0     & 0.2   &       &  \\
        & BSL           & 0.1   & 0     & 0.5   & 0.2   &  \\
        & PSL           & 0.01  & 0     & 0.1   &       &  \\
        & SL@5          & 0.01  & 0     & 0.2   & 2.25  & 5 \\
        & SL@10         & 0.01  & 0     & 0.2   & 2.25  & 20 \\
        & SL@20         & 0.01  & 0     & 0.2   & 2.25  & 20 \\
        & SL@50         & 0.01  & 0     & 0.2   & 2.25  & 20 \\
        & SL@75         & 0.01  & 0     & 0.2   & 2.25  & 20 \\
        & SL@100        & 0.01  & 0     & 0.2   & 2.25  & 5 \bigstrut[b]\\
    \Xhline{1pt}
    \multirow{9}[2]{*}{LightGCN} 
        & BPR           & 0.01  & 0.000001 &       &       &  \bigstrut[t]\\
        & GuidedRec     & 0.01  & 0     &       &       &  \\
        % & SONG          & 0.1   & 0     & 0.1   &       &  \\
        & SONG@20       & 0.1   & 0     & 0.1   & 0.01  &  \\
        & LLPAUC        & 0.1   & 0     & 0.5   & 0.01  &  \\
        & SL            & 0.01  & 0     & 0.2   &       &  \\
        & AdvInfoNCE    & 0.01  & 0     & 0.2   &       &  \\
        & BSL           & 0.01  & 0     & 0.2   & 0.2   &  \\
        & PSL           & 0.01  & 0     & 0.1   &       &  \\
        & SL@20         & 0.01  & 0     & 0.2   & 2.25  & 5 \bigstrut[b]\\
    \Xhline{1pt}
    \multirow{9}[2]{*}{XSimGCL} 
        & BPR           & 0.01  & 0     &       &       &  \bigstrut[t]\\
        & GuidedRec     & 0.01  & 0     &       &       &  \\
        % & SONG          & 0.1   & 0     & 0.1   &       &  \\
        & SONG@20       & 0.1   & 0     & 0.1   & 0.001 &  \\
        & LLPAUC        & 0.1   & 0     & 0.3   & 0.01  &  \\
        & SL            & 0.01  & 0     & 0.2   &       &  \\
        & AdvInfoNCE    & 0.1   & 0     & 0.2   &       &  \\
        & BSL           & 0.1   & 0     & 0.1   & 0.2   &  \\
        & PSL           & 0.1   & 0     & 0.1   &       &  \\
        & SL@20         & 0.01  & 0     & 0.2   & 1.25  & 5 \bigstrut[b]\\
    \Xhline{1.2pt}
    \end{tabular}
    \label{tab:hyperparameters-electronic}
% \end{table}
% END TABLE: HYPERPARAMETERS ELECTRONIC -----------------------------
    \end{minipage}
\end{table*}
% END TABLES: HYPERPARAMETERS HEALTH AND ELECTRONIC -----------------

% APPENDIX: RECOMMENDATION BACKBONES --------------------------------
\subsection{Recommendation Backbones} \label{subapp:recommendation_backbones}

Recommendation backbones (\aka recommendation models) are the core components of RS. In the scope of this paper, the recommendation backbones can be formally defined as the preference score function $s_{ui} : \mathcal{U} \times \mathcal{I} \to \mathbb{R}$ with parameters $\Theta$. It is crucial to evaluate the effectiveness of the recommendation loss on different backbones to ensure their generalization and consistency. In our experiments, we implement three popular recommendation backbones with different architectures and properties:
\begin{itemize}[topsep=3pt,leftmargin=10pt,itemsep=0pt]
    \setlength{\itemsep}{0pt}
    \item \textbf{MF \citep{koren2009matrix}:} MF is a fundamental yet effective recommendation model that factorizes the user-item interaction matrix into the user and item embeddings. Many embedding-based recommendation models leverage MF as the initial layer. Specifically, we set the embedding size $d = 64$ for all settings, following \citet{wang2019neural} and \citet{yang2024psl}.
    \item \textbf{LightGCN \citep{he2020lightgcn}:} LightGCN is an effective GNN-based recommendation model. LightGCN performs graph convolution on the user-item interaction graph to aggregate high-order interactions. Specifically, LightGCN simplifies NGCF \citep{wang2019neural} and retains only the non-parameterized graph convolution. In our experiments, we set the number of layers to 2, which aligns with the original settings in \citet{he2020lightgcn} and \citet{yang2024psl}.
    \item \textbf{XSimGCL} \citep{yu2023xsimgcl}: XSimGCL is a novel recommendation model based on contrastive learning \citep{jaiswal2020survey, liu2021self}. Based on a 3-layer LightGCN, XSimGCL adds random noise to the output embeddings of each layer, and introduces contrastive learning between the final layer and the $l^*$-th layer, i.e., adding an auxiliary InfoNCE \citep{oord2018representation} loss between these two layers. Following the original settings in \citet{yu2023xsimgcl} and \citet{yang2024psl}, the magnitude of random noise added to each layer is set to 0.1, the contrastive layer $l^*$ is set to 1 (with the embedding layer considered as the 0-th layer), the temperature of InfoNCE is set to 0.1, and the weight of the auxiliary InfoNCE loss is chosen from the set $\{0.05, 0.1, 0.2\}$.
\end{itemize}
% END: APPENDIX: RECOMMENDATION BACKBONES ---------------------------

% TABLES: HYPERPARAMETERS GOWALLA AND BOOK --------------------------
\begin{table*}[t]
    \centering
    \begin{minipage}[t]{0.48\linewidth}
% TABLE: HYPERPARAMETERS GOWALLA ------------------------------------
% \begin{table}[t]
    \centering
    \caption{Optimal hyperparameters on Gowalla dataset.}
    \begin{tabular}{c|l|ccccc}
    \Xhline{1.2pt}
    \textbf{Model} & \multicolumn{1}{c|}{\textbf{Loss}} & \multicolumn{5}{c}{\textbf{Hyperparameters}} \bigstrut\\
    \Xhline{1pt}
    \multirow{9}[2]{*}{MF} 
        & BPR           & 0.001 & 0.000001 &       &       &  \bigstrut[t]\\
        & GuidedRec     & 0.001 & 0     &       &       &  \\
        % & SONG          & 0.1   & 0     & 0.9   &       &  \\
        & SONG@20       & 0.1   & 0     & 0.1   & 0.001 &  \\
        & LLPAUC        & 0.1   & 0     & 0.7   & 0.01  &  \\
        & SL            & 0.1   & 0     & 0.1   &       &  \\
        & AdvInfoNCE    & 0.1   & 0     & 0.1   &       &  \\
        & BSL           & 0.1   & 0     & 0.2   & 0.1   &  \\
        & PSL           & 0.1   & 0     & 0.05  &       &  \\
        & SL@20         & 0.01  & 0     & 0.1   & 1     & 20 \bigstrut[b]\\
    \Xhline{1pt}
    \multirow{9}[2]{*}{LightGCN} 
        & BPR           & 0.001 & 0     &       &       &  \bigstrut[t]\\
        & GuidedRec     & 0.001 & 0     &       &       &  \\
        % & SONG          & 0.1   & 0     & 0.3   &       &  \\
        & SONG@20       & 0.1   & 0     & 0.7   & 0.001 &  \\
        & LLPAUC        & 0.1   & 0     & 0.7   & 0.01  &  \\
        & SL            & 0.1   & 0     & 0.1   &       &  \\
        & AdvInfoNCE    & 0.1   & 0     & 0.1   &       &  \\
        & BSL           & 0.1   & 0     & 0.05  & 0.1   &  \\
        & PSL           & 0.1   & 0     & 0.05  &       &  \\
        & SL@20         & 0.01  & 0     & 0.1   & 0.75  & 5 \bigstrut[b]\\
    \Xhline{1pt}
    \multirow{9}[2]{*}{XSimGCL}
        & BPR           & 0.0001 & 0     &       &       &  \bigstrut[t]\\
        & GuidedRec     & 0.001 & 0     &       &       &  \\
        % & SONG          & 0.1   & 0     & 0.1   &       &  \\
        & SONG@20       & 0.1   & 0     & 0.1   & 0.001 &  \\
        & LLPAUC        & 0.1   & 0     & 0.7   & 0.01  &  \\
        & SL            & 0.01  & 0     & 0.1   &       &  \\
        & AdvInfoNCE    & 0.1   & 0     & 0.1   &       &  \\
        & BSL           & 0.1   & 0     & 0.05  & 0.1   &  \\
        & PSL           & 0.1   & 0     & 0.05  &       &  \\
        & SL@20         & 0.01  & 0     & 0.1   & 0.75  & 5 \bigstrut[b]\\
    \Xhline{1.2pt}
    \end{tabular}
    \label{tab:hyperparameters-gowalla}
% \end{table}
% END: TABLE: HYPERPARAMETERS GOWALLA -------------------------------
    \end{minipage}
    \hfill
    \begin{minipage}[t]{0.48\linewidth}
% TABLE: HYPERPARAMETERS BOOK ---------------------------------------
% \begin{table}[t]
    \centering
    \caption{Optimal hyperparameters on the Book dataset.}
    \begin{tabular}{c|l|ccccc}
    \Xhline{1.2pt}
    \textbf{Model} & \multicolumn{1}{c|}{\textbf{Loss}} & \multicolumn{5}{c}{\textbf{Hyperparameters}} \bigstrut\\
    \Xhline{1pt}
    \multirow{9}[2]{*}{MF}
        & BPR           & 0.0001 & 0     &       &       &  \bigstrut[t]\\
        & GuidedRec     & 0.001 & 0     &       &       &  \\
        % & SONG          & 0.1   & 0     & 0.1   &       &  \\
        & SONG@20       & 0.1   & 0     & 0.1   & 0.001 &  \\
        & LLPAUC        & 0.1   & 0     & 0.7   & 0.01  &  \\
        & SL            & 0.1   & 0     & 0.05  &       &  \\
        & AdvInfoNCE    & 0.01  & 0     & 0.1   &       &  \\
        & BSL           & 0.1   & 0     & 0.5   & 0.05  &  \\
        & PSL           & 0.1   & 0     & 0.025 &       &  \\
        & SL@20         & 0.01  & 0     & 0.05  & 0.5   & 20 \bigstrut[b]\\
    \Xhline{1pt}
    \multirow{9}[2]{*}{LightGCN}
        & BPR           & 0.001 & 0     &       &       &  \bigstrut[t]\\
        & GuidedRec     & 0.001 & 0     &       &       &  \\
        % & SONG          & 0.1   & 0     & 0.1   &       &  \\
        & SONG@20       & 0.1   & 0     & 0.1   & 0.01  &  \\
        & LLPAUC        & 0.1   & 0     & 0.7   & 0.01  &  \\
        & SL            & 0.1   & 0     & 0.05  &       &  \\
        & AdvInfoNCE    & 0.1   & 0     & 0.1   &       &  \\
        & BSL           & 0.1   & 0     & 0.5   & 0.05  &  \\
        & PSL           & 0.1   & 0     & 0.025 &       &  \\
        & SL@20         & 0.01  & 0     & 0.05  & 0.5   & 20 \bigstrut[b]\\
    \Xhline{1pt}
    \multirow{9}[2]{*}{XSimGCL} 
        & BPR           & 0.0001 & 0.00001 &       &       &  \bigstrut[t]\\
        & GuidedRec     & 0.1   & 0     &       &       &  \\
        % & SONG          & 0.1   & 0     & 0.1   &       &  \\
        & SONG@20       & 0.1   & 0     & 0.1   & 0.001 &  \\
        & LLPAUC        & 0.1   & 0     & 0.7   & 0.01  &  \\
        & SL            & 0.1   & 0     & 0.05  &       &  \\
        & AdvInfoNCE    & 0.1   & 0     & 0.1   &       &  \\
        & BSL           & 0.1   & 0     & 0.05  & 0.05  &  \\
        & PSL           & 0.1   & 0     & 0.025 &       &  \\
        & SL@20         & 0.01  & 0     & 0.05  & 0.5   & 20 \bigstrut[b]\\
    \Xhline{1.2pt}
    \end{tabular}
    \label{tab:hyperparameters-book}
% \end{table}
% END: TABLE: HYPERPARAMETERS BOOK ----------------------------------
    \end{minipage}
\end{table*}
% END: TABLES: HYPERPARAMETERS GOWALLA AND BOOK ---------------------

% APPENDIX: COMPARED METHODS AND HYPERPARAMETERS SETTINGS -----------
\subsection{Recommendation Losses} \label{subapp:compared_methods_hyperparameters}

To adequately evaluate the effectiveness of SL@$K$, we reproduce the following SOTA recommendation losses and search for the optimal hyperparameters using grid search. In loss optimization, we use the Adam optimizer \citep{kingma2014adam} with hyperparameters including learning rate (lr) and weight decay (wd). The batch size is set to 1024, and the number of epochs is set to 200, during which all methods are observed to converge. If negative sampling is required, the number of negative samples is set to $N = 1000$, except for the MovieLens dataset, where it is reduced to 200 due to its smaller item set size. The above settings are consistent with the settings in \citet{yang2024psl}. The details of the compared methods and their hyperparameter search spaces are as follows:

\noindentpar{BPR \citep{rendle2009bpr}.}
BPR is a conventional pairwise loss based on Bayesian Maximum Likelihood Estimation (MLE) \citep{casella2024statistical}. The objective of BPR is to learn a partial order among items, \ie positive items should be ranked higher than negative items. Furthermore, BPR is a surrogate loss for AUC metric \citep{rendle2009bpr, silveira2019good}. The score function $s_{ui}$ in BPR is defined as the dot product between user and item embeddings. The hyperparameter search space for BPR is as follows:
\begin{itemize}[topsep=3pt,leftmargin=10pt,itemsep=0pt]
    \setlength{\itemsep}{0pt}
    \item lr $\in \{10^{-1}, 10^{-2}, 10^{-3}, 10^{-4}\}$.
    \item wd $\in \{0, 10^{-4}, 10^{-5}, 10^{-6}\}$. 
\end{itemize}

\noindentpar{GuidedRec \citep{rashed2021guided}.}
GuidedRec is a Binary Cross-Entropy (BCE) \citep{he2017ncf} loss with additional model-based DCG surrogate learning guidance. Rather than being a direct DCG surrogate loss, GuidedRec learns a surrogate loss model to estimate DCG. During training, it maximizes the estimated DCG while minimizing the Mean Squared Error (MSE) \citep{he2017nfm} between the estimated DCG and the true DCG. The score function $s_{ui}$ in GuidedRec is defined as the cosine similarity between user and item embeddings. The hyperparameter search space for GuidedRec is as follows:
\begin{itemize}[topsep=3pt,leftmargin=10pt,itemsep=0pt]
    \item lr $\in \{10^{-1}, 10^{-2}, 10^{-3}\}$.
    \item wd $\in \{0, 10^{-4}, 10^{-5}, 10^{-6}\}$.
\end{itemize}

\noindentpar{LambdaRank \citep{burges2006learning}.}
LambdaRank is a weighted BPR loss \citep{rendle2009bpr}, where the weights are designed heuristically. Though LambdaRank aims to optimize DCG, it is not strictly a DCG surrogate loss, thus lacking theoretical guarantees. The score function $s_{ui}$ in LambdaRank is defined as the dot product between user and item embeddings. The hyperparameter search space for LambdaRank is as follows:
\begin{itemize}[topsep=3pt,leftmargin=10pt,itemsep=0pt]
    \item lr $\in \{10^{-1}, 10^{-2}, 10^{-3}, 10^{-4}\}$.
    \item wd $\in \{0, 10^{-4}, 10^{-5}, 10^{-6}\}$.
\end{itemize}

\noindentpar{LambdaLoss \citep{wang2018lambdaloss}.}
LambdaLoss is a DCG surrogate loss that is formally a weighted BPR loss, similar to LambdaRank \citep{burges2006learning}. \citet{wang2018lambdaloss} demonstrated that LambdaRank does not directly optimize DCG, and proposed LambdaLoss as a strict DCG surrogate loss. The score function $s_{ui}$ in LambdaLoss is defined as the dot product between user and item embeddings. The hyperparameter search space for LambdaLoss is as follows:
\begin{itemize}[topsep=3pt,leftmargin=10pt,itemsep=0pt]
    \item lr $\in \{10^{-1}, 10^{-2}, 10^{-3}, 10^{-4}\}$.
    \item wd $\in \{0, 10^{-4}, 10^{-5}, 10^{-6}\}$.
\end{itemize}

% TABLES: HYPERPARAMETERS MOVIELENS AND FOOD ------------------------
\begin{table*}[t]
    \centering
    \begin{minipage}[t]{0.48\linewidth}
% TABLE: HYPERPARAMETERS MOVIELENS ----------------------------------
% \begin{table}[t]
    \centering
    \caption{Optimal hyperparameters on MovieLens dataset.}
    \begin{tabular}{c|l|ccccc}
    \Xhline{1.2pt}
    \textbf{Model} & \multicolumn{1}{c|}{\textbf{Loss}} & \multicolumn{5}{c}{\textbf{Hyperparameters}} \bigstrut\\
    \Xhline{1pt}
    \multirow{6}[2]{*}{MF}
        & LambdaRank                & 0.01  & 0.000001 &       &       &  \bigstrut[t]\\
        & LambdaLoss                & 0.001 & 0.00001 &       &       &  \\
        & LambdaLoss-S              & 0.01  & 0.0001 &       &       &  \\
        & LambdaLoss@20             & 0.001 & 0.00001 &       &       &  \\
        & LambdaLoss@20-S           & 0.01  & 0.00001 &       &       &  \\
        & SL@20                     & 0.01  & 0     & 0.2   & 3     & 5 \bigstrut[b]\\
    \Xhline{1.2pt}
    \end{tabular}
    \label{tab:hyperparameters-movielens}
% \end{table}
% END: HYPERPARAMETERS MOVIELENS ------------------------------------
    \end{minipage}
    \hfill
    \begin{minipage}[t]{0.48\linewidth}
% TABLE: HYPERPARAMETERS FOOD ---------------------------------------
% \begin{table}[t]
    \centering
    \caption{Optimal hyperparameters on Food dataset.}
    \begin{tabular}{c|l|ccccc}
    \Xhline{1.2pt}
    \textbf{Model} & \multicolumn{1}{c|}{\textbf{Loss}} & \multicolumn{5}{c}{\textbf{Hyperparameters}} \bigstrut\\
    \Xhline{1pt}
    \multirow{6}[2]{*}{MF} 
        & LambdaRank                & 0.001 & 0.00001 &       &       &  \bigstrut[t]\\
        & LambdaLoss                & 0.01  & 0.00001 &       &       &  \\
        & LambdaLoss-S              & 0.001 & 0.0001 &       &       &  \\
        & LambdaLoss@20             & 0.001 & 0.00001 &       &       &  \\
        & LambdaLoss@20-S           & 0.01  & 0.000001 &       &       &  \\
        & SL@20                     & 0.01  & 0     & 0.2   & 2.25  & 5 \bigstrut[b]\\
    \Xhline{1.2pt}
    \end{tabular}
    \label{tab:hyperparameters-food}
% \end{table}
% END: HYPERPARAMETERS FOOD -----------------------------------------
    \end{minipage}
\end{table*}
% END: TABLES: HYPERPARAMETERS MOVIELENS AND FOOD -------------------

\noindentpar{LambdaLoss@$K$ \citep{jagerman2022optimizing}.}
LambdaLoss@$K$ is a DCG@$K$ surrogate loss that is formally a weighted BPR loss, similar to LambdaRank \citep{burges2006learning} and LambdaLoss \citep{wang2018lambdaloss}. Based on the LambdaLoss framework, \citet{jagerman2022optimizing} proposed LambdaLoss@$K$, which strictly serves as a surrogate for DCG@$K$. The score function $s_{ui}$ in LambdaLoss@$K$ is defined as the dot product between user and item embeddings. The hyperparameter search space for LambdaLoss@$K$ is as follows:
\begin{itemize}[topsep=3pt,leftmargin=10pt,itemsep=0pt]
    \item lr $\in \{10^{-1}, 10^{-2}, 10^{-3}, 10^{-4}\}$.
    \item wd $\in \{0, 10^{-4}, 10^{-5}, 10^{-6}\}$.
\end{itemize}

\noindentpar{SONG \citep{qiu2022large}.}
SONG is an NDCG surrogate loss based on the bilevel compositional optimization technique \citep{wang2017stochastic}. Specifically, SONG first smooths NDCG by surrogating the ranking positions with continuous functions, then calculates the bilevel compositional gradients of NDCG to optimize the model. \citet{qiu2022large} proved that SONG is a lower bound of NDCG, justifying its effectiveness on NDCG optimization. Moreover, \citet{qiu2022large} also establishes the convergence guarantee of SONG under certain assumptions. In practical implementations, SONG utilizes a moving average estimator \citep{kingma2014adam} with update rate $\gamma_{g}$ for ranking positions to stabilize the gradients. The score function $s_{ui}$ in SONG is defined as the cosine similarity between user and item embeddings. The hyperparameter search space for SONG is as follows:
\begin{itemize}[topsep=3pt,leftmargin=10pt,itemsep=0pt]
    \item lr $\in \{10^{-1}, 10^{-2}, 10^{-3}\}$.
    \item wd $\in \{0, 10^{-4}, 10^{-5}, 10^{-6}\}$.
    \item $\gamma_{g} \in \{0.1, 0.3, 0.5, 0.7, 0.9\}$.
\end{itemize}

Since SONG exhibits similar performance to SONG@$K$ \citep{qiu2022large} in our experiments, we only report the results of SONG@$K$.

\noindentpar{SONG@$K$ \citep{qiu2022large}.}
SONG@$K$ is a generalization of SONG \citep{qiu2022large} for NDCG@$K$ optimization. Similar to SONG, SONG@$K$ also utilizes the bilevel compositional optimization technique \citep{wang2017stochastic}. Specifically, it first smooths NDCG@$K$ by surrogating the ranking positions with a moving average estimator \citep{kingma2014adam} with update rate $\gamma_{g}$. To further smooth the Top-$K$ truncation, SONG@$K$ introduces a quantile-based weight similar to SL@$K$. To estimate the Top-$K$ quantile, SONG@$K$ employs a quantile regression loss \citep{koenker2005quantile,hao2007quantile} with learning rate $\eta_{\lambda}$ (\cf \cref{subapp:quantile_regression}). The score function $s_{ui}$ in SONG@$K$ is defined as the cosine similarity between user and item embeddings. The hyperparameter search space for SONG@$K$ is as follows:
\begin{itemize}[topsep=3pt,leftmargin=10pt,itemsep=0pt]
    \item lr $\in \{10^{-1}, 10^{-2}, 10^{-3}\}$.
    \item wd $\in \{0, 10^{-4}, 10^{-5}, 10^{-6}\}$.
    \item $\gamma_{g} \in \{0.1, 0.3, 0.5, 0.7, 0.9\}$.
    \item $\eta_{\lambda} \in \{10^{-1}, 10^{-2}, 10^{-3}, 10^{-4}\}$.
\end{itemize}

\noindentpar{LLPAUC \citep{shi2024lower}.}
LLPAUC is a surrogate loss designed for the lower-left part of AUC. It has been shown to serve as a surrogate loss for Top-$K$ metrics such as Recall@$K$ and Precision@$K$ \citep{shi2024lower,fayyaz2020recommendation}. The score function $s_{ui}$ in LLPAUC is defined as the cosine similarity between user and item embeddings. Following \citet{shi2024lower}'s original settings, the hyperparameter search space for LLPAUC is as follows:
\begin{itemize}[topsep=3pt,leftmargin=10pt,itemsep=0pt]
    \item lr $\in \{10^{-1}, 10^{-2}, 10^{-3}\}$.
    \item wd $\in \{0, 10^{-4}, 10^{-5}, 10^{-6}\}$.
    \item $\alpha \in \{0.1, 0.3, 0.5, 0.7, 0.9\}$.
    \item $\beta \in \{0.01, 0.1\}$.
\end{itemize}

\noindentpar{Softmax Loss (SL) \citep{wu2024effectiveness}.}
SL is a SOTA recommendation loss derived from the listwise Maximum Likelihood Estimation (MLE). Beyond explaining the effectiveness of SL from the perspectives of MLE or contrastive learning, it has been demonstrated that SL serves as a DCG surrogate loss. Specifically, SL is an upper bound of $-\log$ DCG \citep{bruch2019analysis,yang2024psl}, ensuring that optimizing SL is consistent with optimizing DCG. In practice, SL introduces a temperature hyperparameter $\tau$ to control the smoothness of the softmax operator. The score function $s_{ui}$ in SL is defined as the cosine similarity between user and item embeddings. Following the settings in \citet{wu2024effectiveness} and \citet{yang2024psl}, the hyperparameter search space for SL is as follows:
\begin{itemize}[topsep=3pt,leftmargin=10pt,itemsep=0pt]
    \item lr $\in \{10^{-1}, 10^{-2}, 10^{-3}\}$.
    \item wd $\in \{0, 10^{-4}, 10^{-5}, 10^{-6}\}$.
    \item $\tau \in \{0.01, 0.05, 0.1, 0.2, 0.5\}$.
\end{itemize}

\noindentpar{AdvInfoNCE \citep{zhang2024empowering}.}
AdvInfoNCE is a Distributionally Robust Optimization (DRO) \citep{shapiro2017distributionally}-based modification of SL. It introduces adaptive negative hardness into the pairwise score difference $d_{uij}$ in SL. Though this modification may lead to robustness enhancement, it also enlarges the gap between loss and DCG optimization target, which may lead to suboptimal performance \citep{yang2024psl}. In practical implementation, following the original settings in \citet{zhang2024empowering}, the negative weight is fixed at 64, the adversarial learning is performed every 5 epochs, and the adversarial learning rate is set to $5 \times 10^{-5}$. The score function $s_{ui}$ in AdvInfoNCE is defined as the cosine similarity between user and item embeddings. The search space of other hyperparameters for AdvInfoNCE is as follows:
\begin{itemize}[topsep=3pt,leftmargin=10pt,itemsep=0pt]
    \item lr $\in \{10^{-1}, 10^{-2}, 10^{-3}\}$.
    \item wd $\in \{0, 10^{-4}, 10^{-5}, 10^{-6}\}$.
    \item $\tau \in \{0.01, 0.05, 0.1, 0.2, 0.5\}$.
\end{itemize}

\noindentpar{BSL \citep{wu2023bsl}.}
BSL is a DRO-based modification of SL that applies additional DRO to positive instances. It introduces two temperature hyperparameters, $\tau_1$ and $\tau_2$. When $\tau_1 = \tau_2$, BSL is equivalent to SL. The score function $s_{ui}$ in BSL is defined as the cosine similarity between user and item embeddings. Following the settings in \citet{wu2023bsl} and \citet{yang2024psl}, the hyperparameter search space for BSL is as follows:
\begin{itemize}[topsep=3pt,leftmargin=10pt,itemsep=0pt]
    \item lr $\in \{10^{-1}, 10^{-2}, 10^{-3}\}$.
    \item wd $\in \{0, 10^{-4}, 10^{-5}, 10^{-6}\}$.
    \item $\tau_1, \tau_2 \in \{0.01, 0.05, 0.1, 0.2, 0.5\}$.
\end{itemize}

\noindentpar{PSL \citep{yang2024psl}.}
PSL is an NDCG surrogate loss that generalizes SL by substituting the exponential function with a more appropriate activation function. \citet{yang2024psl} proved that PSL establishes a tighter upper bound of $-\log$ DCG than SL, thereby leading to SOTA recommendation performance. Additionally, PSL not only retains the advantages of SL in terms of DRO robustness, but also enhances the noise resistance against false negatives. PSL is also hyperparameter-efficient, requiring only a single temperature hyperparameter $\tau$ to control the smoothness of the gradients. The score function $s_{ui}$ in PSL is defined as \emph{half} the cosine similarity between user and item embeddings. The hyperparameter search space for PSL is as follows:
\begin{itemize}[topsep=3pt,leftmargin=10pt,itemsep=0pt]
    \item lr $\in \{10^{-1}, 10^{-2}, 10^{-3}\}$.
    \item wd $\in \{0, 10^{-4}, 10^{-5}, 10^{-6}\}$.
    \item $\tau \in \{0.005, 0.025, 0.05, 0.1, 0.25\}$.
\end{itemize}

\noindentpar{SL@$K$ (ours).}
SL@$K$ is a DCG@$K$ surrogate loss proposed in this study. Formally, SL@$K$ is a weighted SL with weight $w_{ui} = \sigma_w(s_{ui} - \beta_{u}^{K})$, where $\beta_{u}^{K}$ is the Top-$K$ quantile of user $u$'s preference scores over all items, and $\sigma_w$ is an activation function (\eg the sigmoid function). Intuitively, the weight $w_{ui}$ is designed to emphasize the importance of Top-$K$ items in the gradients, thereby enhancing Top-$K$ recommendation performance. Compared to the conventional SL, SL@$K$ introduces several hyperparameters, including the temperature hyperparameter $\tau_w$ for the quantile-based weight, the temperature hyperparameter $\tau_d$ for the SL loss term, and the quantile update interval $T_{\beta}$. In practice, $\tau_d$ can be set directly to the optimal temperature $\tau$ of SL. The score function $s_{ui}$ in SL@$K$ is defined as the cosine similarity between user and item embeddings. The hyperparameter search space for SL@$K$ is as follows:
\begin{itemize}[topsep=3pt,leftmargin=10pt,itemsep=0pt]
    \item lr $\in \{10^{-1}, 10^{-2}, 10^{-3}\}$.
    \item wd $\in \{0, 10^{-4}, 10^{-5}, 10^{-6}\}$.
    \item $\tau_w \in [0.5, 3.0]$, with a search step of 0.25.
    \item $\tau_d \in \{0.01, 0.05, 0.1, 0.2, 0.5\}$ (set directly to the optimal temperature hyperparameter $\tau$ in SL).
    \item $T_{\beta} \in \{5, 20\}$.
\end{itemize}

% END: APPENDIX: COMPARED METHODS AND HYPERPARAMETERS SETTINGS ------

% APPENDIX: OPTIMAL HYPERPARAMETERS ---------------------------------
\subsection{Optimal Hyperparameters} \label{subapp:optimal_hyperparameters}

We report the optimal hyperparameters of each method on each dataset and backbone in \cref{tab:hyperparameters-health,tab:hyperparameters-electronic,tab:hyperparameters-gowalla,tab:hyperparameters-book,tab:hyperparameters-movielens,tab:hyperparameters-food}. The hyperparameters are listed in the same order as in \cref{tab:hyperparameters-searching}.

% END: APPENDIX: OPTIMAL HYPERPARAMETERS ----------------------------

% APPENDIX: INFORMATION RETRIEVAL TASKS -----------------------------
\subsection{Information Retrieval Tasks} \label{subapp:experimental_setup_ir}

In addition to the recommendation tasks, we also evaluate SL@$K$ on three additional information retrieval tasks: learning to rank (LTR), sequential recommendation (SeqRec), and link prediction (LP). The experimental setup is as follows:

\begin{itemize}[topsep=3pt,leftmargin=10pt,itemsep=0pt]
    \item \textbf{Learning to rank (LTR)}: LTR aims to order a list of candidate items according to their relevance to a given query. Following \citet{pobrotyn2021neuralndcg}, we compare SL@$K$ with existing LTR losses on a Transformer-based backbone \citep{vaswani2017attention} and three datasets (WEB10K, WEB30K \citep{DBLP:journals/corr/QinL13}, and Istella \citep{dato2016fast}). The baselines include ListMLE \citep{xia2008listwise}, ListNet \cite{cao2007learning}, RankNet \citep{burges2005learning}, LambdaLoss@$K$ \citep{jagerman2022optimizing}, NeuralNDCG \citep{pobrotyn2021neuralndcg}, and SL \citep{wu2024effectiveness}.
    \item \textbf{Sequential recommendation (SeqRec)}: SeqRec focuses on next item prediction in a user's interaction sequence. Following prior work \citep{kang2018self}, we compare SL@$K$ with BCE \citep{kang2018self} and SL \citep{wu2024effectiveness} on Beauty and Games \citep{he2016ups,mcauley2015image} datasets based on SASRec \citep{kang2018self}.
    \item \textbf{Link prediction (LP)}: LP aims to predict links between two nodes in a graph. Following \citet{li2023evaluating}, we compare SL@$K$ with BCE \citep{he2017ncf} and SL \citep{wu2024effectiveness} on a GCN \citep{kipf2016semi} backbone and two datasets (Cora and Citeseer \citep{yang2016revisiting,sen2008collective}).
\end{itemize}

% END: APPENDIX: INFORMATION RETRIEVAL TASKS ------------------------

% END: APPENDIX: EXPERIMENTAL DETAILS -------------------------------
%%%%%%%%%%%%%%%%%%%%%%%%%%%%%%%%%%%%%%%%%%%%%%%%%%%%%%%%%%%%%%%%%%%%%

% TABLE: SL@K VERSUS LAMBDA LOSS ------------------------------------
\begin{table}[t]
    \centering
    \caption{Performance comparison between LambdaLoss@$K$ and SL@$K$ on MF backbone. "\textcolor{red}{Imp.}" denotes the improvement of SL@$K$ over LambdaLoss@$K$, while "\textcolor{mydarkgreen}{Degr.}" denotes the degradation of LambdaLoss@$K$ caused by the sample estimation (\ie LambdaLoss@$K$-S). "R@20" and "D@20" denote the Recall@20 and NDCG@20 metrics, respectively. "Time" denotes the average training time (s) per epoch.}
    \label{tab:slatk_versus_lambda}
    \small
    \setlength{\tabcolsep}{2.0pt}
    \begin{tabular}{l|ccc|ccc}
        \Xhline{1.2pt}
        \multicolumn{1}{c|}{\multirow{2}[4]{*}{\textbf{Loss}}} & \multicolumn{3}{c|}{\textbf{MovieLens}} & \multicolumn{3}{c}{\textbf{Food}} \bigstrut\\
        \cline{2-7}          & \textbf{R@20} & \textbf{D@20} & \textbf{Time/s} & \textbf{R@20} & \textbf{D@20} & \textbf{Time/s} \bigstrut\\
        \Xhline{1.0pt}
        LambdaLoss@20 & 0.3418  & 0.3466 & 26  & 0.0530  & 0.0382 & 494  \bigstrut[t]\\
        LambdaLoss@20-S & 0.1580  & 0.1603  & 6  & 0.0335  & 0.0238  & 36  \\
        \textbf{SL@20 (Ours)} & \textbf{0.3580} & \textbf{0.3677} & \textbf{2} & \textbf{0.0635} & \textbf{0.0465} & \textbf{8} \bigstrut[b]\\
        \hline
        \textcolor{mydarkgreen}{\textbf{Degr. \%}} & \textcolor{mydarkgreen}{\textbf{-53.77\%}} & \textcolor{mydarkgreen}{\textbf{-53.75\%}} & --  & \textcolor{mydarkgreen}{\textbf{-36.79\%}} & \textcolor{mydarkgreen}{\textbf{-37.70\%}} & -- \bigstrut[t]\\
        \textcolor{red}{\textbf{Imp. \%}} & \textcolor{red}{\textbf{+4.74\%}} & \textcolor{red}{\textbf{+6.09\%}} & -- & \textcolor{red}{\textbf{+19.81\%}} & \textcolor{red}{\textbf{+21.73\%}} & -- \bigstrut[b]\\
        \Xhline{1.2pt}
    \end{tabular}
\end{table}
% END TABLE: SL@K VERSUS LAMBDA LOSS --------------------------------

% RQ4: HYPERPARAMETER SENSITIVITY -----------------------------------
\begin{figure}[t]
    \centering
    \includegraphics[width=0.47\textwidth]{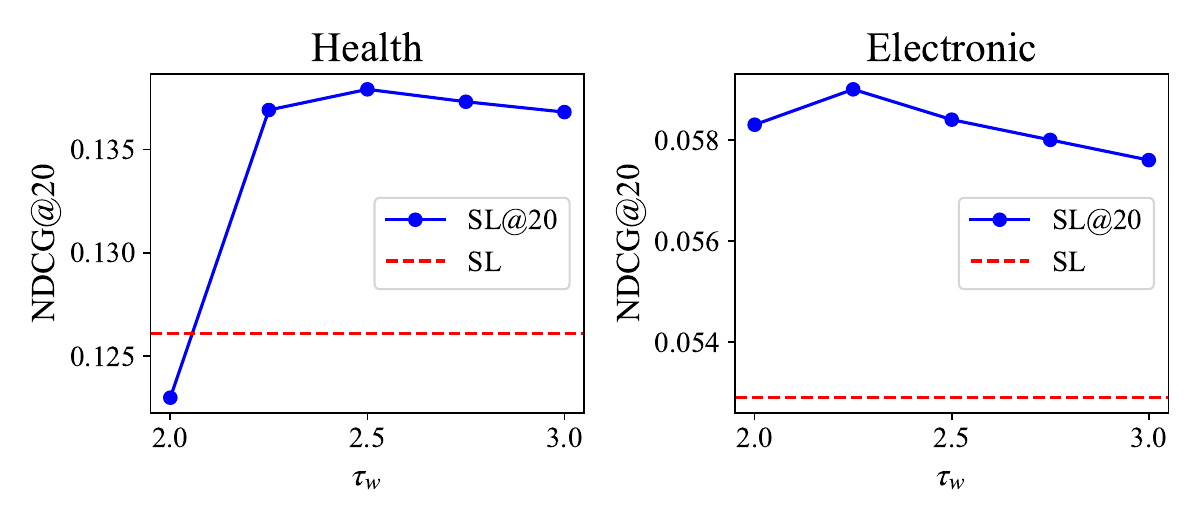}
    \caption{Sensitivity analysis of SL@$K$ on $\tau_w$.}
    \label{fig:sensitivity}
    \Description{Hyperparameter sensitivity of SL@$K$.}
\end{figure}
% END: RQ4: HYPERPARAMETER SENSITIVITY ------------------------------

%%%%%%%%%%%%%%%%%%%%%%%%%%%%%%%%%%%%%%%%%%%%%%%%%%%%%%%%%%%%%%%%%%%%%
% APPENDIX: SUPPLEMENTARY EXPERIMENTAL RESULTS ----------------------
\section{Supplementary Experimental Results} \label{app:supplementary_experimental_results}

% APPENDIX: SL@K V.S. LAMBDA LOSSES ---------------------------------
% \subsection{\texorpdfstring{SL@$K$}{SL@K} \versus Lambda Losses} \label{subapp:slatk_versus_lambdalossk}

\noindentpar{SL@$K$ \versus Lambda Losses.}
In \cref{tab:slatk_versus_lambda,tab:slatk_versus_lambda-supplementary}, we compare the performance of SL@$K$ against three Lambda losses, including LambdaRank \citep{burges2006learning}, LambdaLoss \citep{wang2018lambdaloss}, and LambdaLoss@$K$ \citep{jagerman2022optimizing}. Results show that SL@$K$ significantly outperforms Lambda losses in terms of accuracy and efficiency. As discussed in \cref{subsec:performance_comparison,app:gradient_vanishing}, Lambda losses suffer from unstable and ineffective optimization processes, leading to suboptimal performance. Additionally, they incur significantly higher computational costs compared to SL@$K$. While sampling strategies (\ie sample ranking estimation in \cref{subapp:sample_ranking_estimation}) could be employed to accelerate the Lambda losses, these approaches result in substantial performance degradation.

% END: APPENDIX: SL@K V.S. LAMBDA LOSSES ----------------------------

% RQ4: HYPERPARAMETER SENSITIVITY -----------------------------------
% \subsection{Hyperparameter Sensitivity} \label{subsec:hyperparameter_sensitivity}

\noindentpar{Hyperparameter sensitivity.}
\cref{fig:sensitivity} depicts the performance with varying hyperparameter $\tau_w$ in SL@$K$. Initially, performance improves as $\tau_w$ increases, but beyond a certain point, further increases lead to performance degradation. This indicates an inherent trade-off. When $\tau_w$ is small, the surrogate for NDCG@$K$ is tighter, potentially improving NDCG@$K$ alignment but increasing the training difficulty due to reduced Lipschitz smoothness. As $\tau_w$ increases, the approximation becomes looser, also impacting performance. 

% NOISE ROBUSTNESS --------------------------------------------------
\noindentpar{Noise robustness.}
\cref{fig:slatk_noise_2} illustrates the false positive robustness of SL@$K$ compared to SL, as a supplement to \cref{fig:slatk_noise}.

% TABLE: SL@K V.S. LAMBDA LOSSES ------------------------------------
\begin{table*}[htbp]
    \centering
    \caption{Supplementary results of \cref{tab:slatk_versus_lambda}: Performance comparison of SL@$K$ against three Lambda losses on MF backbone, including LambdaRank, LambdaLoss, and LambdaLoss@$K$. The best results are highlighted in bold, and the best baselines are underlined. "\textcolor{red}{Imp.}" denotes the improvement of SL@$K$ over the best Lambda loss, while "\textcolor{mydarkgreen}{Degr.}" denotes the degradation of Lambda losses caused by the sample ranking estimation (\ie LambdaLoss-S and LambdaLoss@$K$-S, \cf \cref{subapp:sample_ranking_estimation}).}
    \label{tab:slatk_versus_lambda-supplementary}
    \small
    \begin{tabularx}{0.65\textwidth}{l|YY|YY}
        \Xhline{1.2pt}
        \multicolumn{1}{c|}{\multirow{2}[4]{*}{\textbf{Loss}}} & \multicolumn{2}{c|}{\textbf{MovieLens}} & \multicolumn{2}{c}{\textbf{Food}} \bigstrut\\
        \cline{2-5}          & \textbf{Recall@20} & \textbf{NDCG@20} & \textbf{Recall@20} & \textbf{NDCG@20} \bigstrut\\
        \Xhline{1.0pt}
        LambdaRank & 0.3077  & 0.3043  & 0.0520  & 0.0377  \bigstrut\\
        \Xhline{1.0pt}
        LambdaLoss & \uline{0.3425}  & 0.3460  & 0.0515  & 0.0374  \bigstrut[t]\\
        LambdaLoss-S & 0.1497  & 0.1523  & 0.0333  & 0.0243  \bigstrut[b]\\
        \hline
        \textcolor{mydarkgreen}{\textbf{Degr. \%}} & \textcolor{mydarkgreen}{\textbf{-56.29\%}} & \textcolor{mydarkgreen}{\textbf{-55.98\%}} & \textcolor{mydarkgreen}{\textbf{-35.34\%}} & \textcolor{mydarkgreen}{\textbf{-35.03\%}} \bigstrut\\
        \Xhline{1.0pt}
        LambdaLoss@20 & 0.3418  & \uline{0.3466}  & \uline{0.0530}  & \uline{0.0382}  \bigstrut[t]\\
        LambdaLoss@20-S & 0.1580  & 0.1603  & 0.0335  & 0.0238  \bigstrut[b]\\
        \hline
        \textcolor{mydarkgreen}{\textbf{Degr. \%}} & \textcolor{mydarkgreen}{\textbf{-53.77\%}} & \textcolor{mydarkgreen}{\textbf{-53.75\%}} & \textcolor{mydarkgreen}{\textbf{-36.79\%}} & \textcolor{mydarkgreen}{\textbf{-37.70\%}} \bigstrut\\
        \Xhline{1.0pt}
        \textbf{SL@20} & \textbf{0.3580} & \textbf{0.3677} & \textbf{0.0635} & \textbf{0.0465} \bigstrut\\
        \hline
        \textcolor{red}{\textbf{Imp. \%}} & \textcolor{red}{\textbf{+4.53\%}} & \textcolor{red}{\textbf{+6.09\%}} & \textcolor{red}{\textbf{+19.81\%}} & \textcolor{red}{\textbf{+21.73\%}} \bigstrut\\
        \Xhline{1.2pt}
    \end{tabularx}
\end{table*}
% END: TABLE: SL@K V.S. LAMBDA LOSSES -------------------------------

% FIGURE: NOISE ROBUSTNESS ------------------------------------------
\begin{figure*}[t]
    \centering
    \includegraphics[width=\textwidth]{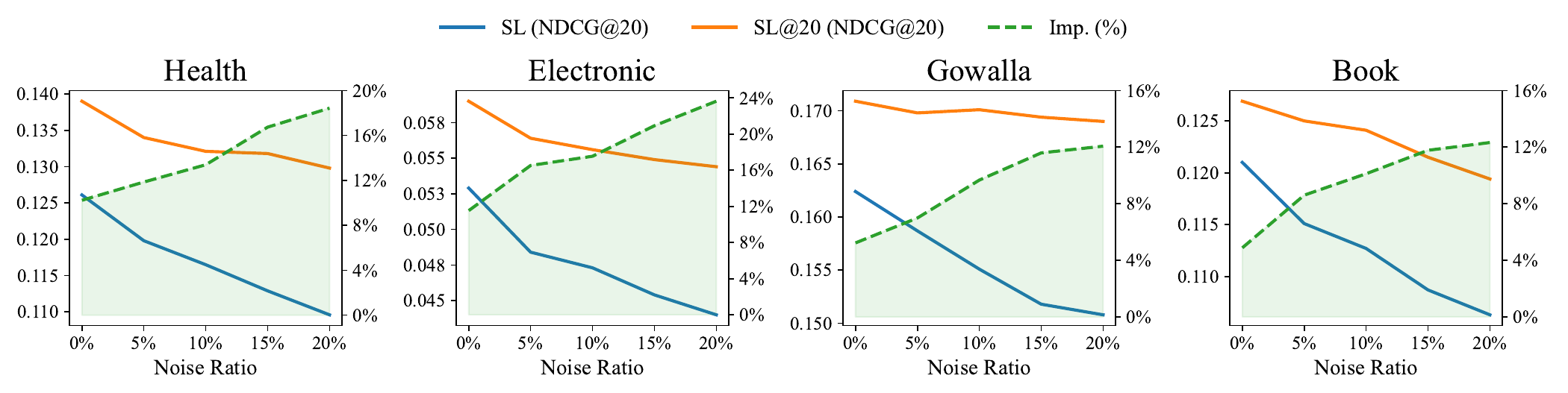}
    \caption{NDCG@20 performance of SL@$K$ compared with SL under varying ratios of imposed false positive instances. "Noise Ratio" denotes the ratio of false positive instances. "Imp." indicates the improvement of SL@$K$ over SL.}
    \label{fig:slatk_noise_2}
    \Description{Noise robustness of SL@$K$.}
\end{figure*}
% END FIGURE: NOISE ROBUSTNESS --------------------------------------

% END: APPENDIX: SUPPLEMENTARY EXPERIMENTAL RESULTS -----------------
%%%%%%%%%%%%%%%%%%%%%%%%%%%%%%%%%%%%%%%%%%%%%%%%%%%%%%%%%%%%%%%%%%%%%

% END: APPENDICES ---------------------------------------------------
%%%%%%%%%%%%%%%%%%%%%%%%%%%%%%%%%%%%%%%%%%%%%%%%%%%%%%%%%%%%%%%%%%%%%

\end{document}
% END: DOCUMENT -----------------------------------------------------
%%%%%%%%%%%%%%%%%%%%%%%%%%%%%%%%%%%%%%%%%%%%%%%%%%%%%%%%%%%%%%%%%%%%%

\endinput